\newtheorem{observation}{Observation}
\newcommand{\myparagraph}[1]{\medskip\noindent{\bf #1}}
\newcommand{\leaveout}[1]{}
\renewcommand{\medskip}{\smallskip}
\renewcommand{\int}{{\ensuremath{\rm int\,}}}
\newcommand{\IF}[1]{{\bf if} $\left(\mbox{#1}\right)\quad$}
\newcommand{\WHILE}[1]{{\bf while} $\left(\mbox{#1}\right)\quad$}
\newcommand{\RETURN}{{\bf return} }
\newcommand{\AND}{{\bf and} }
\newcommand{\LOOP}{{\bf loop}}
\newcommand{\ENDLOOP}{{\bf end loop}}
\date{}
\title{Optimum-width upward drawings of trees} 
\authorrunning{T. Biedl}
\author[1]{Therese~Biedl}
\affil[1]{
David R. Cheriton School of Computer Science, University of Waterloo, Canada.
\texttt{biedl@uwaterloo.ca}}
\subjclass{I.3.5 Computational Geometry and Object Modeling}
\keywords{tree drawing, upward, order-preserving, optimum width}
\begin{document}
\maketitle
\begin{abstract}
An {\em upward} drawing of a tree is a drawing 
such that no parents are below their children.  It is {\em order-preserving}
if the edges to children appear in prescribed order around each node.
Chan showed that any tree has an upward order-preserving drawing with width
$O(\log n)$.  In this paper, we present linear-time algorithms
that finds upward with {\em instance-optimal} 
width, i.e., the width is the minimum-possible for the input tree.    

We study two different models.  In the first model, the drawings need not
be order-preserving; a very simple algorithm then finds straight-line
drawings of optimal width.  In the second model, the drawings must be
order-preserving; and we give an algorithm that finds optimum-width
{\em poly-line drawings}, i.e., edges are allowed to 
have bends.
We also briefly study order-preserving upward {\em straight-line} drawings,
and show that some trees require larger width if drawings must 
be straight-line.
\end{abstract}

\iffalse % keep around, text-version of abstract though perhaps outdated
An upward drawing of a tree is a drawing such that no parents are below their children.  It is order-preserving if the edges to children appear in prescribed order around each node.  Chan showed that any tree has an upward order-preserving drawing with width O(log n).  In this paper, we 
present linear-time algorithms that finds upward with instance-optimal width, i.e., the width is the minimum-possible for the input tree.    

We study two different models.  In the first model, the drawings need not be order-preserving; a very simple algorithm then finds straight-line drawings of optimal width.  In the second model, the drawings must be order-preserving; and we give an algorithm that finds optimum-width poly-line drawings, i.e., edges are allowed to have bends.  We also briefly study order-preserving upward straight-line drawings, and show that some trees require larger width if drawings must be straight-line.
\fi

%%%%%%%%%%%%%%%%%%%%%%%%%%%%%%%%%%%%%%%%%%%%%%%%%%%%%%%%%%%%%%%%%%%%%%%%
\section{Introduction}

An {\em ideal drawing} of a tree \cite{Chan02} is one that is planar
(no edges cross), strictly-upward (the curves from parents to children
are strictly $y$-monotone), order-preserving (a given order of 
children is maintained in the drawing) and straight-line (edges are drawn 
as straight-line segments).  
For such drawings, the height must be at least the (graph-theoretic) height 
of the tree, and hence to achieve a small area  one focuses
on finding a small width.
Chan \cite{Chan02} gave algorithms that achieve 
ideal drawings of area $O(n4^{\sqrt{2\log n}})$ and width 
$O(2^{O(\sqrt{\log n)}})$.   He also briefly mentioned that
a variant of the algorithm achieves width $O(\log n)$, and 
one can additionally achieve height $O(n)$ by adding one bend per edge.%
\footnote{Di Battista and Frati \cite{DF14} asked later whether trees
have upward order-preserving poly-line drawings of area $O(n\log n)$; Chan's remark 
proves this.}
For binary trees, Garg and Rusu 
showed that $O(\log n)$ width and $O(n\log n)$ area can be achieved even
for straight-line drawings \cite{GR03}.
See the recent overview paper by Frati and Di Battista \cite{DF14} for
many other related results.

\myparagraph{Our results: }
This paper was motivated by the quest of finding ideal drawings for which
the width is {\em instance-optimal}, i.e., tree $T$ is 
drawn with the smallest width that is possible for $T$.     This problem
remains unsolved.  We here relax the restrictions in two ways.  In the
first relaxation, we drop ``order-preserving''.  Here a very simple 
modification of a known algorithm gives strictly-upward straight-line planar 
drawings of
instance-optimal width.  
(For the rest of this paper,
all drawings are required to be planar, and we will sometimes omit this 
quantifier.)

Secondly, for the main result of our paper,
we drop ``straight-line''
and study poly-line drawings, i.e., edges may have bends.
We give a linear-time algorithm to find order-preserving strictly-upward 
poly-line drawings of trees that have optimal width.
Our construction produces strictly-upward drawings, but the argument that
this is optimal works also for {\em upward} drawings (where edge-segments may be
horizontal).  In particular therefore, the optimum width is the same for
upward and strictly-upward order-preserving poly-line drawings.  As another side-effect,
we show that the root can always be required to be at the top left or the
top right corner without increasing width.
We also briefly discuss straight-line drawings, and show
that these sometimes require a larger width than poly-line drawings.

Phrasing our results in terms of $n$, we can show that the grid-size of
our drawings is never more than $\log(n+1)\times n$ for unordered drawings,
and not more than $(\log n + 1)\times (2n-1)$ for order-preserving poly-line drawings.
In particular this gives another independent proof that trees have
order-preserving poly-line drawings with area $O(n\log n)$.

\myparagraph{Related results: }
To our knowledge no previous paper addressed the issue of finding 
upward tree drawings with instance-optimal width.  
Alam et al.~\cite{ASR+10} showed how to find upward tree drawings with 
instance-optimal height, both in the order-preserving and the unordered model.
If we drop the ``upward'' restriction, then testing whether a planar graph
can be drawn such that one dimension (usually chosen to be the height)
is at most $k$ is fixed-parameter tractable in $k$ \cite{DFK+08}.
Algorithms to minimize this smaller dimension are known for trees
\cite{MAR11} and approximation algorithms for this smaller dimension are known
for trees \cite{Sud04}, outer-planar graphs \cite{Bie-WAOA12}, and
Halin-graphs \cite{Bie-Halin}.

\myparagraph{A few notations: }
Let $T$ be a tree with $n$ nodes rooted at node $u_r$. 
Let $c_1,\dots,c_d$ be the children of the root,
where $d=\deg(u_r)$ is the {\em degree} of $u_r$.
For any child $c_i$, let $T_{c_i}$
be the sub-tree rooted at child $c_i$.
If the tree is ordered, then
we assume that the children are enumerated from left to right, and 
we say that $c_i$ is ``left of $c_j$''
if $i\leq j$, and ``strictly left of $c_j$'' if $i<j$.
Similarly define ``right of'', ``strictly right of'', ``between''
and ``strictly between''.  

We aim to find a poly-line drawing of $T$, which means that
every edge is represented by a {\em poly-line}, i.e., a piecewise
linear curve.  In a {\em straight-line drawing}, edge curves have
no bends.
All drawings in this paper require that nodes and bends 
of poly-lines have an
integral $x$-coordinate.  The {\em width} of such a drawing is
the smallest $W$ such that (after possible translation) all
$x$-coordinates are between $1$ and $W$.  {\em Column $X$}
describes the vertical line with $x$-coordinate $X$.  In some
situations we analyze the height as well, and then require that
all nodes and bends have an integral $y$-coordinate and measure
the height by the number of rows intersected by the drawing.

%%%%%%%%%%%%%%%%%%%%%%%%%%%%%%%%%%%%%%%%%%%%%%%%%%%%%%%%%%%%%%%%%%%%%%%%
\section{Optimum-width unordered straight-line drawings}
\label{sec:draw}
\label{sec:unordered}

We first briefly consider unordered drawings, and show here that a
simple algorithm achieves optimum width.  The key idea is to express
this optimum width as a different graph-parameter that is easily 
computed.

\begin{definition}
\label{def:rpw}
The {\em rooted pathwidth of $T$} (denoted $rpw(T)$) is defined as follows:  
$$
rpw(T) = \left\{
\begin{minipage}{70mm}
$
\begin{array}{ll}
1 & \mbox{if $T$ is a single node} \\
\min_{c_h} \max_{c} \{ rpw(T_c) + \chi_{(c\neq c_h)}\} & \mbox{otherwise}
\end{array}
$
\end{minipage}
\right.
$$
Here the minimum is taken over all possible choices of one child $c_h$
of the root, the maximum is taken over all possible choices of children $c$
of the root, and $\chi$ denotes the {\em characteristic function},
i.e., $\chi_{(c\neq c_h)}$ is 1 if $c\neq c_h$ and 0 otherwise.  
A child $c_h$ where the minimum is achieved
is called the {\em rpw-heaviest} child (breaking ties arbitrarily).
\end{definition}

The rooted pathwidth can be computed 
in linear time using a bottom-up approach.
For some arguments it helps to know an equivalent definition of
rooted pathwidth.
A {\em root-to-leaf path} in $T$ is any path 
in $T$ that connects the root to
one of the {\em leaves}, i.e., one of the nodes that have no children.
We call $T$ a {\em rooted path} if $T$ is a path from the root to 
a (unique) leaf.  One can easily show the following (see the appendix for details):

\begin{observation}
\label{obs:rpw}
We have
\iffalse
$$
rpw(T) = \left\{
\begin{minipage}{70mm}
$
\begin{array}{ll}
1 & \mbox{if $T$ is a rooted path} \\
\min_{P} \max_{T' \subset T-P} \left\{ 1+ rpw(T')\right\} & \mbox{otherwise}
\end{array}
$
\end{minipage}
\right.
$$
\fi
$rpw(T)=1$ if $T$ is a rooted path, and $rpw(T)=\allowbreak
\min_{P}\allowbreak \max_{T' \subset T-P} \allowbreak \left\{ 1+ rpw(T')\right\}$ otherwise.
Here, the minimum is taken over all root-to-leaf paths $P$, and the maximum
is taken over all subtrees $T'$ of $T-P$.
%
%We call the path $P$ that achieves the minimum the {\em rpw-main path}.
\end{observation}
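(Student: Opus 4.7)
The plan is to prove this by induction on the number of nodes of $T$, letting $f(T)$ denote the right-hand side of the claimed alternative formula. The base case $|T|=1$ is immediate: a single node is a rooted path, so $f(T)=1=rpw(T)$. For the inductive step I will split into three cases according to the structure at the root.

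First, if $T$ is a rooted path with more than one node, both sides equal $1$: the new formula gives $1$ by definition, while the original definition, applied to the unique child subtree (itself a rooted path), inductively yields $1$ since the indicator $\chi$ vanishes when there is only one child. Second, if the root has a unique child $c$ but $T$ is not a rooted path (equivalently $T_c$ is not), every root-to-leaf path $P$ of $T$ begins $u_r, c,\ldots$, so $P\setminus\{u_r\}$ is a root-to-leaf path of $T_c$ and the subtrees of $T-P$ coincide with those of $T_c-(P\setminus\{u_r\})$. Hence $f(T)=f(T_c)$, which by induction equals $rpw(T_c)$; the original definition also collapses to $rpw(T_c)$ in this single-child case, so both sides agree.

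The substantial case is when the root has $d\ge 2$ children; let $r_1\ge r_2\ge\cdots\ge r_d$ denote their rooted pathwidths. Unrolling the original definition, the optimal $c_h$ is easily seen to be a child of maximum $rpw$ value, giving $rpw(T)=\max\{r_1,r_2+1\}$. For $f(T)$, I will parametrize a root-to-leaf path $P$ by its second node $v_1$: the subtrees of $T-P$ are precisely the $T_c$ for $c\ne v_1$ together with the subtrees of $T_{v_1}-(P\setminus\{u_r\})$. Since $\max_{c\ne v_1} r_c$ does not depend on the tail $P'=P\setminus\{u_r\}$, the inner minimization over $P'$ commutes with the outer max, and the inductive hypothesis applied to $T_{v_1}$ gives $\min_{P'}\max_{T''\subset T_{v_1}-P'} rpw(T'')=r_{v_1}-1$ (with the convention that the max over the empty collection is $0$, which covers the subcase that $T_{v_1}$ is itself a rooted path, where $r_{v_1}=1$). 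Thus $f(T)=1+\min_{v_1}\max\{\max_{c\ne v_1}r_c,\ r_{v_1}-1\}$, and a short check on the choice of $v_1$ shows the minimum is attained at $v_1=c_1$, giving $1+\max\{r_2,r_1-1\}=\max\{r_1,r_2+1\}=rpw(T)$.

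I expect the main obstacle to be the bookkeeping in the $d\ge 2$ case: cleanly justifying the exchange of $\min_{P'}$ with the outer $\max$ (which uses that the ``other-children'' term is independent of $P'$) and handling the degenerate sub-subcase where $T_{v_1}$ is a rooted path, so that the inductive hypothesis is invoked under its first clause rather than its second. Neither is deep, but both must be spelled out to make the induction clean.
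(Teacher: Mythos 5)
Your proof is correct, and it takes a somewhat different route from the paper's. The paper argues the two inequalities separately: for the ``$\geq$'' direction it exhibits one global path --- repeatedly descend to the rpw-heaviest child until a leaf is reached --- and observes that every subtree $T'$ of $T-P$ hangs off a path node $v$ as a non-heaviest child, so $1+rpw(T')\leq rpw(T_v)\leq rpw(T)$; for the ``$\leq$'' direction it takes the minimizing path, uses its first child as $c_h$ in Definition~\ref{def:rpw}, and invokes induction only on $T_{c_h}$. You instead evaluate both sides exactly at the root: you unroll Definition~\ref{def:rpw} to $rpw(T)=\max\{r_1,r_2+1\}$ when the root has $d\geq 2$ children, decompose each root-to-leaf path by its first edge, exchange the inner minimum over tails with the outer maximum (valid since the other-children term does not depend on the tail), and apply the inductive hypothesis to $T_{v_1}$, reducing the right-hand side to the same closed form. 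Both arguments rest on the same key fact --- the optimal path should enter a child of maximum rooted pathwidth --- but yours buys an explicit one-level recurrence, which is exactly what one would implement for the linear-time computation the paper alludes to, at the cost of extra bookkeeping: the min/max interchange, the separate single-child case, and the empty-collection convention when $T_{v_1}$ is a rooted path (which is indeed harmless, since with $d\geq 2$ the term $\max_{c\neq v_1}r_c\geq 1$ dominates $r_{v_1}-1=0$). The paper's argument is shorter and avoids these conventions because its ``$\geq$'' direction never evaluates the minimum exactly; it only exhibits one good path.
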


\myparagraph{Example: } Consider the tree in Fig.~\ref{fig:ex}(a).
The numbers denote the rooted pathwidth of the subtree, computed with 
the formula in Definition~\ref{def:rpw}.
If we remove the root-to-leaf path $P$, then all subtrees of $T-P$
are singletons or rooted paths, and hence have rooted pathwidth 1.
Therefore $rpw(T)\leq 2$ if we use the formula of Observation~\ref{obs:rpw}.

\begin{figure}[ht]
\centering
\begin{tabular}{ccccc}
\includegraphics[page=1,width=30mm]{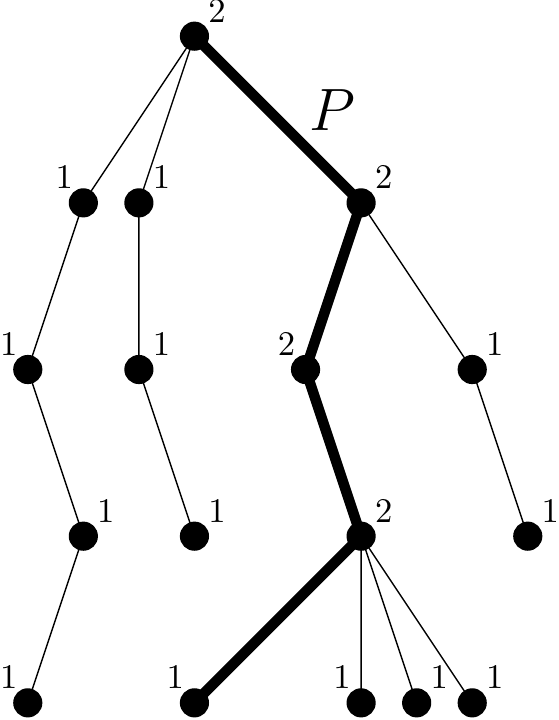} &
\hspace*{10mm} &
\includegraphics[width=15mm]{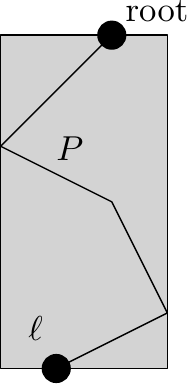} &
\hspace*{10mm} &
\includegraphics[width=20mm]{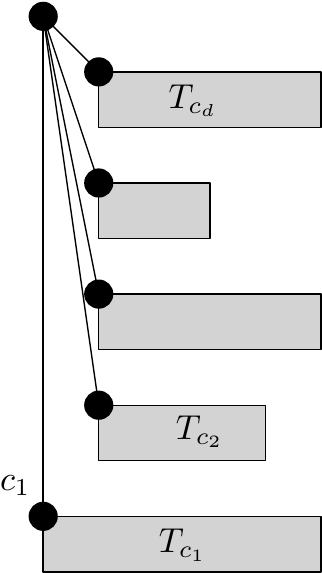} \\
(a) && (b) && (c) 
\end{tabular}
\caption{(a) Example.  (b) Lower bound.  (c) ``Standard'' construction.}
\label{fig:ex}
\label{fig:tree_lower}
\label{fig:tree_draw}
\end{figure}

\iffalse
The rooted pathwidth is also closely related to two other graph
parameters: the {\em pathwidth} and the {\em depth of a heaviest-path
decomposition}.  We state the results here; formal definition of these
concepts and proofs can be found in the appendix.

\begin{lemma}
\label{lem:pw_rpw}
For any rooted tree $T$, we have $pw(T)\leq rpw(T) \leq 2pw(T)+1$,
where $pw(T)$ denotes the pathwidth of the tree.
\end{lemma}

\begin{lemma}
\label{lem:rpw_heavy}
For any rooted tree $T$, we have $rpw(T) \leq hpd(T)$, where $hpd(T)$
denotes the depth of the heaviest-path decomposition of $T$.  
%
There exists an infinite number of binary trees $T$ with $rpw(T)=2$
and $hpd(T)\in \Omega(\log n)$.
\end{lemma}
\fi
The name ``rooted pathwidth'' was chosen because the rooted
pathwidth is closed related to the graph parameter ``pathwidth $pw(T)$'' 
of a tree (see e.g.~\cite{Sud04}).  One can easily show that
$pw(T)\leq rpw(T)\leq 2pw(T)+1$ for any rooted tree; 
see the appendix.
Now we show the relationship between rooted pathwidth and
width of drawings.
Note that the following lower bound even holds for the weaker models of upward
(vs. strictly-upward) and poly-line (vs. straight-line) drawing, while the
upper bound yields a construction in the strongest model.

\begin{lemma}
\label{lem:lower}
Let $\Gamma$ be any upward poly-line drawing of a rooted tree $T$.  Then the
width $W$ of $\Gamma$ is at least $rpw(T)$.
\end{lemma}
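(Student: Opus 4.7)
The plan is to prove $W \geq rpw(T)$ by induction on the number of nodes of $T$, using the recursive characterization of rooted pathwidth given by Observation~\ref{obs:rpw}. If $T$ is a rooted path then $rpw(T)=1$, and any drawing containing at least one node trivially has width at least $1$; this handles the base case (in particular the case of a single node).

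For the inductive step, I would fix an arbitrary root-to-leaf path $P$ and consider the poly-line $\gamma_P$ representing $P$ in $\Gamma$. Since $\Gamma$ is upward, $\gamma_P$ is $y$-monotone and runs from the root (which achieves the maximum $y$-coordinate among the nodes of $P$) down to some leaf of $T$. I would extend $\gamma_P$ by vertical rays, upward from the root and downward from that leaf, to obtain a simple $y$-monotone curve from $y=-\infty$ to $y=+\infty$; such a curve partitions the plane into an open left and an open right region.

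Next I would argue that every subtree $T'$ of $T-P$ is drawn entirely in one of these two regions. Its root $c$ is a child of some vertex $v$ on $P$, and the edge $vc$ leaves $v$ on a well-defined side of $\gamma_P$; by planarity no edge of $T'$ can cross $\gamma_P$ or its extensions, and connectedness of $T'$ then forces the whole subtree to stay on that side. Without loss of generality suppose $T'$ is on the left. For each node or bend $p$ of $T'$, the $x$-coordinate of $\gamma_P$ at height $y(p)$ is a real number at most $W$, and $x(p)$ is a strictly smaller integer, hence $x(p) \leq W-1$; since also $x(p)\geq 1$, the induced drawing of $T'$ (being a poly-line drawing of a strictly smaller tree) has width at most $W-1$. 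The inductive hypothesis then gives $rpw(T') \leq W-1$.

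Consequently, for \emph{every} subtree $T'$ of $T-P$ we have $1+rpw(T') \leq W$; taking the maximum over $T'$, and then the minimum over $P$ (which we are free to choose), and invoking Observation~\ref{obs:rpw} yields $rpw(T) \leq W$. The main obstacle, I expect, is the plane-topology step: justifying carefully that the extended curve is a bona fide Jordan-type separator and that each connected component $T'$ of $T-P$ really sits on a single side of it, despite the fact that $\gamma_P$ may only be non-strictly $y$-monotone (with horizontal segments) when $\Gamma$ is merely upward. Everything after that is bookkeeping on integer $x$-coordinates.
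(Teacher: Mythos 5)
There is a genuine gap, and it sits exactly where you suspected trouble, but it is more serious than a technicality about horizontal segments: your choice of an \emph{arbitrary} root-to-leaf path $P$ does not work. The step that fails is the claim that ``by planarity no edge of $T'$ can cross $\gamma_P$ or its extensions.'' Planarity only forbids crossings with $\gamma_P$ itself; the two vertical rays you append are not part of the drawing, and nothing prevents other edges of $\Gamma$ from crossing them. The upward ray from the root is harmless (in an upward drawing no point of $\Gamma$ lies above the root), but the downward ray from the chosen leaf can be crossed freely whenever that leaf is not the lowest point of the drawing. Concretely, take $T$ to be a root with two children, one a leaf $c_1$ and the other the root of a complete binary tree $T_{c_2}$ with $rpw(T_{c_2})=W$; then $rpw(T)=W$ and $T$ has a drawing of width $W$ in which $T_{c_2}$ occupies all $W$ columns below the leaf $c_1$. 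Choosing $P=(r,c_1)$, the subtree $T'=T_{c_2}$ of $T-P$ is drawn with width $W$, not $W-1$, and $1+rpw(T')=W+1>W$, so your assertion ``for every subtree $T'$ of $T-P$ we have $1+rpw(T')\leq W$'' is false for this $P$. The ``minimum over $P$'' in Observation~\ref{obs:rpw} only requires exhibiting \emph{one} good path, but your argument never identifies one.

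The repair is to choose $P$ as the path from the root to a leaf $\ell$ of \emph{minimum} $y$-coordinate (this is what the paper does). Since edges are $y$-monotone in an upward drawing, no point of $\Gamma$ lies below $\ell$ or above the root, so $\gamma_P$ spans the topmost and bottommost rows of $\Gamma$ and your extension rays lie entirely outside the drawing; only then is the extended curve a genuine separator, each component of $\Gamma-\gamma_P$ lies on one side, and your integer-coordinate bookkeeping (every node or bend of $T'$ has $x$-coordinate at most $W-1$, or at least $2$, on the respective side) correctly yields width at most $W-1$ for each induced subdrawing, after which induction and Observation~\ref{obs:rpw} finish the proof exactly as you outline.
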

\begin{proof}
Since $\Gamma$ is an upward drawing, the root of $T$ has the 
maximal $y$-coordinate.  Let $\ell$ be the leaf that has the minimal 
$y$-coordinate in $\Gamma$, breaking ties arbitrarily.    
Since $\Gamma$ is an upward drawing, no other node can have smaller
$y$-coordinate than $\ell$.  Let $P$ be the unique path from the
root to $\ell$ in $T$.

If $T=P$, then $T$ is a rooted path and so $rpw(T)=1\leq W$.  Else
consider any rooted subtree $T'$ of $T-P$.  
The drawing $\Gamma'$ of $T'$ induced by
$\Gamma$ must have width at most $W-1$, because
path $P$ connects the topmost with the bottommost row in $\Gamma$,
and hence any connected component of $\Gamma-P$ intersects at
most $W-1$ columns.
By induction, therefore $rpw(T')\leq W-1$ for all subtrees
$T'$ of $T-P$, and so $rpw(T)\leq W$.
\end{proof}

\begin{lemma}
Any rooted tree $T$ has a strictly-upward straight-line drawing of width 
at most $rpw(T)$.  Moreover, the root is drawn in the top-left corner.
\label{lem:upper}
\end{lemma}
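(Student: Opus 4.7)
The plan is to induct on $|T|$. The base case is a single node drawn at $(1,1)$, giving width $1 = rpw(T)$ with the root trivially at the top-left. For the inductive step, let $c_h$ be an rpw-heaviest child of the root $u_r$; by Definition~\ref{def:rpw} we have $rpw(T_{c_h}) \le rpw(T)$ and $rpw(T_{c_i}) \le rpw(T)-1$ for every other child $c_i$. By the induction hypothesis, each subtree $T_c$ admits a strictly-upward straight-line drawing $\Gamma_c$ of width at most $rpw(T_c)$ whose root sits at the top-left corner.

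To assemble the drawing of $T$, I would place $u_r$ at position $(1,H)$ (where $H$ is the total height) and directly beneath it stack the drawings $\Gamma_{c_i}$ of the non-heaviest children in an arbitrary order, each shifted one unit to the right so that its root $c_i$ lies in column $2$. Below all of these, I would place $\Gamma_{c_h}$ with $c_h$ in column $1$. The edge $u_r c_h$ is drawn as a straight vertical segment in column $1$, and every edge $u_r c_i$ with $i \neq h$ as a straight diagonal from $(1,H)$ to $(2,y_i)$, where $y_i$ is the row of $c_i$.

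The width bound is immediate: each non-heaviest $\Gamma_{c_i}$ occupies columns $2$ through at most $1 + rpw(T_{c_i}) \le rpw(T)$, while $\Gamma_{c_h}$ occupies columns $1$ through at most $rpw(T_{c_h}) \le rpw(T)$, and $u_r$ itself sits in column $1$; so the assembled width is at most $rpw(T)$. The root is at the top-left by construction, and every edge is strictly $y$-monotone downward.

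The main obstacle will be verifying planarity; the crux is that the diagonals $u_r c_i$ must not cross anything inside the non-heaviest subtrees that sit above them. All edges incident to $u_r$ share that endpoint and hence cannot cross one another. The vertical segment $u_r c_h$ lies in column $1$, which is empty strictly between rows $y_h$ and $H$ because every $\Gamma_{c_i}$ with $i \neq h$ is confined to columns $\ge 2$; and $\Gamma_{c_h}$ itself occupies only rows $\le y_h$, so $u_r c_h$ meets the subtree drawings only at its endpoints. For a diagonal $u_r c_i$, a short computation shows that its $x$-coordinate $1 + (H-y)/(H-y_i)$ is strictly less than $2$ at every $y > y_i$; since each other non-heaviest $\Gamma_{c_j}$ has bounding box in columns $\ge 2$ and, by the vertical stacking, row range strictly above $y_i$, the diagonal stays strictly to the left of that bounding box and cannot cross any of its edges. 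The subtree $\Gamma_{c_h}$ lies in rows strictly below $y_i$ and so is disjoint from $u_r c_i$. Recursive planarity inside each $\Gamma_c$ is provided by the induction hypothesis, completing the argument.
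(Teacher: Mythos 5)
Your proof is correct and follows essentially the same route as the paper: recursively draw each subtree, stack the non-rpw-heaviest children with their roots in column~2, put the rpw-heaviest child at the bottom flush with column~1, and connect the root from the top-left corner, giving width $\max\{rpw(T_{c_h}),\max_{i\neq h}rpw(T_{c_i})+1\}\leq rpw(T)$. You merely spell out the planarity check that the paper dismisses as obvious (with one harmless wording slip: the non-heaviest subtrees placed \emph{below} $c_i$ do not lie in rows above $y_i$, but they are disjoint from the diagonal $u_rc_i$ for the same reason as $\Gamma_{c_h}$, namely they occupy rows strictly below $y_i$).
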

\begin{proof}
Such a drawing can be found by modifying
the algorithm of Crescenzi et al.~\cite{CDP92}.  The claim is trivial
if $T$ is a single node.  So assume $T$ has 
children $c_1,\dots,c_d$ and for $i=1,\dots,d$ draw
$T_{c_i}$ recursively with width $rpw(T_{c_i})$. After possible reordering
of children we may  assume 
that $c_1$ is the rpw-heaviest child, which implies that
$rpw(T_{c_i})<rpw(T)$ for all $i>1$.
Place the drawings of $T_{c_d},T_{c_{d-1}},\dots,T_{c_2},T_{c_1}$, 
one above the other, such that the root of $T_{c_i}$ is in column 2 for 
$i=d,\dots,2$ and in column 1 for $i=1$.  See Fig.~\ref{fig:tree_draw}(c).
Clearly we can connect $v$ to all its
children without crossing and the width is
$\max\{ rpw(T_{c_1}), \max_{i>1} \{ rpw(T_{c_i})+1 \} \}$,
which is at most $rpw(T)$ by choice of $c_1$.
\end{proof}

Observe that the height of the drawing is $n$, since every row intersects
exactly one node.  The width is no more than $\log(n+1)$ (see the appendix).
Since the rooted pathwidth (and with it the rpw-heaviest child for each
node) can be found in linear time, we therefore have:

\begin{theorem}
There exists a linear-time algorithm to create for any rooted tree $T$ a
planar strictly-upward straight-line drawing of optimal width and height $n$.
\end{theorem}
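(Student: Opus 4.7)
The plan is to assemble the theorem directly from Lemma~\ref{lem:lower} and Lemma~\ref{lem:upper}, and then verify that each step can be carried out in linear time and that the resulting drawing has the claimed height.

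First, I would argue optimality of the width. By Lemma~\ref{lem:lower}, any upward poly-line drawing (and hence in particular any strictly-upward straight-line drawing) of $T$ has width at least $rpw(T)$. By Lemma~\ref{lem:upper}, a strictly-upward straight-line drawing of width at most $rpw(T)$ exists. Combining these two bounds, the optimum width of a planar strictly-upward straight-line drawing of $T$ is exactly $rpw(T)$, and the construction of Lemma~\ref{lem:upper} achieves it.

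Next I would address the running time. The rooted pathwidth can be computed bottom-up in linear time by a single post-order traversal of $T$: for each node, we aggregate the rpw-values of its children, pick the child achieving the minimum in Definition~\ref{def:rpw} (the rpw-heaviest child), and store this choice along with the resulting $rpw$-value. This traversal takes $O(\deg(v))$ work at each node $v$ and so $O(n)$ total, and it produces both $rpw(T)$ and, for every internal node, a recorded rpw-heaviest child. Given this labelling, the recursive construction of Lemma~\ref{lem:upper} can be executed in a second linear-time traversal: at each node we already know which child to place in column $1$, and the drawings of subtrees are translated and stacked directly, so each node is assigned its coordinates in $O(\deg(v))$ work.

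Finally I would verify the height bound. In the construction of Lemma~\ref{lem:upper}, the drawings of the subtrees $T_{c_1},\dots,T_{c_d}$ are stacked one above the other, and the root of $T$ is placed on a fresh row above all of them. Hence the number of rows used by $T$ equals $1 + \sum_i (\text{rows used by } T_{c_i})$, which by induction equals $n$; equivalently, every row contains the unique node whose $y$-coordinate falls in that row, so the height is exactly $n$. The only mildly delicate point is to keep the construction strictly upward while stacking subtrees, but this is immediate since each subtree's drawing occupies a disjoint horizontal strip and the root is placed strictly above all of them, so every parent-child segment is strictly $y$-monotone. Combining the three observations yields the theorem.
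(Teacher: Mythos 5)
Your proposal is correct and follows essentially the same route as the paper: combine Lemma~\ref{lem:lower} and Lemma~\ref{lem:upper} for width-optimality, compute the rooted pathwidth and rpw-heaviest children bottom-up in linear time, and note that the stacked construction uses exactly one node per row, giving height $n$.
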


%In particular note that while the lower bound holds for upward drawings,
%the upper bound holds for {\em strictly} upward drawings.  So for the
%purpose of minimizing the width allowing horizontal edges is not useful.

%%%%%%%%%%%%%%%%%%%%%%%%%%%%%%%%%%%%%%%%%%%%%%%%%%%%%%%%%%%%%%%%%%%%%%%%
\section{The rank-function}

Now we turn towards {\em order-preserving} drawings of tree, so assume
from now on that for every node the children have a fixed order.  We
will find poly-line drawings that have the minimum-possible width.
The key idea is again to express the optimum width of 
a drawing of tree $T$ via a recursive function that depends solely on the
structure of the tree.    However, this function (which we call the
{\em rank}) is significantly more complicated than the rooted
pathwidth.

\begin{definition}
Let $T$ be a tree  and let $c_1,\dots,c_d$ be the children of the
root from left to right.  Define the {\em rank} $R(T)$ to be $1$ if $T$ is a single-node tree, and
to be the smallest value $W$ such that there exists a rank-$W$-witness
for $T$ otherwise.  Here, for a given integer $W\geq 1$, a 
{\em rank-$W$-witness} for $T$ consists of the following:
\begin{itemize}
\item a {\em classification} of each child as either {\em big} or {\em small},
\item a {\em coordinate} $X$,  i.e., an integer with $1\leq X\leq W$, and
\item an {\em index of the vertical child}, i.e., an index $v\in \{1,\dots,d\}$ such  that $c_v$ is a big child.
\end{itemize}
Such a rank-$W$-witness must satisfy the following {\em rank-conditions}:
\begin{description}
\item[(R1$\ell$)] At most $X-1$ big children are strictly left
	of $c_v$.
\item[(R1r)] At most $W-X$ big children are strictly right of $c_v$.
\item[(R2$\ell$)]  Any small child $c_i$
	with $i<v$ satisfies $R(T_{c_i})\leq X-1-\ell_i$, where
	$\ell_i$ is the number of big children to the left of $c_i$.
\item[(R2r)]  Any small child $c_i$
	with $i>v$ satisfies $R(T_{c_i})\leq W-X-r_i$, where
	$r_i$ is the number of big children to the right of $c_i$.
\item[(R3)]  The ranks of the big children are dominated by a
	permutation of $\{1,\dots,W\}$.  In other words,
	one can assign a {\em rank-bound} $\pi(c_i)\in \{1,\dots,W\}$ to each
	big child $c_i$
	such that $R(T_{c_i})\leq \pi(c_i)$ and $\pi(c_i)\neq \pi(c_j)$ for
	$c_i\neq c_j$.
\end{description}
\end{definition}

Fig.~\ref{fig:witness}(left) illustrates this concept.
For ease of wording, we often say 
``the rank of $c_i$'' in place of ``the rank of the tree rooted at $c_i$''.
To explain the naming for rank-$W$-witnesses: we will later see
that there exists a drawing that has width $W$, value $X$ 
is the $x$-coordinate of the root, the big children are
those children where the drawing of the subtree intersects column $X$,
and the vertical child is the child for which the edge leaves the root
vertically.  The following easy result will be needed later:

\begin{observation}
\label{obs:W}
If a tree has rank $W\geq 2$, then all children of the root have rank at most $W$,
and at most one child has rank exactly $W$.
\end{observation}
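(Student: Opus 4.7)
The plan is to unpack the definition of rank directly. Since $R(T)=W\geq 2$, the tree $T$ is not a single node, so by definition there exists a rank-$W$-witness for $T$: a big/small classification of the children, a coordinate $X\in\{1,\dots,W\}$, and a vertical child index $v$ with $c_v$ big. I would fix such a witness and then read off both claims from the rank-conditions by a short case analysis on each child $c_i$.

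For the first claim, I would split on the type and position of $c_i$. If $c_i$ is big, condition (R3) supplies a rank-bound $\pi(c_i)\in\{1,\dots,W\}$ with $R(T_{c_i})\leq \pi(c_i)\leq W$. If $c_i$ is small and $i<v$, condition (R2$\ell$) gives $R(T_{c_i})\leq X-1-\ell_i\leq X-1\leq W-1$. If $c_i$ is small and $i>v$, condition (R2r) gives $R(T_{c_i})\leq W-X-r_i\leq W-X\leq W-1$, using $X\geq 1$. The case $i=v$ with $c_i$ small cannot occur, since the witness stipulates that $c_v$ is big.

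For the second claim, I would observe from the above that every small child has rank at most $W-1$, so only big children are candidates for rank $W$. For a big child $c_i$ with $R(T_{c_i})=W$, the inequality $R(T_{c_i})\leq \pi(c_i)\leq W$ forces $\pi(c_i)=W$. Since (R3) requires the rank-bounds of big children to be pairwise distinct elements of $\{1,\dots,W\}$, at most one big child can receive $\pi(c_i)=W$, and hence at most one child of the root has rank exactly $W$.

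The argument is essentially bookkeeping and I do not expect any genuine obstacle. The only points to keep in mind are that the bounds $X\geq 1$ and $X\leq W$ are exactly what rule out rank $W$ for small children on each side, and that the distinctness clause in (R3) is what delivers the uniqueness in the second claim.
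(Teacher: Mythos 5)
Your proof is correct and follows essentially the same route as the paper: fix a rank-$W$-witness, use (R3) to bound the big children (and their distinct rank-bounds to get uniqueness of a rank-$W$ child), and use (R2$\ell$), (R2r) together with $1\leq X\leq W$ to bound every small child by $W-1$. The only difference is that you spell out the bookkeeping (the case $i=v$, the inequalities $X-1\leq W-1$ and $W-X\leq W-1$) a bit more explicitly than the paper does.
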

\begin{proof}
Fix an arbitrary rank-$W$-witness.  By (R3) there are rank-bounds, which means
that all big children have rank at most $W$ and
at most big one child has rank equal to $W$.
By (R2$\ell$) and (R2r), any small child has rank at most $\max\{X-1,W-X\}$,
and by $1\leq X\leq W$ this is at most $W-1$.
\end{proof}

We also use a special type of witness, which we will later see to correspond
to a rank-$W$-witness with $X=1$ and $v=1$.

\begin{definition}
Let $T$ be a tree with $n\geq 2$ nodes and let $c_1,\dots,c_d$ be
the children of the root from left to right.
For $W\geq 2$, a {\em left-corner-$W$-witness} of $T$ consists
of a number $1\leq W'\leq W+1$ and
a sequence $\sigma(W') <\dots < \sigma(W)$
such that: 
\begin{description}
\item[(C1)] $T_{c_{\sigma(w)}}$ has rank $w$ for all $w\in \{W',\dots,W\}$
\item[(C2)] For any $i$ with $\sigma(w)<i<\sigma(w+1)$, $T_{c_i}$
	has rank at most $w-1$.  Here $w\in \{W'-1,\dots,W\}$, and we define
	$\sigma(W'-1):=0$ and $\sigma(W+1):=d+1$.
\end{description}

Symmetrically, a {\em right-corner-$W$-witness} consists of a number $1\leq W'\leq W+1$ 
and a sequence $\sigma(W) > \dots, > \sigma(W')$
such that for all $w\in \{W',\dots,W\}$ child $c_{\sigma(w)}$
has rank $w$, and the children strictly between $c_{\sigma(w+1)}$ and
$c_{\sigma(w)}$ have rank at most $w-1$.  A {\em corner-$W$-witness}
is a left-corner-$W$-witness or a right-corner-$W$-witness.
\end{definition}

\begin{figure}[ht]
\hspace*{\fill}
\includegraphics[height=40mm,page=1]{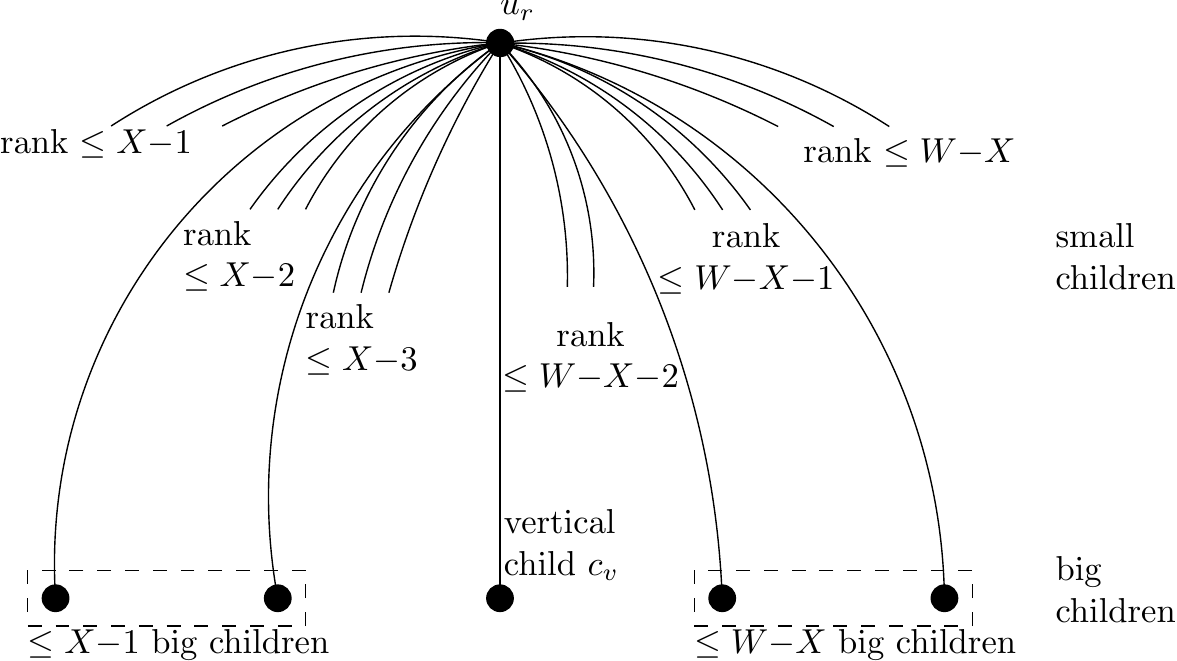} 
\hspace*{\fill}
\includegraphics[height=40mm,page=2,trim = 30 0 30 0, clip]{witnessIllustration.pdf}
\hspace*{\fill}
\caption{Illustration for (left) a 
rank-$W$-witness and (right) a left-corner-$W$-witness.}
\label{fig:witness}
\end{figure}

Notice that the definition of left-corner-$W$-witness specifically allows $W'=W+1$;
in this case no $\sigma(\cdot)$ needs to be given, (C1) is vacuously true, and (C2)
holds if and only if all children have rank at most $W-1$.  In particular this
shows:

\begin{observation}
\label{obs:W+1}
Let $T$ be a tree with $n\geq 2$ nodes, and assume all children have rank
at most $W-1$.  Then $T$ has a left-corner-$W$-witness.
\end{observation}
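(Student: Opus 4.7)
The plan is to use the degenerate choice $W' = W+1$ permitted in the definition of left-corner-$W$-witness, which is exactly the scenario anticipated by the parenthetical remark immediately preceding the observation. The entire proof reduces to verifying that conditions (C1) and (C2) hold trivially in this case.

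Concretely, I would set $W' := W+1$ and provide no sequence $\sigma(\cdot)$, since the range $\{W', \ldots, W\}$ is empty. Then (C1) is vacuously satisfied: there is no index $w$ for which a child $c_{\sigma(w)}$ must be exhibited with rank $w$. For (C2), I would unfold the definition using the boundary conventions $\sigma(W'-1) := 0$ and $\sigma(W+1) := d+1$. With $W' = W+1$, the quantifier ``$w \in \{W'-1, \ldots, W\}$'' collapses to $w = W$, and the inequality ``$\sigma(W) < i < \sigma(W+1)$'' becomes ``$0 < i < d+1$'', i.e., the condition ranges over every child of the root. Thus (C2) reduces exactly to the assumption that every child has rank at most $W-1$, which is the hypothesis.

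There is no substantive obstacle here; the observation is a bookkeeping statement whose content lies entirely in the fact that the definition of left-corner-$W$-witness was set up to allow $W' = W+1$. Its purpose is to let later proofs invoke the existence of a corner-$W$-witness uniformly, without having to carve out a separate case when no child attains rank $W$. Accordingly, the written-out proof should be only a couple of lines, simply instantiating the definition with the degenerate $W'$ and pointing to the two vacuous/immediate verifications above.
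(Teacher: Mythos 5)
Your proof is correct and follows exactly the route the paper takes: the paper's own justification is the remark immediately preceding the observation, which instantiates the degenerate choice $W'=W+1$ so that (C1) is vacuous and (C2), via the conventions $\sigma(W'-1):=0$ and $\sigma(W+1):=d+1$, reduces to the hypothesis that every child has rank at most $W-1$. Nothing is missing; your unfolding of the boundary conventions matches the intended reading.
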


\myparagraph{Outline: } 
We briefly outline our approach to finding optimum-width 
poly-line drawings.
First, we show in Section~\ref{sec:corner2draw} that from a 
left-corner-$W$-witness, we can easily construct a drawing
of width $W$.  A symmetric construction converts
a right-corner-$W$-witness into a drawing of width $W$.  
Next, we show in Section~\ref{sec:draw2rank}
that from any (planar, upward, order-preserving) drawing of width $W$
we can extract a rank-$W$-witness.  Finally, to close the
cycle, we show in Section~\ref{sec:rank2corner} that any
rank-$W$-witness implies the existence of a corner-$W$-witness.
Hence the rank of a tree
equals the minimum width of an upward order-preserving drawing.
The proof in Section~\ref{sec:rank2corner} is constructive
and in particular allows to test in linear time whether 
a corner-$W$-witness exists.
Since the construction in Section~\ref{sec:corner2draw}
also takes linear time, this shows the following:

\begin{theorem}
\label{thm:main}
For any tree $T$, we can find  in linear time a planar strictly-upward 
order-preserving poly-line drawing that has optimum width. 

Moreover, the root is placed at the top-left or top-right corner,
and we can either choose to have linear height and at most 3 bends per edge,
or to have at most 1 bend per edge.
\end{theorem}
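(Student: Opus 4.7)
The plan is to show that the minimum width of an order-preserving upward planar drawing equals the rank $R(T)$, and that both the rank and a drawing achieving it can be computed in linear time. Following the outline, I would establish a three-way cycle of implications between drawings, rank-witnesses, and corner-witnesses.

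First, for the lower bound, I would argue that any upward order-preserving planar drawing of $T$ of width $W$ yields a rank-$W$-witness at the root. Take such a drawing; set $X$ to the root's $x$-coordinate, let $c_v$ be the child whose edge leaves the root vertically (choosing any tie-breaker), and classify a child $c_i$ as \emph{big} iff the subtree $T_{c_i}$ occupies column $X$ in the drawing. Conditions (R1$\ell$) and (R1r) follow because the subtrees of distinct big children on the same side of $c_v$ cross column $X$ at distinct rows, and the available columns left/right of column $X$ bound how many such subtrees can exist. Conditions (R2$\ell$) and (R2r) follow because a small child's subtree lies strictly on one side of column $X$ within the horizontal strip left available by the big subtrees cutting it off, so induction yields the required rank bound. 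Condition (R3) follows from induction applied to the big children, ordered by the row at which they cross column $X$.

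Second, I would prove that any rank-$W$-witness gives rise to a corner-$W$-witness; I expect this to be the main obstacle. The idea is to produce, from the big children's rank-bounds and the small children's rank bounds coming from (R2), a monotone sequence $\sigma(W')<\dots<\sigma(W)$ (or its right-corner mirror) selecting indices whose subtree-ranks strictly increase, with every child strictly between consecutive selected indices having rank below the next level. The construction would proceed by induction on $W$: on the side of $c_v$ with fewer big children (say the left, by symmetry), we merge the big children there with $c_v$ using (R1$\ell$) and (R3) to arrange bounded ranks in increasing order, and fold in the small children using (R2$\ell$), which guarantees they fit below the ``ceiling'' imposed by the big children to their left. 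The right side, which carries no distinguished sequence in a left-corner-witness, reduces to Observation~\ref{obs:W+1} once its children are shown to have rank at most $W-1$.

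Third, I would show that from a left-corner-$W$-witness of $T$ one can recursively build an order-preserving strictly-upward planar drawing of width $W$ with the root at the top-left corner (and symmetrically on the right). The construction stacks the distinguished subtrees $T_{c_{\sigma(w)}}$, each recursively drawn with its root in its top-left corner, in columns $1$ through $W$, from top to bottom in order of increasing $w$; intermediate small children, of rank at most $w-1$, are inserted between consecutive distinguished subtrees using the inductive construction applied to them. Because every subtree's root sits at its leftmost column, the edges from the root of $T$ can be routed down column $1$ (or with one bend into the appropriate column) without crossings; this gives the one-bend variant. The three-bend/linear-height variant is obtained by vertical compression, reusing rows freed by the small children at the cost of extra bends on the long vertical edges.

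Finally, for Theorem~\ref{thm:main}, I would verify linear running time. The rank and an optimal rank-$W$-witness at every node can be computed bottom-up, since $R(T)\leq \log(n+1)+1$ keeps the witness data per node small. The constructive proof in Section~\ref{sec:rank2corner} converts witnesses to corner-witnesses in linear total time, and the drawing construction in Section~\ref{sec:corner2draw} is linear in output size. Combining the lower bound (Step~1) with the upper bound (Steps~2--3) shows the produced width equals $R(T)$ and is therefore instance-optimal; the root-at-corner property and the bend/height trade-off fall out of the two variants of the corner-witness-to-drawing construction.
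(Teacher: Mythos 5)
Your overall architecture is the same as the paper's: a cycle of implications drawing $\Rightarrow$ rank-$W$-witness $\Rightarrow$ corner-$W$-witness $\Rightarrow$ drawing, combined with a bottom-up linear-time computation of ranks and corner-witnesses. The genuine gap is in your Step~2, which you yourself identify as the crux. Choosing the corner ``on the side of $c_v$ with fewer big children'' and ``merging'' those big children with $c_v$ is not a mechanism that can work: which corner is available is governed not by counts of big children around $c_v$, but by the relative position of the (at most one, by Observation~\ref{obs:W}) child of rank $W$ and the (at most one) big child of rank $W-1$ --- for instance, children with rank sequence $1,1,2$ admit only a right-corner-$2$-witness while $2,1,1$ admits only a left-corner one, and in the corresponding rank-$2$-witnesses both sides of $c_v$ carry the same number of big children. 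Moreover, your sketch gives no argument that when the intended selection of $\sigma(W')<\dots<\sigma(W)$ gets stuck, no rank-$W$-witness existed in the first place; that is the actual content of the lemma. The paper needs two nontrivial moves that your plan does not contain: first normalize the witness to $X\in\{1,W\}$ (hence $v\in\{1,d\}$) by a case analysis on where the rank-$(W-1)$ child sits relative to the rank-$W$ child (Lemma~\ref{lem:X1orW}), and then show that failure of the greedy scan forces $W-W'+2$ children to be big with ranks in $\{W',\dots,W\}$, contradicting (R3) (Lemma~\ref{lem:testFailure}).

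Two further steps would fail as written. In Step~3, stacking the distinguished subtrees flush left and routing edges ``down column 1 (or with one bend into the appropriate column)'' does not work: $\Gamma_{c_{\sigma(w)}}$ sits below blocks that themselves occupy column~1 (and up to $w-1$ columns), so a route in column~1 crosses them, and a single bend cannot keep the edge right of all intervening blocks and still reach column~1 within width $W$. The paper's construction reserves a vertical channel in column $w$ for the edge to $c_{\sigma(w)}$, set up near the root together with an order-fixing bend in column~2 (hence up to three bends), and it places \emph{all} small children above \emph{all} big children precisely so that they avoid these channels; interleaving small children between consecutive distinguished subtrees recreates the routing problem. The one-bend variant is not obtained directly but by subdividing and invoking the known conversion of $x$-monotone poly-line drawings into straight-line drawings of the same width. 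Finally, in Step~1 your justification of (R1) and (R2) (``distinct rows in column $X$'') is not a proof: open poly-lines do not separate the plane, and the statement that each bypassing connection consumes a private column is exactly the lemma of Section~\ref{sec:bypass}, proved via an outerplanarity/$K_4$-minor argument; you also need to handle the case that no edge leaves the root vertically, and to define ``big'' using the subtree drawing together with its edge to the root so that $c_v$ is guaranteed to be big.
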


We find it especially interesting that 
we can always assume the root to be at a corner without increasing width.
Many previous tree-drawing algorithms 
(e.g.~\cite{Chan02,CDP92,GR03}) created drawings with the root
at a corner, but
proving, without going through rank-witnesses, that the
root can be moved to a corner without increasing width seems daunting.
Indeed, as we show in 
Section~\ref{sec:straight}, this claim is not true for straight-line
drawings.

%%%%%%%%%%%%%%%%%%%%%%%%%%%%%%%%%%%%%%%%%%%%%%%%%%%%%%%%%%%%%%%%%%%%%%%%
\section{From rank-witness to drawing}
\label{sec:corner2draw}
%%%%%%%%%%%%%%%%%%%%%%%%%%%%%%%%%%%%%%%%%%%%%%%%%%%%%%%%%%%%%%%%%%%%%%%%

%The goal is now to show that	
%a tree $T$ has a planar upward order-preserving poly-line drawing of width $W$ 
%if and only if it has rank $W$.
To create drawings using rank-witnesses, we need a
result whose lengthy proof is deferred to 
Section~\ref{sec:rank2corner}:

\begin{lemma}
\label{lem:rank2corner}
Any $T$ with rank $W$ has a corner-$W$-witness.
\end{lemma}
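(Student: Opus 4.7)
The strategy is a case analysis followed by a greedy, right-to-left construction that exploits the bookkeeping of the given rank-$W$-witness.

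First I dispose of the easy case using Observation~\ref{obs:W+1}: if no child of the root has rank $W$, then by Observation~\ref{obs:W} every child has rank at most $W-1$, and a left-corner-$W$-witness with $W' = W+1$ exists immediately. So assume the (unique, by Observation~\ref{obs:W}) child $c^*$ of rank $W$ exists. A short argument shows that $c^*$ cannot be small — any small child has rank at most $\max\{X-1,W-X\}\le W-1$ by (R2$\ell$)/(R2r), contradicting rank $W$ — and hence is big; combined with (R3) and the bound range $\{1,\dots,W\}$, this forces $\pi(c^*) = W$.

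By the left/right symmetry between the two corner-witness definitions, I may assume without loss of generality that $c^*$ lies at a position $\ge v$, and I aim for a left-corner-$W$-witness (the symmetric case yields a right-corner-$W$-witness by mirroring the construction). I set $\sigma(W)$ to be the index of $c^*$ and construct the remaining $\sigma$'s by a right-to-left greedy walk: for decreasing $w$, I take $\sigma(w)$ to be the rightmost index $i < \sigma(w+1)$ with $R(T_{c_i}) = w$ and no child of rank $\ge w$ strictly between $i$ and $\sigma(w+1)$. The walk halts at the first $w$ where no valid choice exists, at which point I declare $W' := w+1$ and check the left-boundary condition (all children left of $\sigma(W')$ have rank at most $W'-2$).

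The main obstacle is proving that this walk always succeeds, i.e., verifying (C2) at every step and the left-boundary condition at termination. The threat is a child of rank in $\{w,\dots,W-1\}$ sitting strictly between two consecutive greedy choices. To preclude this I will chain together the rank-$W$-witness constraints: (R1$\ell$) caps the number of big children strictly left of $c_v$ at $X-1$; (R3) forces those bounds to form a distinct subset of $\{1,\dots,W-1\}$ (since $c^*$ consumes bound $W$); and (R2$\ell$)/(R2r) tie every small child's rank to the position-dependent thresholds $X-1-\ell_i$ or $W-X-r_i$, which tighten precisely as we cross big children. The argument then has a Hall-matching flavour: the candidate for $\sigma(w)$ at each level is either the big child holding bound $w$ or a small child whose position-threshold equals $w$, and a careful accounting of remaining bound-slots and thresholds shows that no high-rank child can linger in a "between" window. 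The fiddliest point is the transition across $c_v$, where the relevant small-child threshold switches from $W-X-r_i$ to $X-1-\ell_i$. When the left-walk nevertheless stalls, I fall back on the mirror argument — the same bookkeeping, applied to the right of $c^*$, shows that the right-corner walk succeeds instead — so the dichotomy ``the left-corner walk or its mirror succeeds'' follows from the symmetric structure of (R1$\ell$)/(R1r) and (R3), completing the proof.
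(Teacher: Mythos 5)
Your reduction of the problem to ``the greedy left walk or its mirror succeeds'' is the right target (it is item (4) of Lemma~\ref{lem:equivalent}, and your walk is essentially the paper's {\sc TestLeft}), but the proposal never actually proves it: the ``careful accounting of remaining bound-slots and thresholds'' with a ``Hall-matching flavour'' is exactly the hard content of the lemma and is left unexecuted. Concretely, when your left walk stalls you must rule out \emph{every} rank-$W$-witness, including those with $1<X<W$ and $1<v<d$. For such a witness the obstructing children need not all be big (the thresholds $X-1-\ell_i$ and $W-X-r_i$ can be generous on the side away from the obstruction), big children on both sides of $c_v$ draw rank-bounds from the same pool $\{1,\dots,W\}$, and nothing in your sketch shows how the two stalled walks jointly exhaust that pool. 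The paper does not attempt this direct accounting. It first proves a normalization step (Lemma~\ref{lem:X1orW}): when $1<X<W$ every small child has rank at most $\max\{X-1,W-X\}\le W-2$, so the children of rank $W$ and $W-1$ are big and by (R3) there is at most one of each; one then builds a \emph{new} witness with $X=1,v=1$ or with $X=W,v=d$, the choice depending on whether the rank-$(W-1)$ child lies right or left of the rank-$W$ child. Only after $v\in\{1,d\}$ is forced does the clean counting work (Lemma~\ref{lem:testFailure}): a failure of {\sc TestLeft} yields $W-W'+2$ children that must all be big in any witness with $v=1$, all of rank at least $W'$, so (R3) cannot assign them distinct rank-bounds from $\{W',\dots,W\}$. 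Without a substitute for Lemma~\ref{lem:X1orW}, your dichotomy does not follow from the per-walk analyses.

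In addition, your opening step ``WLOG $c^*$ lies at a position $\ge v$, and I aim for a left-corner-$W$-witness'' is not a genuine reduction and tracks the wrong quantity. If the root's children have ranks $1,1,2$ from left to right, every rank-$2$-witness has $X=2$ and $v=3$, so $c^*=c_v$; yet this tree has only a right-corner-$2$-witness and no left-corner-$2$-witness (this is precisely the configuration exploited in Section~\ref{sec:straight}). Which corner works is governed by the position of the unique rank-$(W-1)$ child relative to the rank-$W$ child, as in Lemma~\ref{lem:X1orW}, not by the position of $c^*$ relative to $c_v$; so your fallback ``when the left walk stalls, the mirror succeeds'' is the statement still to be proven, not a consequence of symmetry.
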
 

\begin{lemma}
\label{lem:corner2draw}
Any $n$-node tree $T$
has a planar strictly-upward order-preserving poly-line drawing 
of width $R(T)$ where the root is at the top left or top right corner.

Moreover, we can create such a drawing with at most 1 bend per edge.
Alternatively, we can create such a drawing with at most 3 bends
per edge and height at most $2n-1$.
\end{lemma}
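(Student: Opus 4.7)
The plan is to induct on $|T|$, with the base case $|T|=1$ being trivial (a single vertex placed at $(1,1)$ sits at both corners). For the inductive step, invoke Lemma~\ref{lem:rank2corner} to obtain a corner-$W$-witness for $T$ with $W=R(T)$. By horizontal reflection it suffices to treat the left-corner case and output a drawing with root at the top-left; the right-corner case is symmetric. Call the children $c_{\sigma(W')},\ldots,c_{\sigma(W)}$ identified by the witness the \emph{big} children (with $c_{\sigma(w)}$ of rank $w$) and the remaining children \emph{small}. By the inductive hypothesis, each subtree $T_{c_i}$ has a drawing $\Gamma_i$ of width $R(T_{c_i})$ whose root can be placed at either top corner as I choose.

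The construction assembles the $\Gamma_i$ into vertically disjoint horizontal strips beneath the root at $(1,H)$, stacked top-to-bottom in the order dictated by the child indices together with the order-preserving convention around a top-left root. The alignment rule, driven by the witness, is: each big child $c_{\sigma(w)}$ occupies a strip of width exactly $w$ aligned to columns $1,\ldots,w$ with $c_{\sigma(w)}$ placed at a corner chosen to let its edge from the real root descend cleanly; each small child of rank $r$ lying between $c_{\sigma(w)}$ and $c_{\sigma(w+1)}$ (so $r\leq w-1$ by condition (C2) of the witness) occupies a strip of width $r$ shifted one column to the right, with its root at the top-right of its sub-drawing. The shift is what makes the construction work: it keeps column $1$ free throughout every small child's strip, providing a vertical corridor through which edges from the real root can reach distant big children.

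Each edge from the root is then routed as an at-most-one-bend poly-line: a vertical (or nearly vertical) drop in column $1$ from $(1,H)$ to the target child's row, followed by a horizontal turn to the child. The width stays exactly $W$ because the widest strip is $c_{\sigma(W)}$'s of width $W$ and all others fit within that budget by conditions (C1) and (C2); planarity and strict upwardness follow from the row-disjointness of the strips and from the column-$1$ corridor being free above each anchor; and order-preservation follows from matching the top-to-bottom stacking to the child indices.

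The alternative variant with up to 3 bends per edge and height $\leq 2n-1$ is obtained by compressing the vertical stack: a small child's narrow strip can share rows with an adjacent big-child strip whenever their column usages are complementary, so that (amortized) each node contributes only a bounded number of rows. Edges rerouted around these merged regions pick up at most two extra bends. The main obstacle in either variant is verifying order-preservation: because the root's outgoing edges all leave roughly southward through column~$1$, their angular exits must be distinguished by small horizontal offsets at their first bends, and one must argue that the witness's structure permits assigning these offsets so that the cyclic order around the top-left root comes out as $c_1,\ldots,c_d$; ensuring this consistently across all of big, small, and boundary children is the delicate combinatorial part of the proof.
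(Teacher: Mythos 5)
There is a genuine gap in your routing scheme. You stack the sub-drawings in child-index order and route every edge from the root through a single ``corridor'' in column~1, which you keep free only in the \emph{small} children's strips (by shifting them right). But the big children's strips are flush left, occupying columns $1,\dots,w$; in particular $c_{\sigma(W')}$ has the smallest index among big children, so in index order almost everything lies \emph{below} its strip, and $c_{\sigma(W)}$'s strip uses all $W$ columns while small children with indices larger than $\sigma(W)$ may still need to be placed below it. Edges to any child below such a strip cannot descend through column~1 (or any column) without crossing that strip, so the construction as described is not planar. This is precisely what the paper's proof is organized to avoid: it runs two phases, first placing all small children (processed $c_d,c_{d-1},\dots$) within columns $2,\dots,w-1$ and dropping a bend in column~2 for every edge, reserving a separate downward ray in column $w$ for each big child $c_{\sigma(w)}$ (and the column-1 ray for $c_1$); only then does it place the big children's drawings at the very bottom, ordered by \emph{increasing rank} $W',\dots,W$ rather than by index, so that the width-$w$ drawing of $c_{\sigma(w)}$ does not block the rays in columns $w+1,\dots,W$ leading to the larger big children further down. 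The order around the root, which you explicitly leave open as ``the delicate combinatorial part,'' is exactly what this bend placement in column~2 (at strictly decreasing heights as the index decreases) settles; deferring it means the key verification of order-preservation is missing from your argument.

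Two further points. First, your inductive hypothesis is stronger than the lemma: the lemma guarantees the subtree's root at the top-left \emph{or} top-right corner, not at a corner of your choosing (indeed the paper's Section on straight-line drawings exhibits subtrees admitting only one of the two corner witnesses); the paper needs no such choice, since an extra bend in the row just above the child's root absorbs whichever corner occurs. Second, the bend/height bookkeeping is reversed and incomplete: the paper's primary construction gives at most 3 bends per edge and height $2n-1$ directly (every row contains the root, a first bend, or a row of some $\Gamma_{c_i}$), and the 1-bend variant is then obtained by subdividing each edge at its reserved column and invoking the known fact that $x$-monotone poly-line drawings can be straightened without increasing width (at the cost of height). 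Your direct ``vertical drop then horizontal turn'' routing has horizontal final segments, so it is not strictly upward as stated, and the ``compression/amortization'' sketch for height $2n-1$ with 3 bends does not establish that bound.
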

\begin{proof}
We proceed by induction on the (graph-theoretic) height of $T$. 
The claim clearly holds if
$T$ is a single node since $R(T)=1$ and $T$ can be drawn with width 1
and height $1=2n-1$.    For the step, let
$c_1,\dots,c_d$ be the children of the root $u_r$ 
from left to right.  Recursively find a drawing $\Gamma_{c_i}$ of $T_{c_i}$ 
with width $R(T_{c_i})$.

Since $R(T)=W$, it has a corner-$W$-witness by
Lemma~\ref{lem:rank2corner}.
We assume that this is a left-corner-$W$-witness;
the construction is symmetric (and yields a drawing with the root
at the top right corner) if there is a right-corner-$W$-witness.  
So we have a sequence
$\sigma(W')<\dots<\sigma(W)$
(for some $1\leq W'\leq W+1$)
such that (C1) and (C2) hold.  Declare a child to be {\em big}
if its index is $\sigma(w)$ for some $W'\leq w\leq W$ and {\em small}
otherwise.  

Place the root at the top left corner.  
We place the children in two steps: first place the
small children (and start poly-lines for the edges to big children), and
then place the big children.  See the figure below for an example.

\medskip{\em Phase (1): } 
We parse the children in order $c_d,c_{d-1},\dots$.
Presume that $c_d,\dots,c_{j+1}$ have already been handled for some
$2\leq j\leq d$, and $Y$ is the 
lowest $y$-coordinate that has been used for them.   
Place a bend for
$(u_r,c_j)$ in column $2$ with $y$-coordinate $Y-1$.%
\footnotemark
\footnotetext{This bend can often be omitted, e.g.~if $c_j$ is small and at 
the top left
corner of $\Gamma_{c_j}$, but we show them in the figure for consistency.} 
All edges
$(u_r,c_k)$ with $k>j$ received bends in column $2$ at
larger $y$-coordinate, so this respects the order of edges around $u_r$.

Assume first that $c_j$ is a small child, say $\sigma(w-1)<j<\sigma(w)$
for some $W'\leq  w \leq W+1$.  Place $\Gamma_{c_j}$ in rows
$Y{-}2$ and below, and within columns $2,\dots,w-1$.  This fits since
by (C2) the rank of $c_j$ is at most $w-2$, and so $\Gamma_{c_j}$
occupies at most $w-2$ columns.
We can connect $c_j$ to the bend for edge $(u_r,c_j)$ with a straight-line
segment since $c_j$ is in the top row of $\Gamma_{c_j}$, and hence one row
below the bend.

\noindent
\begin{minipage}{0.29\linewidth}
\includegraphics[width=1.0\linewidth, page=1]{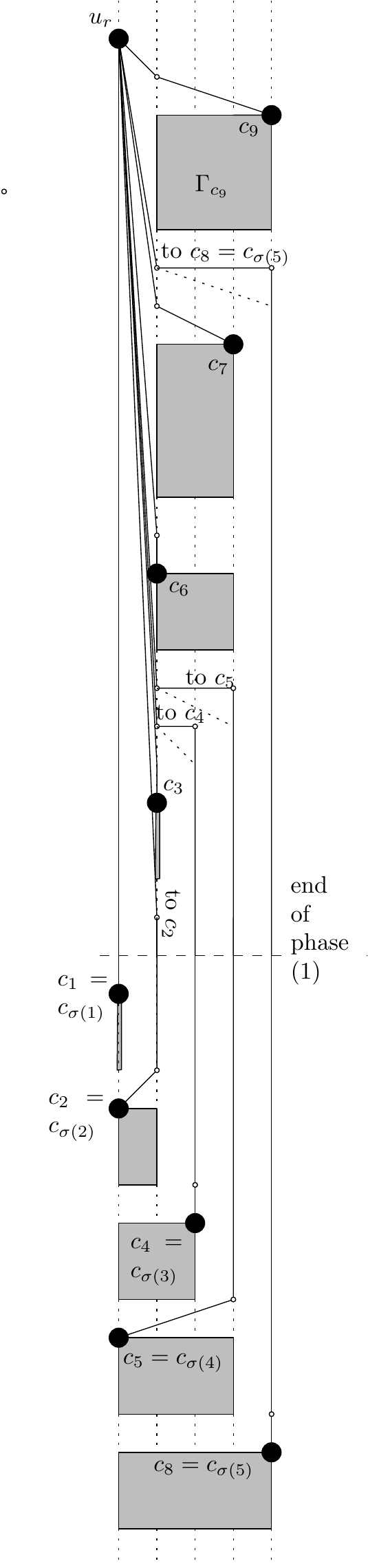}
\end{minipage}
\begin{minipage}{0.7\linewidth}
Now assume that $c_j$ is a big child, say $j=\sigma(w)$ for some
$W'\leq w\leq W$.  Place another bend for edge $(u_r,c_j)$ at point 
$(w,Y-1)$ and connect it horizontally to the bend at $(2,Y-1)$.  
Reserve the downward ray from this bend in column $w$ for this edge;
by construction no small child placed later will intersect
this ray.

This continues until we are left with $c_1$.  
Assign the downward ray in column 1 from the root to $c_1$,
and if $c_1$ is small, then place $\Gamma_{c_1}$ in
columns $1,\dots,W'-1$.

We have created some horizontal edges, and so the drawing,
while upward, is not strictly-upward.  We can make it strictly-upward
by re-locating the second bend for each edge to a big child to one 
row below, i.e., within the ray reserved for that edge.

\medskip{\em Phase (2): }
At this point all drawings of small children are placed, and the edge
to each big child $c_{\sigma(w)}$ is routed up to a 
vertical downward ray in column $w$. 
Place $\Gamma_{c_{\sigma(W')}},\Gamma_{c_{\sigma(W'+1)}},\dots,
\Gamma_{c_{\sigma(W)}}$, in this order from top to bottom,
below the drawing and flush left with column 1.
For $w\in \{W',\dots,W-1\}$,
since $c_{\sigma(w)}$ has rank $w$, its drawing has width
$w$ and will
not intersect the rays to $c_{\sigma(w+1)},\dots,c_{\sigma(W)}$.
By inserting a bend (if needed) in the row just above $c_{\sigma(w)}$,
we can complete the drawing of $(u_r,c_{\sigma(w)})$.

\medskip{\em Height-bound: }
Observe that every row of the drawing
contains the root, or
intersects some drawing $\Gamma_{c_i}$, or contains the first bend
of the edge $(u_r,c_i)$ for some child $c_i$.  Hence the total height
is at most
$1+\sum_{i=1}^d \mbox{(height of $\Gamma_{c_i}$)} + d$,
which by induction is at most
$1 + \sum_{i=1}^d (2n(T_{c_i}) - 1) + d
= 2n-1$. 

\medskip{\em Reducing bends: }  Every edge from $u_r$ to
a small child is drawn with one bend.  For a big child $c_{\sigma(w)}$,
the edge from $u_r$ may have up to three bends.  However, its poly-line
consists of at most two $x$-monotone parts:  from $u_r$ to column $w$,
and from column $w$ to $c_{\sigma(w)}$.  After subdividing at a point
in column $w$, we hence obtain a tree drawing where all edges are
$x$-monotone.  It is known \cite{EFL96,PT04} that such a drawing can be
turned into a straight-line drawing without increasing the width. 
Neither of these references discusses whether strictly upward drawings
remain strictly upward, but it is not hard to see that this can be
done, essentially by ``moving subtrees down'' sufficiently far.
We hence obtain a drawing with one bend per edge, at the cost
of increasing the height.
\end{minipage}
\end{proof}

%%%%%%%%%%%%%%%%%%%%%%%%%%%%%%%%%%%%%%%%%%%%%%%%%%%%%%%%%%%%%%%%%%%%%%%%
\section{From drawing to rank-witness}
\label{sec:draw2rank}
%%%%%%%%%%%%%%%%%%%%%%%%%%%%%%%%%%%%%%%%%%%%%%%%%%%%%%%%%%%%%%%%%%%%%%%%

\begin{lemma}
\label{lem:draw2rank}
If $T$ has an upward order-preserving poly-line drawing $\Gamma$ of width $W$, 
then $R(T)\leq W$.
Moreover, if $T$ is not a single node, 
then $T$ has a rank-$W$-witness for which coordinate $X$
equals the $x$-coordinate of the root.
\end{lemma}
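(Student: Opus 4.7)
The plan is to proceed by induction on $|T|$. The base case $|T|=1$ is immediate since $R(T)=1\le W$. For the step, let $X$ be the $x$-coordinate of the root in $\Gamma$, and for each child $c_i$ let $\Gamma_i$ be the restriction of $\Gamma$ to $T_{c_i}$, an upward order-preserving poly-line drawing of $T_{c_i}$ with width $W_i\le W$; the inductive hypothesis yields $R(T_{c_i})\le W_i$. My goal is to assemble a rank-$W$-witness for $T$ whose coordinate is exactly this $X$.

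I would classify each child by calling $c_i$ \emph{big} when $\Gamma_i$ intersects column $X$, and \emph{small} otherwise, so that every small child's drawing lies entirely in columns $\le X-1$ (a ``left-small'' child) or entirely in columns $\ge X+1$ (a ``right-small'' child). The first structural fact I would establish, via upward planarity and order-preservation, is a row-separation lemma: at any horizontal row, the subtree drawings meeting that row appear in left-to-right order consistent with the tree order of their roots. A direct consequence is that all left-small children precede all right-small children in tree order, with the big children interspersed between them. I then choose the vertical child $v$ as follows: if some child's node lies exactly on column $X$, take $v$ to be that (necessarily unique and big) child; otherwise, take $v$ to be the big child whose drawing is first encountered as one travels downward from the root along column $X$. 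In the degenerate case where no child is big, I would manually reclassify one small child at the boundary between the left-small and right-small blocks as big and use it as $v$.

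With $X$, the big/small classification, and $v$ fixed, conditions (R1$\ell$), (R1r), (R2$\ell$), and (R2r) follow from careful row-by-row accounting using row-separation. For (R1$\ell$), every big child strictly left of $c_v$ is left-pointing (its node lies in a column $<X$), and by row-separation each such child can be injected to a distinct column in $\{1,\ldots,X-1\}$, yielding the bound $X-1$; (R1r) is symmetric. For (R2$\ell$), a small $c_i$ with $i<v$ is forcibly left-small, and at each row used by $\Gamma_i$ the $\ell_i$ big children to the left of $c_i$ occupy $\ell_i$ distinct columns of $\{1,\ldots,X-1\}$ that $\Gamma_i$ cannot use, so $W_i\le X-1-\ell_i$ and hence $R(T_{c_i})\le X-1-\ell_i$ by the inductive hypothesis; (R2r) is symmetric.

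The main obstacle is (R3), which demands an injection $\pi$ from the big children into $\{1,\ldots,W\}$ with $\pi(c_i)\ge R(T_{c_i})$. By Hall's theorem this reduces to showing that, for every $r\ge 1$, at most $W-r+1$ big children have rank $\ge r$. The geometric heart of the argument is a pigeonhole statement: if $k$ pairwise disjoint upward subtree drawings, each of width at least $r$, all meet a common column and hang below a common root in a planar order-preserving way, then the combined drawing together with the root-to-child edges must occupy at least $r+k-1$ columns, because the edge from the root to each successive big subtree is forced to detour around the previously placed ones and each detour claims at least one fresh column. Making this pigeonhole precise is what I expect to be the hardest part. Once it is in hand, (R3) follows, the witness is complete, and both $R(T)\le W$ and the ``moreover'' clause (with $X$ equal to the root's column) hold by construction.
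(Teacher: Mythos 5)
Your high-level skeleton matches the paper's (induct, classify children by whether they meet the root's column $X$, pick a vertical child, verify (R1)--(R3) by geometric blocking), but the proposal has a genuine gap at its technical core. The statement you defer as ``the hardest part'' --- that $k$ disjoint curves hanging from the root and reaching deep into a common column force $k$ extra \emph{integer} columns --- is exactly the content the paper has to prove as a separate ``bypassing poly-lines'' lemma, via a double induction (on the width and on the number of bends) plus a $K_4$-outerplanarity argument; nothing in your sketch supplies it. Worse, you assume (R1$\ell$/r) and (R2$\ell$/r) follow from ``row-by-row accounting using row-separation,'' and that is not sound: two disjoint poly-lines crossing a common row need not occupy distinct integer columns (their crossing points can have fractional $x$-coordinates, since only nodes and bends are on the grid), and even if each row of $\Gamma_i$ had $\ell_i$ blocked integer columns, those blocked columns can differ from row to row, which does not bound the overall width of $\Gamma_i$. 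The paper handles this by extracting a single path inside the small child's drawing (e.g.\ from $q_i$ to its leftmost point, or a leftmost-to-rightmost path for (R3)) and applying the bypassing lemma to the $\ell_i$ (resp.\ $W-w$) disjoint curves that go around that one path; so the same unproved lemma is needed for (R1) and (R2), not only for (R3). Your Hall's-theorem reformulation of (R3) is fine in principle and is equivalent to the paper's explicit assignment (ordering big children by the depth of their lowest point in column $X$), but it rests entirely on that missing lemma.

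A second, smaller but real problem is your classification and choice of $c_v$. You classify a child as big when its subtree drawing alone meets column $X$, excluding the edge to the root; then a child $c_i$ with $i<v$ can be ``small'' while its entire subtree is drawn to the \emph{right} of column $X$ (its edge to the root crossing column $X$ below everything else), so your claim that every small child left of $c_v$ is left-small fails, and with it your (R2$\ell$) bound. The paper avoids this by classifying via $\Gamma_c^+$ (subtree drawing \emph{plus} the poly-line of the edge $(u_r,c)$) and by first re-routing one edge so that it leaves the root vertically; that vertical segment $\overline{u_r q}$ is then used repeatedly (to force the relevant column-$X$ points below $q$, to guarantee the points $q_i$ on column $X-1$ exist, and to ensure the lowest column-$X$ point is not the root when building the rank-bounds). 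Your alternative choices of $c_v$ (first big drawing met going down column $X$, or an ad hoc reclassified small child) do not provide this structure, so the separation facts you rely on would have to be re-derived. In short: the architecture is right, but the load-bearing lemma is missing and the places where it is needed are misidentified, and the big/small classification as stated breaks (R2).
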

\begin{proof}
If $T$ is a single node then $R(T)=1\leq W$ and the claim holds.
So assume that the root $u_r$
has children $c_1,\dots,c_d$ for some $d\geq 1$, and let $X$ be
the $x$-coordinate of $u_r$.
If there exists no edge that leaves $u_r$ vertically, then modify
$\Gamma$ slightly as follows.   Let $c_i$ be the last child (in the
order of children) for which the edge $(u_r,c_i)$ leaves $u_r$ to
the left of the vertical ray downwards from $u_r$.
(If there is no such child, then instead take the first child leaving
right of the ray.)  Re-route the edge $(u_r,c_i)$ so that it goes vertically
downward from $u_r$ for a brief while, then has a bend, and then connects
to where the old route crosses column $X{-}1$ (respectively $X{+}1$) for
the first time.  This adds no crossing and no width.
%\footnote{This may create non-integral $y$-coordinates, which is not a 
%problem since we make no claims about the  height.}
So we may assume that one edge leaves $u_r$ vertically; 
set $c_v$ to be the corresponding child. 

To classify each child $c$ as big or small, we study the induced drawing of its
subtree.  Let $\Gamma_c$ be the drawing of $T_c$ induced by $\Gamma$.
Let $\Gamma_c^+$ be $\Gamma_c$ together with the poly-line representing 
edge $(u_r,c)$, but excluding the point of $u_r$.
We declare $c$ to be big if $\Gamma_c^+$ contains a point in column $X$
and small otherwise.  With this $c_v$ is always a big child as desired.
The goal is to show that this classification as big/small, coordinate $X$,
and index $v$ satisfies the conditions for a rank-$W$-witness.

\myparagraph{Condition (R1$\ell$) and (R1r):}  
We only prove (R1$\ell$) here; (R1r) is similar.
So we must show that at most $X-1$ big children are left of $c_v$.
Consider Fig.~\ref{fig:bypass}(left).  
Let $q$ be any point below $u_r$ on the vertical segment of edge $(u_r,c_v)$.
Let $c_i$ be any big child strictly left of $c_v$.  Since 
the drawing is order-preserving, edge $(u_r,c_i)$ start towards $x$-coordinates
less than $X$.  Since $c_i$ is big, drawing $\Gamma_{c_i}^+$
contains a point with $x$-coordinate $X$; let $p_i$ be the topmost such point.
Due to the vertical line-segment $\overline{u_r q}$,
point $p_i$ is below $q$.
Let $P_i$ be the poly-line
within $\Gamma_{c_i}^+$ that connects $u_r$ to $p_i$; this exists
since $\Gamma_{c_i}^+$ is a drawing of a connected subtree.
All points in $P_i$ have $x$-coordinate at most $X$
by choice of $p_i$ and since the drawing is upward. 

If there are $k$ big children strictly left of $c_v$ then we hence obtain $k$
poly-lines $P_1,\dots,P_k$, which are disjoint except at $u_r$
and reside within columns $1,\dots,X$.  They all bypass point $q$ in
the sense that they begin above $q$ (in the same column) and end below $q$
(in the same column).  One can argue
(details are in Section~\ref{sec:bypass}) that each poly-line  requires a 
column
distinct from the one containing $q$ or used for the other poly-lines.
Since point $q$ and the poly-lines are all within columns $1,\dots,X$, 
this shows $k\leq X-1$ as desired.

%\addtocounter{figure}{1}
\begin{figure}[ht]
\hspace*{\fill}
\includegraphics[width=35mm,page=3]{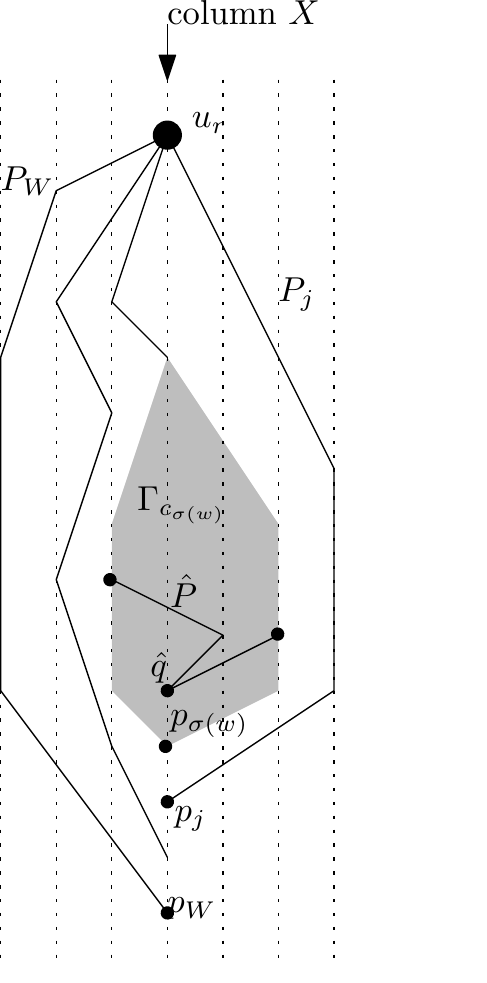}
\hspace*{\fill}
\includegraphics[width=35mm,page=2]{bypass.pdf}
%\hspace*{\fill} \\
\hspace*{\fill}
\includegraphics[width=35mm,page=1]{bypass.pdf}
\hspace*{\fill}
\caption{Bypassing lines.}
%TB but I want this!
%\caption{(Left) 
%$k$ poly-lines bypass $q$ within $X$ columns, hence
%$k+1\leq X$.
%(Middle)
%$\ell_i$ poly-lines bypass $P_i$ within $X-1$ columns,
%hence $\ell_i+R(T_{c_i}) \leq X-1$.
%(Right) 
%$W-w$ poly-lines bypass $P_w$ within
%$W$ columns, hence
%$W-w+R(T_{c_{\sigma(w)}}) \leq W$.
%}
\label{fig:bypass}
\end{figure}

\myparagraph{Conditions (R2$\ell$) and (R2r):}  
We only prove (R2$\ell$) here; (R2r) is similar.
So we must show that any small child $c_i$ left of $c_v$ has rank
at most $X-1-\ell_i$.    We do this by finding a poly-line
for each big child left of $c_i$ that bypasses $\Gamma_{c_i}$ in some sense.  
These poly-lines block $\ell_i$ columns, leaving $X-1-\ell_i$
columns for $\Gamma_{c_i}$, hence $R(T_{c_i})\leq X-1-\ell_i$
by induction.

Consider Fig.~\ref{fig:bypass}(middle).  
Let $p_i$ be the leftmost point of drawing $\Gamma_{c_i}^+$, breaking
ties arbitrarily.  Let $q_i$ be the point where the initial line
segment of $(u_r,c_i)$ intersects column $X-1$; this must exist
since edge $(u_r,c_v)$ leaves $u_r$ vertically and $(u_r,c_i)$ must
leave $u_r$ to the left of this.
Let $P_i$ be the poly-line from $q_i$ to $p_i$ within
drawing $\Gamma_{c_i}^+$.  Since $c_i$ is small, $P_i$ does not
use column $X$. 

Let $c_h$ be a big child to the left of $c_i$ and let $q_h$
be the point where the initial line segment of $(u_r,c_h)$ intersects
column $X-1$.  Since the drawing is order-preserving, $q_h$ is above $q_i$.
Since $c_h$ is big, drawing $\Gamma_{c_h}^+$
intersects column $X$, and in particular therefore has a line
segment $\overline{p_hp_h'}$ with $p_h$ in column $X-1$ and $p_h'$ in 
column $X$.  Since $\overline{p_hp_h'}$ must not intersect
$\overline{u_rq_i}$, $p_h$ must be below $q_i$.  Re-define $p_h$,
if necessary, to be the topmost point below $q_i$ where
$\Gamma_{c_h}^+$ intersects column $X-1$. Let $P_h$ be the
poly-line from $q_h$ to $p_h$ within $\Gamma_{c_h}^+$.  By choice of
$p_h$ and line segment $\overline{u_rq_i}$, poly-line $P_h$
is within coordinates $1,\dots,X-1$.

Repeating this for all $\ell_i$ big children left of $c_i$ gives $\ell_i$
poly-lines that reside within $1,\dots,X-1$ and that bypass $P_i$ in the
sense that they begin and end in column $X-1$, with one end above $q_i$ and 
the other below $q_i$.
Again one can show that these $\ell_i$ poly-lines
each require one column in $\{1,\dots,X-1\}$ that does not intersect $P_i$.
Therefore $P_i$ (and with it $\Gamma_{c_i}$) has width at most
$X-1-\ell_i$, so $R(T_{c_i})\leq X-1-\ell_i$ by induction.

\myparagraph{Condition (R3):}  To verify this condition, we 
extract rank-bounds from drawing $\Gamma$ as follows.
Let $p_W$ be the lowest point in column $X$ that is occupied by some element
of $\Gamma$.  
%Point $p_W$ might either contain a node or be part of a segment
%of an edge, and it need not have integral $Y$-coordinate.  
Due to the vertical segment of edge $(u_r,c_v)$, point $p_W$ 
is not the locus of the root.  Let $c_j$ be the child 
such that $\Gamma_{c_j}^+$ contains $p_W$; by definition $c_j$ is big.
Set $\sigma(W):=j$ and $\pi(c_j):=W$.  
%(Similarly as for corner-witnesses,
%we use $\sigma(w)$ for the index of a big child with rank at most $w$.)

Now presume we have found $\sigma(W),\sigma(W-1),\dots,\sigma(w+1)$ already
for some $w< W$.  Let $p_{w}$ be the lowest point in column $X$ that
is occupied by some element in $\Gamma$ but that does not belong to any
of $\Gamma^+_{c_{\sigma(W)}},\dots,\Gamma^+_{c_{\sigma(w+1)}}$.  If this point is
at $u_r$, then stop:  we have assigned
a rank-bound to all big children.
Else, let $c_j$ be the child such that $\Gamma_{c_j}^+$
contains $p_{w}$, set $\sigma(w):=j$ and $\pi(c_j):=w$, and repeat.

We must show that the chosen values are indeed rank-bounds,
i.e., $R(T_{c_{\sigma(w)}})\leq w$, for all $w$ where $\sigma(w)$ is defined. 
By induction it suffices to show that the width of 
$\Gamma_{c_{\sigma(w)}}$ is at most $w$.
Consider Fig.~\ref{fig:bypass}(right).  
Let $\hat{P}$ be the poly-line within $\Gamma_{c_{\sigma(w)}}$ that connects
a leftmost and rightmost point of $\Gamma_{c_{\sigma(w)}}$.  
Recall that with the rank-bounds we also found points
$p_W,p_{W-1},\dots,p_{w}$, where for $j>w$ point $p_j$ belongs to $\Gamma_{c_{\sigma(j)}}$,
has $x$-coordinate $X$ and is below $p_{j-1}$.  
For any $j>w$, let $P_j$ be the poly-line that connects
$u_r$ with point $p_j$ within $\Gamma_{c_{\sigma(j)}}^+$.
Poly-line $\hat{P}$ spans the width of $\Gamma_{c_{\sigma(w)}}$ and hence
must cross column $X$, say at point $\hat{q}$.  This crossing point cannot be
below $p_w$ due to choice of $p_w$ as the lowest point in column $X$
that is not in $\Gamma^+_{c_{\sigma(w+1)}},\dots,\Gamma^+_{c_{\sigma(W)}}$.
For any $j>w$ point $p_j$ is below
$p_w$ and hence also below $\hat{q}$.  On the other hand $\hat{P}$
does not contain $u_r$ (since it resides within $\Gamma_{c_{\sigma(w)}}$,
not $\Gamma_{c_{\sigma(w)}}^+$), and so $\hat{q}$ is below $u_r$.

We now have found $W-w$ poly-lines $P_{w+1},\dots,P_{W}$ that bypass $\hat{P}$
in the sense that $P_j$ connects $u_r$ (a point above $\hat{q}$) with $p_j$ (a point
below $\hat{q}$), and these poly-lines are node-disjoint from
$\hat{P}$ and from each other except at $u_r$.
Again one can show that each poly-line requires a column
of its own that does not contain $\hat{P}$.
Since there are
$W-w$ such poly-lines, and the drawing of $T$ has width $W$, therefore
$\hat{P}$ (and with it
$\Gamma_{c_{\sigma(w)}}$) has width at most $w$. 

\medskip
This proves that this classification, coordinate, and index
give a rank-$W$-witness, so $R(T)\leq W$ as desired.
\end{proof}

%%%%%%%%%%%%%%%%%%%%%%%%%%%%%%%%%%%%%%%%%%%%%%%%%%%%%%%%%%%%%%%%%%%%%%%%
\subsection{Bypassing poly-lines}
\label{sec:bypassing}
\label{sec:bypass}
%%%%%%%%%%%%%%%%%%%%%%%%%%%%%%%%%%%%%%%%%%%%%%%%%%%%%%%%%%%%%%%%%%%%%%%%

In the proof of Lemma~\ref{lem:draw2rank}, 
we repeatedly used that some set
of poly-lines bypasses another poly-line, and therefore each of
them requires a column of its own.  This is quite intuitive: 
many lower-bound arguments for planar graph drawing use 
arguments where so-called ``nested cycles'' each require two
additional columns (see e.g.~\cite{FPP88}).  However, the argument
is non-trivial for poly-lines since they are open-ended
curves and hence do not separate the drawing of the rest from the
``outside'', except under the special conditions that we called
bypassing.  The rest of this subsection gives the precise definition
and argument. 

We previously described three different situations for bypassing,
but one easily checks that the following definition
encompasses them all: 

\begin{definition}
Let $\hat{P},P_1,\dots,P_k$ be a set of poly-lines that are
disjoint except that ends of $P_1,\dots,P_k$ may coincide. We say
that $P_1,\dots,P_k$ {\em bypass} $\hat{P}$ if there exists
a point $\hat{q}$ in $\hat{P}$ such that for all $i=1,\dots,k$
poly-line $P_i$ begins at a point above $\hat{q}$
and ends at a point below $\hat{q}$.

Here, a {\em point above[below] $\hat{q}$} means a point with the
same $x$-coordinate as $\hat{q}$ 
and with $y$-coordinate strictly larger[smaller]
than the one of $\hat{q}$.
\end{definition}

Recall that for poly-lines the endpoints and all bends  must have
integral $x$-coordinates, and that we measure the width of a set
of poly-lines by the minimum number of consecutive columns that contain them.
Let $x_{\min}(P)$ and $x_{\max}(P)$ be the minimum and maximum $x$-coordinate
of points in poly-line $P$.

\begin{lemma}
Let $P_1,\dots,P_k$ be a set of poly-lines that bypass a poly-line $\hat{P}$.
If these poly-lines all reside within columns $1,\dots,W$, then
$$ W \geq \left(x_{\max}(\hat{P})-x_{\min}(\hat{P})+1\right) + k $$
In other words, every bypassing poly-line requires one additional
column beyond the width occupied by $\hat{P}$.
\end{lemma}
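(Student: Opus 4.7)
My plan is to combine three ingredients. First, since $\hat{P}$ is a connected curve containing points of $x$-coordinate $x_{\min}(\hat{P})$ and $x_{\max}(\hat{P})$, the intermediate value theorem applied to the $x$-coordinate along the curve shows that $\hat{P}$ meets every vertical line $x = c$ for each integer $c \in [x_{\min}(\hat{P}), x_{\max}(\hat{P})]$; hence $\hat{P}$ alone occupies all $x_{\max}(\hat{P}) - x_{\min}(\hat{P}) + 1$ integer columns in this range.

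Second, I would argue topologically that each bypassing poly-line $P_i$ must use at least one integer column strictly outside $[x_{\min}(\hat{P}), x_{\max}(\hat{P})]$. To do so I extend $\hat{P}$ to a Jordan arc $\tilde{P}$ that separates the plane: from each endpoint of $\hat{P}$ attach an auxiliary curve that exits the bounded drawing and continues horizontally to $-\infty$ through $x$-coordinates $< x_{\min}(\hat{P})$ on the left side, and to $+\infty$ through $x$-coordinates $> x_{\max}(\hat{P})$ on the right side. A local inspection at $\hat{q}$ shows that the points of column $x_{\hat{q}}$ immediately above and below $\hat{q}$ lie in opposite components of the complement of $\tilde{P}$, so the start and end of each $P_i$ do too, forcing $P_i$ to cross $\tilde{P}$; since $P_i$ is disjoint from $\hat{P}$, this crossing occurs on one of the extensions, witnessing a point of $P_i$ with $x$-coordinate outside $[x_{\min}(\hat{P}), x_{\max}(\hat{P})]$.

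Third, and this is the main obstacle, I would show that the $k$ bypassers force $k$ \emph{distinct} integer columns outside $[x_{\min}(\hat{P}), x_{\max}(\hat{P})]$. I classify each $P_i$ as a left-bypasser or a right-bypasser according to the side on which it exits this interval (assigning those that do both arbitrarily to one side), obtaining $m_L + m_R = k$. For right-bypassers, form the Jordan curve $C_i = P_i \cup S_i$, where $S_i$ is the vertical segment in column $x_{\hat{q}}$ between the two endpoints of $P_i$. A planarity argument shows that for any two right-bypassers, if the segments $S_i, S_j$ in the common column $x_{\hat{q}}$ partially overlap without one being contained in the other, then $P_i$ and $P_j$ are forced to cross, contradicting disjointness. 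Hence the curves $\{C_i\}$ form a family whose enclosed regions are pairwise nested or disjoint, and strictly nested disjoint Jordan curves have $x_{\max}$-values differing by at least one, since these maxima are attained at bends and are therefore integers. Iterating through the nesting tree yields $m_R$ distinct integer columns $> x_{\max}(\hat{P})$, and symmetrically $m_L$ distinct integer columns $< x_{\min}(\hat{P})$; combining these with the columns of $\hat{P}$ gives the claimed $W \ge (x_{\max}(\hat{P}) - x_{\min}(\hat{P}) + 1) + k$. The crux is this last nesting-and-distinct-columns step; the first two ingredients are straightforward once the setup is fixed.
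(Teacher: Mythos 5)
Your route (a per-poly-line ``must exit the span of $\hat{P}$'' claim plus a nesting argument) is genuinely different from the paper's proof, which is a double induction on $W$ and on the total number of bends: it prunes any $P_i$ meeting $\hat{q}$'s column more than twice, shows via an outerplanar $K_4$-minor contradiction that at most one surviving $P_i$ can reach the outermost column, and peels that column off. Unfortunately both pillars of your plan have real gaps.

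The crux is your second ingredient, and it cannot be repaired as stated. The extension trick does not work: the auxiliary curves must start at the endpoints of $\hat{P}$, which in general lie at columns strictly inside $[x_{\min}(\hat{P}),x_{\max}(\hat{P})]$, so $\tilde{P}$ necessarily traverses columns of that interval before reaching $x<x_{\min}(\hat{P})$ or $x>x_{\max}(\hat{P})$; a forced crossing of $P_i$ with $\tilde{P}$ can therefore happen on that traversal and certifies no column outside the span. Worse, the claim itself does not follow from the bypassing definition: let $\hat{P}$ have corners $(1,10)$, $(3,9)$, $(2,0)$, take $\hat{q}=(2,0)$, and let $P_1$ run $(2,1)\to(1,1)\to(1,-1)\to(2,-1)$. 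Then $P_1$ bypasses $\hat{P}$ at $\hat{q}$, all bends have integral $x$-coordinates, yet $P_1$ never leaves columns $1,\dots,3$. So the separation you want needs extra anchoring; in the three places the paper invokes the lemma, the upper end of every bypasser lies above all of $\hat{P}$ (it is the root, or a point on the initial edge segment above the top of $\hat{P}$), and the paper's base case is where this is silently used. You would have to import that hypothesis explicitly or switch to a global column-charging argument as in the paper.

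Your third ingredient is also incomplete. The curves $C_i=P_i\cup S_i$ need not be Jordan curves, since $P_i$ may cross $\hat{q}$'s column at interior points (the paper's proof has a dedicated pruning case exactly for this); all segments $S_i$ contain $\hat{q}$, so the $C_i$ are never disjoint, and distinct $P_i$ may even share endpoints. Consequently the best you can hope for is ``nested or with disjoint interiors,'' and incomparable pairs contribute nothing to your integer-maxima count: distinct columns come only from a chain in the nesting order. Ruling out incomparable right-bypassers is precisely the interleaving argument the paper formalizes with its $K_4$-minor step, so that case analysis would still have to be carried out in full.
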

\begin{proof}
We proceed by induction on $W$, with an inner induction on
the total number of bends in poly-lines $P_1,\dots,P_k$. 
Clearly $W\geq x_{\max}(\hat{P})-x_{\min}(\hat{P})+1$
since $\hat{P}$ alone occupies this many columns.
In the base case,   $W= x_{\max}(\hat{P})-x_{\min}(\hat{P})+1$,
which means that poly-line $\hat{P}$ extends from leftmost
to rightmost column.  Therefore $\hat{P}$ separates all points
above $\hat{q}$ from points below $\hat{q}$. This implies that
no poly-line $P_1$ exists since $P_1$ is
disjoint from $\hat{P}$ and hence cannot cross it.    Thus,
$k=0$ and the claim holds.

For the induction step $W> x_{\max}(\hat{P})-x_{\min}(\hat{P})+1$, 
so $\hat{P}$ does not span all columns. Say
$x_{\max}(\hat{P})<W$, so $\hat{P}$ is within columns $1,\dots,W-1$.  We have cases.

In the first case, at most one of $P_1,\dots,P_k$ intersects column $W$.
Say this poly-line (if one exists) is $P_k$.
Then $P_1,\dots,P_{k-1}$ all reside within
columns $1,\dots,W-1$, as does $\hat{P}$.  By induction therefore
$W-1 \geq  x_{\max}(\hat{P})-x_{\min}(\hat{P})+1+(k-1)$,
which proves the claim.

In the second case,
some poly-line $P_i$ contains three or more points in the column $X$ that
contains $\hat{q}$.
Then some strict sub-poly-line of $P_i$ connects a point in column $X$
above $\hat{q}$ with a point in column $X$ below $\hat{q}$. We can shorten $P_i$ to this
smaller poly-line without affect the conditions on bypassing. This
removes at least one bend from $P_i$ and 
the claim holds by induction.

\begin{figure}[ht]
%\hspace*{\fill}
%\includegraphics[width=33mm,page=1]{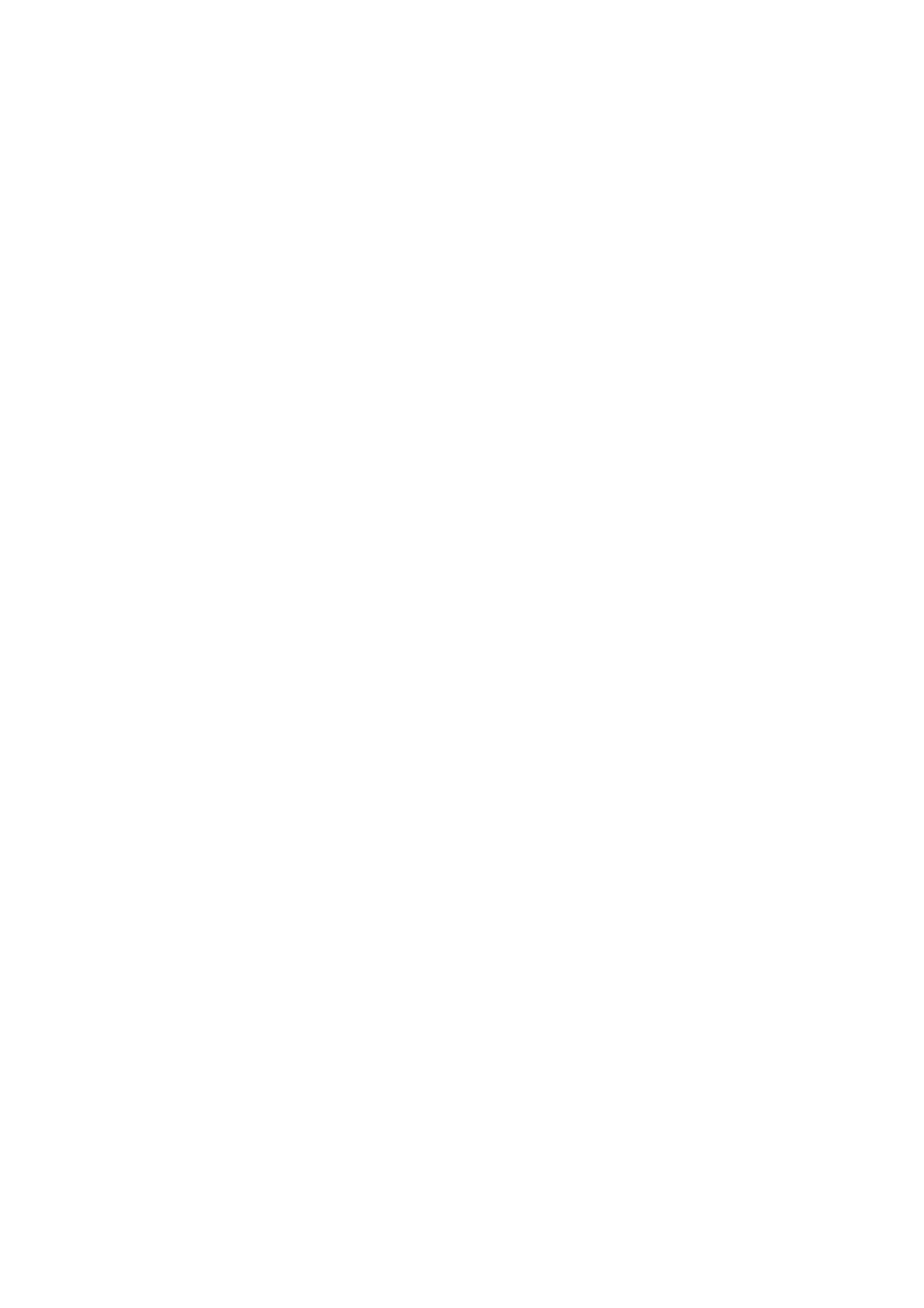}
\hspace*{\fill}
\includegraphics[width=40mm,page=2]{bypass_proof.pdf}
\hspace*{\fill}
\includegraphics[width=40mm,page=3]{bypass_proof.pdf}
\hspace*{\fill}
\caption{Bypassing poly-lines require extra columns.  
(Left) Pruning a path
that intersects column $X$ three times.  (Right) Finding a $K_4$-minor
if none of the previous cases applies.}
\label{fig:bypass_proof}
\end{figure}

Finally we argue that one of the above cases must apply.  Assume
for contradiction that two poly-lines, say $P_{k-1}$ and $P_k$,
both contain a point in column $W$.  Observe that $X<W$,
since column $X$ must intersect $\hat{P}$ due to point $\hat{q}$,
but $x_{\max}(\hat{P})<W$.  Since the second case does not
apply, each $P_i$ (for $i=k-1,k$) stays strictly right of $X$ 
except at its endpoints.  Hence $P_i$ starts at point $q_i$ in column $X$
above $\hat{q}$, connects to a point $r_i$ in column $W$, and then 
returns to point $p_i$ below $\hat{q}$ in column $X$, all the while
staying within $X+1,\dots,W$ except at the ends.
One can observe that this is impossible without a crossing.
Formally one proves this by creating
an outer-planar drawing of a $K_4$-minor as follows:
Consider the drawing induced by $P_k$ and $P_{k-1}$.
Connect the points in column $X$ with vertical edges in order, and
add a new node $z$ in column $W+1$ adjacent to $r_k$ and $r_{k-1}$.
See also Fig.~\ref{fig:bypass_proof}.  
This clearly maintains planarity
and all of $\{q_{k-1},q_k,\hat{q},p_{k-1},p_k,r_{k-1},r_k,z\}$ are on the outer-face.  Since 
$q_k$ and $q_{k-1}$ are strictly above $\hat{q}$ while $p_k$ and $p_{k-1}$
are strictly below, not all points with $x$-coordinate $X$ can coincide.
Since $P_{k-1}$ and $P_k$ are disjoint (except perhaps at their ends),
points $r_k$ and $r_{k-1}$ cannot coincide.  So this indeed gives an
outer-planar drawing of a minor of $K_4$, which is impossible.  So
one of the above cases must apply, and the claim
holds by induction.
\end{proof}

%%%%%%%%%%%%%%%%%%%%%%%%%%%%%%%%%%%%%%%%%%%%%%%%%%%%%%%%%%%%%%%%%%%%%%%%
\section{Transforming rank-witnesses}
\label{sec:rank2corner}
%%%%%%%%%%%%%%%%%%%%%%%%%%%%%%%%%%%%%%%%%%%%%%%%%%%%%%%%%%%%%%%%%%%%%%%%

The goal of this section is to prove Lemma~\ref{lem:rank2corner}, i.e.,  
to find a corner-$W$-witness for a tree of rank $W$.
We go further and show a
chain of equivalences, which also gives rise to a fast algorithm
to test the existence of a corner-$W$-witness.

\begin{lemma}
\label{lem:equivalent}
Let $T$ be a tree for which the root has $d\geq 1$ children, and let $W\geq 1$ be an integer.  
The following are equivalent:
\begin{enumerate}
\item $T$ has a rank-$W$-witness.
\item $T$ has a rank-$W$-witness with $X\in \{1,W\}$.
\item $T$ has a rank-$W$-witness with $v\in \{1,d\}$
\item Algorithm {\sc TestLeft}$(W)$ (given below) returns with success 
	or algorithm {\sc TestRight}$(W)$ returns with success.
\item $T$ has a left-corner-$W$-witness or a right-corner-$W$-witness.
\item $T$ has a corner-$W$-witness.
\end{enumerate}
\end{lemma}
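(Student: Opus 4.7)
My plan is to chain together implications, exploiting several ``free'' equivalences and isolating the main work in one structural implication.

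Several links are immediate. The equivalence (5) $\Leftrightarrow$ (6) holds by definition of a corner-$W$-witness, and (2) $\Rightarrow$ (1), (3) $\Rightarrow$ (1) are specializations. For (2) $\Leftrightarrow$ (3): $X = 1$ forces $v = 1$, since (R2$\ell$) would otherwise require $R(T_{c_i}) \leq X - 1 - \ell_i \leq 0$ for a small child left of $c_v$, contradicting that every tree has rank at least $1$, while (R1$\ell$) forbids big children strictly left of $c_v$. Conversely, $v = 1$ makes the left-side conditions vacuous and allows shrinking $X$ to $1$ without violating (R1r) or (R2r). The case $X = W$, $v = d$ is symmetric.

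Next I would prove (6) $\Rightarrow$ (2) by converting a left-corner-$W$-witness $\sigma(W') < \cdots < \sigma(W)$ into a rank-$W$-witness with $X = 1$: declare each $c_{\sigma(w)}$ big with $\pi(c_{\sigma(w)}) = w$, and additionally promote $c_1$ to big (with $\pi(c_1) = W' - 1$) if $\sigma(W') \neq 1$; the latter is legal because (C2) forces $R(T_{c_1}) \leq W' - 2$. Setting $v = 1$, each rank-condition follows from (C1)/(C2) by direct computation, using in particular that a small child $c_i$ sitting between $\sigma(w)$ and $\sigma(w+1)$ has exactly $W - w$ big children strictly right of it, so $W - 1 - r_i = w - 1 \geq R(T_{c_i})$; the right-corner case is symmetric and yields $X = W$. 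For (2) $\Rightarrow$ (5) I would employ a right-to-left greedy scan: initialize $w := W$, walk through $c_d, c_{d-1}, \ldots, c_1$, set $\sigma(w) := i$ and decrement $w$ whenever $R(T_{c_i}) = w$, skip when $R(T_{c_i}) < w$, and stop with $W' := w + 1$. Verifying that no child has rank above the current target is the technical heart, using (R3) (distinct rank-bounds get picked off in decreasing order) together with (R2r) (the small-child bound $R(T_{c_i}) \leq W - 1 - r_i$ tracks the decrements via $r_i$). The algorithms TestLeft and TestRight are essentially this scan, so (4) $\Leftrightarrow$ (5) reduces to their correctness.

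The main obstacle is (1) $\Rightarrow$ (2): extremizing $X$ to $\{1, W\}$ from an arbitrary rank-$W$-witness. My plan is a case analysis on the position of the big child with rank-bound $W$: if it lies weakly right of $c_v$, I would run a right-to-left scan to produce a right-corner-$W$-witness, and symmetrically otherwise. The delicate step is showing that the small-child bounds (R2$\ell$) and (R2r) in the ``absorbed'' half still support the stronger corner-witness conditions after $X$ is moved, which I expect will require a careful induction or an exchange argument on the rank-bounds to migrate the classification. A fallback route is to prove (1) $\Rightarrow$ (4) directly by running both TestLeft and TestRight and arguing that at least one must succeed whenever any rank-$W$-witness exists, essentially by deriving a contradiction from (R3) if both scans stall.
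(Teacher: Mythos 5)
Most of your links are sound, and two of them are actually more direct than the paper's: your observation that a witness with $v=1$ can simply have its coordinate lowered to $X=1$ (since decreasing $X$ only relaxes (R1r) and (R2r)), and your direct conversion of a left-corner-$W$-witness into a rank-$W$-witness with $X=v=1$ (the paper instead routes $(6)\Rightarrow(2)$ through the drawing lemmas). Your $(2)\Rightarrow(4)\Rightarrow(5)$ plan is essentially the paper's argument: with $X=1$ the vertical child is $c_1$, and if \textsc{TestLeft} stalled at a child of rank at least the current target, then that child together with the already-recorded $c_{\sigma(W')},\dots,c_{\sigma(W)}$ would all be forced big by (R2r) (working from the right, each has too many big children to its right to be small), giving $W-W'+2$ big children whose ranks lie in $\{W',\dots,W\}$ and violating (R3).

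The genuine gap is exactly the step you flag as the main obstacle, $(1)\Rightarrow(2)$: you have a plan outline but not the idea that makes it work. The paper's key observation is that if a rank-$W$-witness has $1<X<W$, then (R2$\ell$) and (R2r) cap every \emph{small} child's rank at $\max\{X-1,W-X\}\le W-2$; hence every child of rank $W$ or $W-1$ is big, and (R3) then allows at most one child $c_m$ of rank $W$ and at most one child $c_s$ of rank $W-1$. One then discards the old witness entirely and builds a fresh one with only two big children, $\{c_1,c_m\}$ with $X=v=1$ (if $c_s$ is absent or right of $c_m$) or $\{c_d,c_m\}$ with $X=W$, $v=d$ (otherwise); the case split is governed by the position of $c_s$ relative to $c_m$, not, as in your sketch, by the position of the rank-bound-$W$ child relative to $c_v$. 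Without this scarcity argument, the ``migration of the classification'' you defer to a careful induction is precisely the unproved content. Your fallback, proving $(1)\Rightarrow(4)$ by deriving an (R3) contradiction when both scans stall, runs into the same wall: the forcing-big argument via (R2r) applies only to certificate children strictly right of $c_v$ (and via (R2$\ell$) only to those strictly left of $c_v$ for the \textsc{TestRight} certificate), so for an interior $X$ the certificate children on the wrong side of $c_v$ need not be big--for instance the blocking child of the \textsc{TestLeft} certificate can legitimately be small when $X-1$ is large--and the pigeonhole against (R3) does not materialize without first pushing $X$ to a corner.
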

\begin{proof}
We give the easy implications first and then prove the harder ones in
separate lemmas.  
\begin{itemize}
\item (1)$\Rightarrow$(2) will be proved in Lemma~\ref{lem:X1orW}.
\item (2)$\Rightarrow$(3) holds automatically for the same rank-$W$-witness.
	Say we have a rank-$W$-witness with $X=1$ (the case $X=W$ is similar).
	If $v>1$ then by (R1$\ell$) no big children are left of $c_v$, so $c_1$ 	
	must be a small child.  But then by (R2$\ell$) child $c_1$ must have  
	rank at most $X-1=0$, an impossibility.  So $v=1$. 
\item (3)$\Rightarrow$(4) will be proved in Lemma~\ref{lem:testFailure}.
\item (4)$\Rightarrow$(5) will be proved in Lemma~\ref{lem:testSuccess}.
\item (5)$\Rightarrow$(6) holds by definition of corner-$W$-witness.
\item (6)$\Rightarrow$(2) could be proven directly, but a simpler indirect proof is
	that Lemma~\ref{lem:corner2draw} shows how to extract a drawing of width $W$
	from the corner-$W$-witness, and Lemma~\ref{lem:draw2rank} shows how to 
	extract a rank-$W$-witness from this drawing.  In the drawing, the root
	is at the top left or top right corner, and hence in the rank-$W$-witness
	we have $X=1$ or $X=W$.
\item (2)$\Rightarrow$(1) holds trivially.
\end{itemize}
\end{proof}

\begin{algorithm}[ht]
\begin{tabbing}
\hspace*{2cm}\=\hspace*{2cm}\=\kill
// $T$ is a tree with children $c_1,\dots,c_d$, $d\geq 1$, $W\geq 1$ \\
%\STATE $i\leftarrow d$ \\
%\WHILE{$i>0$ \AND $R(T_{c_i})\leq W$} \STATE $i--$ \\
Let $i$ be the maximal index such that $R(T_{c_i})\geq W$ \\
\IF{no such $i$ exists}  \RETURN ``success'' \\%// See Observation~\ref{obs:W+1} \\
\IF{$R(T_{c_i})>W$} \RETURN ``failure'' \\%// See Observation~\ref{obs:W} \\
Now $c_i$ is the rightmost child with $R(T_{c_i})=W$. \\
Initialize $\sigma(W)$ to be $i$, $w$ to be $W$ and decrease $i$ \\
\LOOP \\
\>\WHILE{$i>0$ \AND $R(T_{c_i})\leq w-2$} decrease $i$ \\
\>\IF{$i==0$} set $W':=w$ and \RETURN ``success'' \\
\>\IF{$R(T_{c_i})\geq w$} set $W':=w$ and \RETURN ``failure'' \\
\>Now $c_i$ is a child with $R(T_{c_i})=w-1$ and $i<\sigma(w)<\dots<\sigma(W)$\\
\>Set $\sigma(w-1)$ to be $i$ and decrease both $w$ and $i$. \\
\ENDLOOP 
\end{tabbing}
\vspace*{-3mm}
\caption{{\sc TestLeft}$(T,W)$}
\label{algo:test}
\end{algorithm}

Algorithm~\ref{algo:test} gives the algorithm {\sc TestLeft} that tests
whether a tree $T$ has a left-corner-$W$-witness.  
We give now the lemmas that show its correctness.
The corresponding results for algorithm  
{\sc TestRight} for right-corner-$W$-witnesses are in the appendix.	

\begin{lemma}
\label{lem:testSuccess}
Assume algorithm {\sc TestLeft} returns with ``success''.  Then $T$ has
a left-corner-$W$-witness.
\end{lemma}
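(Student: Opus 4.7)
The plan is to track the $\sigma$-assignments made by {\sc TestLeft} and to show that together with the terminal value of the loop variable $w$ they constitute a left-corner-$W$-witness. There are two success-paths to consider: the ``no such $i$ exists'' return that fires before the loop, and the in-loop return triggered by $i$ dropping to $0$.

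For the pre-loop success I would set $W':=W+1$ and invoke Observation~\ref{obs:W+1}: the entry condition for that return says that every child has rank at most $W-1$, and since $T$ has at least one child (so $n\geq 2$), the observation supplies the required left-corner-$W$-witness immediately. For the in-loop success I would set $W':=w$, with $w$ the value at termination, and use the sequence $\sigma(W')<\sigma(W'+1)<\dots<\sigma(W)$ generated during the run. Condition (C1) can be verified line-by-line: $\sigma(W)$ is assigned right after the algorithm certifies that $c_i$ is the rightmost child with $R(T_{c_i})=W$, and each later assignment $\sigma(w-1):=i$ is preceded by the inner-while (which only stops once $i=0$ or $R(T_{c_i})\geq w-1$) and the failure check (which aborts if $R(T_{c_i})\geq w$), so the rank at the assigned index is exactly $w-1$.

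Condition (C2) is the substantive step and I would split the index range into three pieces. For indices strictly to the right of $\sigma(W)$, the maximality of $i$ in the initial selection together with the failure-check ruling out $R(T_{c_i})>W$ forces rank at most $W-1$, which is what (C2) requires for the gap to $\sigma(W+1):=d+1$. For indices strictly between $\sigma(w-1)$ and $\sigma(w)$ these are exactly the indices the inner-while of the current iteration decreased past, each satisfying $R\leq w-2$; in the (C2) notation with $w':=w-1$ this is rank at most $w'-1$, as required. For indices strictly less than $\sigma(W')$, the success branch fired because the final inner-while drove $i$ down to $0$, so every such child was discarded under the test $R\leq w-2=W'-2$, which matches (C2) for the leftmost gap with $\sigma(W'-1):=0$.

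The remaining check is the range bound $1\leq W'\leq W+1$: the pre-loop branch gives $W'=W+1$, and in the loop branch $w$ starts at $W$ and is decremented only after confirming $R(T_{c_i})=w-1\geq 1$, so $w\geq 1$ at the moment of success. The main bookkeeping obstacle I anticipate is keeping the index $i$ aligned with the outer variable $w$ across the two decrement sites (end of outer loop and inside the inner while); I would manage this with a single invariant such as ``at the top of each outer iteration with variable $w$, the children with indices in $\{i+1,\dots,\sigma(w)-1\}$ all have rank at most $w-1$'', which makes both (C2) and the range bound fall out mechanically at termination.
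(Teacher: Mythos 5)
Your proof is correct and follows essentially the same route as the paper's: handle the pre-loop success via Observation~\ref{obs:W+1} with $W'=W+1$, and for the in-loop success read off the $\sigma$-sequence built by {\sc TestLeft}, checking (C1) from the while-stop and failure-check conditions and (C2) from the skipped indices (gaps between consecutive $\sigma$ values, the prefix before $\sigma(W')$, and the suffix after $\sigma(W)$). Your verification is somewhat more explicit than the paper's (e.g.\ the range bound on $W'$), but there is no substantive difference in approach.
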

\begin{proof}
There are two possible situations in which {\sc TestLeft} returns success.
One possibility is that no child has 
rank $W$ or higher; then by Observation~\ref{obs:W+1} we have a 
left-corner-$W$-witness. 
The other possibility is that the algorithm reached $i=0$ and therefore
found a value $W'$ and indices $\sigma(W')<\sigma(W'+1)<\dots<\sigma(W)$
with $R(T_{c_{\sigma(w)}})=w$ for all $W'\leq w\leq W$.    Let $c_i$ be a child 
that was skipped when assigning $\sigma(.)$, i.e., $\sigma(w-1)<i<\sigma(w)$
for some $W'\leq w\leq W$  (where as before $\sigma(W'-1):=0)$.  We skipped
this child because  has rank at most $w-2$, so (C2) holds for $c_i$.  Also,
all children to the right of $c_{\sigma(W)}$ have rank at most $W-1$, so
again (C2) holds.  So we found a left-corner-$W$-witness.
\end{proof}

\begin{lemma}
\label{lem:testFailure}
Assume algorithm {\sc TestLeft} returns with ``failure''.  Then $T$ has
no rank-$W$-witness with $v=1$.
\end{lemma}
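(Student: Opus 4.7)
My plan is to prove the contrapositive: assume $T$ has a rank-$W$-witness with $v = 1$, and deduce that \textsc{TestLeft}$(T, W)$ does not return failure.

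First I reduce to the case $X = 1$. With $v = 1$ the conditions $(R1\ell)$ and $(R2\ell)$ are vacuous, and replacing $X$ by $1$ only relaxes $(R1r)$ and $(R2r)$ (since $W - 1 \geq W - X$ when $X \geq 1$) without touching the rank-bounds, so $(R3)$ still holds. Applied to this witness, Observation~\ref{obs:W} yields that every child of $T$ has rank at most $W$, which rules out \textsc{TestLeft}'s initial failure check.

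It remains to rule out failure inside the loop. Suppose for contradiction failure occurs when the current value of $w$ satisfies $1 \leq w \leq W$. At that moment the algorithm has assigned $\sigma(w) < \sigma(w+1) < \cdots < \sigma(W)$ with $R(T_{c_{\sigma(w')}}) = w'$, and the current index $i$ satisfies $1 \leq i < \sigma(w)$ and $R(T_{c_i}) \geq w$; in particular $\sigma(w) \geq 2$. This yields a sequence of $W - w + 2$ children with ranks at least $w$. The key claim is that every one of them is \emph{big} in the fixed witness. Granting this, $(R3)$ assigns them distinct rank-bounds in $\{1, \ldots, W\}$, each at least the respective rank and hence at least $w$; but $\{w, w+1, \ldots, W\}$ has only $W - w + 1$ elements, so pigeonhole gives the contradiction.

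I prove the bigness claim by reverse induction on rank, using $(R2r)$ with $X = 1$. For the base $c_{\sigma(W)}$: its index is $\geq 2$, hence not $c_v = c_1$; if it were small, $(R2r)$ would force its rank to be at most $W - 1$, contradicting rank $W$. Once $c_{\sigma(w'+1)}, \ldots, c_{\sigma(W)}$ are known big, they contribute $r_{\sigma(w')} \geq W - w'$ big children to the right of $c_{\sigma(w')}$, so smallness of $c_{\sigma(w')}$ would force rank at most $w' - 1$, contradicting rank $= w'$. For $c_i$: either $i = 1$, so $c_i = c_v$ is big by definition, or $i > 1$ and the same counting gives $r_i \geq W - w + 1$, forcing rank at most $w - 2$. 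The main obstacle is precisely this bigness claim: only after combining the algorithm's right-to-left scan structure with $(R2r)$---made tight by the reduction to $X = 1$---do we accumulate enough big children to trigger the pigeonhole in $(R3)$.
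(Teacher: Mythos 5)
Your proof is correct and follows essentially the same route as the paper's: split into the pre-loop failure (ruled out because all children have rank at most $W$) and the in-loop failure, show by a right-to-left induction via (R2r) that $c_i,c_{\sigma(w)},\dots,c_{\sigma(W)}$ are all big, and then derive a contradiction with (R3) by pigeonhole on the $W-w+2$ big children with ranks in $\{w,\dots,W\}$. Your preliminary normalization to $X=1$ is harmless but unnecessary, since (R2r) already bounds small children's ranks by $W-X-r_i\leq W-1-r_i$ for any $X\geq 1$, which is all the paper uses.
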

\begin{proof}
There are two possible situations in which {\sc TestLeft} returns failure.
One possibility is that some child has rank $W+1$ or higher; then by Observation~\ref{obs:W} 
no rank-$W$-witness can exist.  The 
other possibility is that the algorithm reached some $i>0$ with $R(T_{c_i})\geq W'$
and indices $\sigma(W')<\sigma(W'+1)<\dots<\sigma(W)$ where $c_{\sigma(w)}$ has rank $w$ for all $W'\leq w\leq W$.    
Assume for contradiction that a rank-$W$-witness with $v=1$ exists.
We claim that children $c_i,c_{\sigma(W')},\dots,c_{\sigma(W)}$ must all be big.
This is obvious for $c_{\sigma(W)}$: By $v=1$ this child is right of the
vertical child, and by (R2r) it cannot be small since its rank is $W$.
Now $c_{\sigma(W-1)}$ has at least one big child to its right, and it is
also to the right of the vertical child, so since its rank is $W-1$ and using (R2r) shows that it, too, must be big.
Repeating the argument show that children $c_i,c_{\sigma(W')},\dots,c_{\sigma(W)}$ are all big.
But this gives $W-W'+2$ big children with ranks in $\{W',\dots,W\}$, which means that
it is impossible to assign rank-bounds and satisfy (R3).  Hence no rank-$W$-witness with $v=1$ can exist.
\end{proof}

The final step is hence to show that the coordinate of a rank-$W$-witness
can be ``pushed into a corner''.

\begin{lemma}
\label{lem:X1orW}
Let $T$ be a tree.
If $W:=R(T)\geq 2$, then $T$ has a rank-$W$-witness with
$X=1$ or $X=W$.
\end{lemma}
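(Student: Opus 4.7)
Given a rank-$W$-witness $(X, v, \mathrm{cls}, \pi)$ for $T$ with $1 < X < W$ (else we are done), the plan is to produce an alternative rank-$W$-witness with coordinate in $\{1, W\}$. By Observation~\ref{obs:W}, at most one child of the root has rank exactly $W$; when present, call it $c_k$. In any rank-$W$-witness such a $c_k$ must be classified big with $\pi(c_k) = W$, since conditions (R2$\ell$), (R2r) force every small child's rank strictly below $W$, and no rank-bound in $\{1,\dots,W-1\}$ dominates $R(T_{c_k}) = W$.

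The easy case is when no such $c_k$ exists: every child has rank at most $W-1$, and I would directly exhibit a rank-$W$-witness with $X' = 1$ and $v' = 1$ by classifying $c_1$ as the sole big child with $\pi(c_1) := \max\{1, R(T_{c_1})\} \leq W-1$ and all other children as small. Conditions (R1$\ell$), (R2$\ell$) are vacuous; (R1r) and (R3) hold trivially; and (R2r) reduces to $R(T_{c_i}) \leq W-1-r_i = W-1$ for $i > 1$, which is the case hypothesis.

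In the main case, $c_k$ exists, and the plan is to show that at least one of two corner-constructions produces a valid rank-$W$-witness: either $X' = 1, v' = 1$ or $X' = W, v' = d$. For the former I would use a greedy right-to-left classification: initialize $r := 0$; for $i = d, d-1, \dots, 2$, classify $c_i$ small whenever $R(T_{c_i}) \leq W-1-r$, otherwise big (incrementing $r$); finally force $c_1$ big; and assign rank-bounds by matching big children to $\{1,\dots,W\}$ in decreasing order of rank. Conditions (R1r), (R2r) then hold by construction, and (R1$\ell$), (R2$\ell$) are vacuous, so the only nontrivial requirement is (R3), which reduces to the Hall-type inequality $R_j \leq W-j+1$ on the sorted decreasing ranks $R_1 \geq R_2 \geq \cdots$ of the greedy-big children. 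The mirror construction for $X' = W, v' = d$ is analogous.

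The main obstacle is proving that at least one of the two greedies satisfies the Hall inequality. My plan is a contradiction argument: if both fail, each failure yields a family of high-rank children flanking $c_k$ whose ranks cannot be individually matched to distinct rank-bounds. Here I would exploit the hypothesis $1 < X < W$ in the original witness: conditions (R2$\ell$), (R2r) give, for small $c_i$ of rank $W-j$, the restrictions $\ell_i \leq X+j-W-1$ or $r_i \leq j-X$, which for small $j$ sharply limit where such small children can appear. So the ``heavy'' children produced by the two failures must themselves be big in the original $\pi$. Combining both failure certificates with $\pi(c_k) = W$ would then force the original $\pi$ to assign more than $W$ distinct rank-bounds, contradicting (R3) and hence the existence of the assumed witness. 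I expect formalizing this double-counting step to be the delicate part.
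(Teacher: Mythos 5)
Your easy case (no child of rank $W$) is fine and matches the paper. But in the main case your proof is not complete: the entire argument rests on the claim that at least one of your two greedy corner-classifications satisfies the Hall-type condition (R3), and for that you only offer a plan ("combining both failure certificates \dots would then force $\pi$ to assign more than $W$ distinct rank-bounds"), explicitly deferring the delicate double-counting step. That step is exactly where the content of the lemma lies, so as written this is a genuine gap, not a minor omission. It is also not obviously routine: the failure certificate of a corner greedy gives a monotone chain of high-rank children on one side, the other greedy gives one on the other side, and you would have to argue carefully how these interact with the original witness's vertical child $c_v$, its big/small classification, and (R1)/(R2) before any counting against (R3) goes through --- essentially redoing, in a harder setting, what the paper's Lemmas~\ref{lem:testFailure} and \ref{lem:testSuccess} do.

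The idea you are missing is that the hypothesis $1<X<W$ makes the situation far more rigid than your greedy framework suggests. By (R2$\ell$) and (R2r), every \emph{small} child of the original witness has rank at most $\max\{X-1,\,W-X\}\leq W-2$; hence every child of rank $W-1$ or $W$ is big, and (R3) then gives at most one child $c_m$ of rank $W$ and at most one child $c_s$ of rank $W-1$. With this, no greedy or Hall argument is needed: if $c_s$ does not exist or lies strictly right of $c_m$, take $X=1$, $v=1$ and declare only $c_1$ and $c_m$ big (children strictly between them have $r_i=1$ and rank at most $W-2$, children right of $c_m$ have $r_i=0$ and rank at most $W-1$, and (R3) is immediate with two big children); if $c_s$ lies strictly left of $c_m$, do the mirror construction with $X=W$, $v=d$. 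I recommend replacing your main-case plan with this two-big-children construction, or else fully working out the double-counting argument you sketch, since in its current form the proposal does not establish the lemma.
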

\begin{proof}
If all children have rank at most $W-1$, then such a witness
is easily constructed by setting $X=v=1$ and declaring all children except
$c_1$ to be small.  We leave it to the reader to verify the conditions.

So assume some child $c_m$ has rank
$W$. Fix any rank-$W$-witness of $T$, and assume $1<X<W$ for
its coordinate, otherwise we are done.
By (R2$\ell$) and (R2r), any small child has rank at most
$\max\{X-1,W-X\}\leq W-2$ since $1<X<W$.  So any child
of rank $W-1$ or $W$ is big, and
by (R3) we can have at most one child $c_s$ with rank $W-1$.

Assume that $c_s$ either does not exist or is strictly right of $c_m$.
Create a rank-$W$-witness using $X=1$ and $v=1$ and
declaring $c_1$ and $c_m$ to be big and all other children
to be small.  Verify the conditions
for this new witness as follows.  (R3) holds since we have
at most two big children, and only one of them has rank $W$.
(R1$\ell$) and (R2$\ell$)
hold trivially since $v=1$.  (R1r) holds since at most
$1\leq W-1$ big children are right of $c_1$.  (R2r)
holds for $i>m$ since then $r_i=0$ and $c_i$ has rank
at most $W-1$.  It also holds for $1<i<m$ since then $r_i=1$
and $c_i$ has rank at most $W-2$ since $c_s$ (if it exists)
is strictly right of $c_m$.

This creates a rank-$W$-witness with $X=1$ if $c_s$ does not
exist or is strictly right of $c_m$.  If $c_s$ is strictly
left of $c_m$, then
similarly create a rank-$W$-witness with $X=W$ and $v=d$.
\end{proof}

So not only can any rank-$W$-witness be turned into
a corner-$W$-witness (which proves Lemma~\ref{lem:rank2corner}),
but with the proof we also get an algorithm to test whether
such a witness exists.

\begin{lemma}
For any tree $T$, $R(T)$ can be computed in linear time.
In the same time we can also find a corner-witness (for the 
respective rank) for each rooted subtree of $T$.
\end{lemma}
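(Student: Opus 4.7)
The plan is to process $T$ bottom-up in postorder. For each node $v$ with children $c_1,\dots,c_d$, assuming the ranks of all child subtrees have been computed already, I will determine $R(T_v)$ and output a corner-witness for $T_v$ in $O(1+\deg(v))$ time. Summing over all nodes gives the desired $O(n)$ total bound. The base case (leaf) is immediate: $R(T_v)=1$ and no corner-witness is required.

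For the inductive step, the crucial observation is that $R(T_v)\in\{M,M+1\}$ where $M:=\max_i R(T_{c_i})$. The lower bound $R(T_v)\geq M$ follows from Observation~\ref{obs:W}, which forces every child's rank to be at most $R(T_v)$. The upper bound $R(T_v)\leq M+1$ follows from Observation~\ref{obs:W+1} applied with $W=M+1$: since every child has rank at most $M=W-1$, $T_v$ admits a left-corner-$(M+1)$-witness, which by Lemma~\ref{lem:equivalent} yields a rank-$(M+1)$-witness.

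To decide between the two possibilities I will run {\sc TestLeft}$(T_v,M)$ and {\sc TestRight}$(T_v,M)$. By Lemma~\ref{lem:equivalent}, $R(T_v)\leq M$ holds iff at least one of these tests returns success. If one does, that run itself outputs the desired corner-$M$-witness (Lemma~\ref{lem:testSuccess} and its symmetric analogue). If both return failure, then $R(T_v)=M+1$ and the required corner-$(M+1)$-witness is trivial: take $W'=M+2$ with an empty $\sigma$-sequence, noting that (C2) is satisfied vacuously because every child has rank at most $M=(M+1)-1$.

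It remains to bound the running time. A single call to {\sc TestLeft} or {\sc TestRight} scans the child list once, strictly decreasing its scan index $i$ at each step and doing $O(1)$ work per step, so it runs in $O(d)$ time. Hence the per-node cost is $O(\deg(v))$, and the total across all nodes telescopes to $O(n)$, using also that reading off the child ranks $R(T_{c_i})$ at $v$ takes only $O(\deg(v))$ time since each has already been stored.

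The main thing worth pinning down, and really the only nontrivial ingredient, is the dichotomy $R(T_v)\in\{M,M+1\}$: without it one would naively have to search over many values of $W$ per node and lose the linear bound. Fortunately Observations~\ref{obs:W} and~\ref{obs:W+1} together deliver exactly this dichotomy, so no additional machinery is needed.
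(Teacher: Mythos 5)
Your proposal is correct and follows essentially the same route as the paper: compute ranks bottom-up, observe via Observations~\ref{obs:W} and~\ref{obs:W+1} that the rank at a node is either the maximum child rank $M$ or $M+1$, decide between them by running {\sc TestLeft}$(M)$ and {\sc TestRight}$(M)$ (which also yield the corner-witness, with the trivial left-corner-$(M+1)$-witness in the failure case), and charge $O(\deg(v))$ per node for a linear total. No substantive differences to report.
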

\begin{proof}
If $T$ has one node, then $R(T)=1$ and we are done.  So
assume $n\geq 2$ and we have already recursively computed
ranks and corner-witnesses for the children.  Let $W$ be the
maximal rank among the children.  Run {\sc TestLeft($T,W$)}
and {\sc TestRight($T,W$)} to test whether $T$ has a 
corner-$W$-witness.  If one of them succeeds,
then $R(T)=W$ and we have found the corner-witness.  Otherwise
$R(T)\geq W+1$ by Lemma~\ref{lem:equivalent}, and we know $R(T)\leq W+1$ and can find
the left-corner-$(W+1)$-witness using Observation~\ref{obs:W+1}.
This computation takes $O(\deg(v))$ time for each node $v$,
and hence $O(n)$ time total.
\end{proof}

With this, all ingredients for Theorem~\ref{thm:main} have been
assembled and the theorem holds.  We also note that our proof
shows that for order-preserving poly-line drawings, it makes
no difference for the width whether we demand upward or strictly-upward
drawings.  The extraction of the rank-$W$-witness from a drawing
(Lemma~\ref{lem:draw2rank}) works even if the drawing has horizontal
edges, while the construction of the drawing (Lemma~\ref{lem:corner2draw})
creates strictly-upward drawings.

%%%%%%%%%%%%%%%%%%%%%%%%%%%%%%%%%%%%%%%%%%%%%%%%%%%%%%%%%%%%%%%%%%%%%%%%
\section{Straight-line drawings?}
\label{sec:straight}
%%%%%%%%%%%%%%%%%%%%%%%%%%%%%%%%%%%%%%%%%%%%%%%%%%%%%%%%%%%%%%%%%%%%%%%%

We showed that the rank exactly describes the optimum width of
{\em poly-line} upward order-preserving drawings.  A natural question
is whether this also describes the optimum width of {\em ideal} drawings
where additionally we require edges to be straight-line. 
The answer is ``no''.

\begin{theorem}
\label{thm:ordered_straightline_quaternary}
\label{thm:quaternary}
The tree in Fig.~\ref{fig:quintinary}(a)
has a planar strictly-upward 
order-preserving poly-line drawing of width $2$, but no 
ideal drawing of width $2$.
\end{theorem}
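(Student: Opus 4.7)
For the positive direction I would invoke Theorem~\ref{thm:main}: it suffices to verify that the tree $T$ of Fig.~\ref{fig:quintinary}(a) has rank $2$. This is done by checking $T$ against the rank definition, or equivalently by exhibiting a rank-$2$-witness (and hence, by Lemma~\ref{lem:equivalent}, a corner-$2$-witness); in particular, when all children of the root have rank $\leq 1$, Observation~\ref{obs:W+1} supplies a left-corner-$2$-witness directly.

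For the negative direction I argue by contradiction. Suppose $\Gamma$ is an ideal drawing of $T$ with width~$2$. Every node has $x$-coordinate in $\{1,2\}$, and by horizontally reflecting the drawing if necessary I may assume the root $u_r$ sits in column~$1$. The straight-line, upward, and order-preserving conditions then force three structural features of $\Gamma$: (i) at most one child of $u_r$ lies in column~$1$ (any second such child would lie on the vertical edge from $u_r$ to the first); (ii) if a column-$1$ child exists it is the leftmost child $c_1$; and (iii) the remaining children of $u_r$, all in column~$2$, appear in child-order from lowest $y$-coordinate to highest $y$-coordinate, because among straight-line edges leaving $u_r$ downward-to-the-right, the steeper slope corresponds to a smaller angle counter-clockwise from the $+y$ direction. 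The pivotal geometric fact I would then establish is the \emph{crossing lemma}: a straight-line segment from $(2,y_a)$ to $(1,y_b)$ with $y_b<y_a$ crosses the straight-line edge from $u_r=(1,y_r)$ to a column-$2$ child at $(2,y_j)$ if and only if $y_j<y_a$, because at $x=1$ the first segment lies below the second (since $y_b<y_r$) while at $x=2$ the ordering flips exactly when $y_j<y_a$. A direct corollary is that the topmost column-$2$ child $c_t$ of $u_r$ cannot have a subtree of rank $\geq 2$: any column-$2$-to-column-$1$ edge in such a subtree sits at a $y$-coordinate that exceeds the $y$-coordinate of the next-lower column-$2$ child of $u_r$ (since a vertical descending column-$2$ path from $c_t$ cannot cross through another column-$2$ child of $u_r$), and therefore triggers the crossing lemma.

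The final step is to extract the contradiction from the structure of $T$. The tree in Fig.~\ref{fig:quintinary}(a) is designed so that, tracing through all admissible arrangements of the root's children -- and symmetrically for $u_r$ at column~$2$, using a reflected version of the crossing lemma applied to the topmost column-$1$ child -- at least one child of $u_r$ with rank $\geq 2$ is forced into the ``topmost'' role in its column, violating the corollary above. The main obstacle is exactly this structural case analysis: it must handle every way of distributing the children of $u_r$ between the two columns, every choice of $y$-coordinates, and the recursive application of the same crossing argument inside the subtrees of $u_r$'s children (since the analog of (i)--(iii) applies to any node placed at column~$1$ or column~$2$ with its own children in both columns). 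Once all cases are ruled out using the rank-witness classification from Section~\ref{sec:rank2corner}, the contradiction is immediate.
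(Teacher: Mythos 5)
The positive direction of your plan is fine in principle: exhibiting a corner-$2$-witness and invoking Theorem~\ref{thm:main} (or simply pointing at the width-$2$ poly-line drawing shown in the figure) settles it. Note, though, that your suggested shortcut via Observation~\ref{obs:W+1} cannot apply to this tree: its root has a child of rank $2$ (indeed a child rank sequence containing $1,1,2$), which is precisely what makes the example work, so you must produce an actual witness or drawing rather than appeal to the ``all children have rank at most $W-1$'' case.

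The negative direction has a genuine gap. You establish local facts (at most one child of a node in its own column, the correspondence between child order and $y$-order in the other column, the crossing criterion for segments spanning the two columns) and a corollary about the topmost column-$2$ child, but the actual contradiction is deferred to a case analysis over ``all admissible arrangements'' of children and $y$-coordinates, applied recursively -- which you yourself name as the main obstacle and never carry out. As written, nothing connects the argument to the specific tree of Fig.~\ref{fig:quintinary}(a), so the claim that some child of rank at least $2$ is ``forced into the topmost role'' is unsupported; your corollary also tacitly assumes that a next-lower column-$2$ child of $u_r$ exists, and the reflection you invoke to place the root in column $1$ reverses the prescribed child order, so the two columns are not literally symmetric for this ordered tree. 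The paper sidesteps all of this by re-using its witness machinery: since $u_3$ has rank $2$, the child rank sequences of $u_2$ and of $u_1$ both contain $1,1,2$, so {\sc TestLeft}$(2)$ fails, and by Lemmas~\ref{lem:equivalent} and~\ref{lem:draw2rank} neither $T_{u_1}$ nor $T_{u_2}$ admits a width-$2$ drawing (even poly-line) with its root in column $1$. Hence in any width-$2$ upward order-preserving drawing of $T$ both $u_1$ and $u_2$ lie in column $2$; if the drawing were straight-line, the edge $\overline{u_1u_2}$ would be vertical, leaving no room for the child of $u_1$ to the right of $u_2$. To complete your route you would have to replace the hand-waved case analysis by a concrete argument of this kind tied to the structure of the tree.
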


Nevertheless, might there be a similar algorithm to compute optimum-width
straight-line drawings?  This question remains open, but we can show that
one key ingredient will fail:  There do not always exist optimum-width
drawings where the root is at a corner.

\begin{theorem}
\label{thm:not_corner}
The tree in Fig.~\ref{fig:quintinary}(b)
has a planar upward order-preserving straight-line
drawing of width 3, but in any such drawing the root has to be in 
the middle column. 
\end{theorem}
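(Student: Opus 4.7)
The statement splits into two parts: an upper bound (existence of a width-$3$ straight-line drawing with the root in column~$2$) and a lower bound (no width-$3$ straight-line drawing can put the root in column~$1$ or column~$3$).

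For the upper bound I would give an explicit construction. Place the root at $(2, y_r)$ for some suitably large $y_r$. Partition the children into those whose subtrees will sit in the ``left half'' (columns~$1,2$) and those that will sit in the ``right half'' (columns~$2,3$), respecting the given left-to-right order. Draw each subtree recursively within its two-column strip (e.g.~by adapting the construction of Lemma~\ref{lem:upper}, since each subtree of the tree in Fig.~\ref{fig:quintinary}(b) has small rooted pathwidth), and stack the drawings vertically in child-order. Because the root sits at column~$2$, each child's topmost point lies in one of columns $1,2,3$ and is directly visible from the root along a straight segment lying entirely in the strip of width~$3$; planarity is preserved because at any horizontal level, the drawing contains only one subtree plus possibly the descending edges of children not yet reached.

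For the lower bound I would argue by contradiction. Suppose $\Gamma$ is a planar upward order-preserving straight-line drawing of width~$3$ with the root $u_r$ at column~$1$; the case of column~$3$ follows by horizontal reflection. Every edge from $u_r = (1, y_r)$ is a straight segment into the half-plane $\{x \geq 1,\, y \leq y_r\}$, and order-preservation forces the polar angles of the edges $(u_r, c_i)$ at $u_r$ to be monotone in $i$. Next, identify the children of $u_r$ whose subtrees have rank (hence width) at least~$2$ in Fig.~\ref{fig:quintinary}(b); such a subtree $T_{c_i}$ must occupy two consecutive columns, so its bounding box sits in $\{1,2\}$ or $\{2,3\}$. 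Use Lemma~\ref{lem:draw2rank} on induced sub-drawings to extract rank-witnesses controlling which columns each child may span.

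The heart of the argument is a short case analysis on where the rank-$2$ subtrees live. In each case, combine (i)~the monotone-angle constraint at $u_r$, (ii)~the fact that $u_r$ occupies the only grid point in column~$1$ on the top row, and (iii)~the order in which children appear, to show that some straight-line edge from $u_r$ to a later child must cross a subtree of an earlier child, or that some subtree is forced into a column blocked by the straight segment from $u_r$ to an earlier or later child. This yields the contradiction with planarity.

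The main obstacle will be the lower-bound case analysis: unlike the poly-line setting, one cannot route around obstacles, so the proof must rule out every legal left/right distribution of the children's rank-$2$ subtrees. The specific features of the tree in Fig.~\ref{fig:quintinary}(b) (presumably several children whose subtrees each need two columns) are exactly what makes every such distribution conflict with the straight-line visibility constraint from a root at column~$1$.
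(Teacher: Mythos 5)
There is a genuine gap: your write-up is a plan whose decisive steps are left undone, and both halves of the plan would run into trouble on the actual tree. For the upper bound, you propose drawing every child's subtree inside a two-column strip (columns $\{1,2\}$ or $\{2,3\}$); but in the tree of Fig.~\ref{fig:quintinary}(b) one child, $c_3$, has as its subtree the tree of Fig.~\ref{fig:quintinary}(a) with its leaves replaced by binary trees of height~2, so $T_{c_3}$ already requires width~3 and cannot live in any two-column strip. The paper simply exhibits an explicit width-3 drawing; no generic strip construction is needed or available.

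For the lower bound, the information you propose to extract --- the ranks (hence minimum widths) of the child subtrees, via Lemma~\ref{lem:draw2rank} --- is not enough to force a contradiction, because knowing that a subtree occupies two columns says nothing about \emph{which column its root sits in}, and that is exactly what drives the paper's argument. The paper uses a stronger fact established in the proof of Theorem~\ref{thm:quaternary}: for the Fig.~\ref{fig:quintinary}(a)-type subtrees, any minimum-width drawing forces the subtree's root into a specific corner column. Concretely, $T_{c_1}$ and $T_{c_4}$ are single leaves, $T_{c_2}$ needs width~2 with its root forced to the left column, and $T_{c_3}$ needs width~3 with its root forced to the right column. If $u_r$ were in column~1, then $T_{c_3}$ (needing all three columns) has a point in column~1, and the curve from $u_r$ to that point blocks $T_{c_2}$ from column~3; so $T_{c_2}$ is drawn with width~2, its root $c_2$ lands in column~1, the segment $\overline{u_r c_2}$ is vertical, and the leaf $c_1$, which must attach to the left of that edge in the prescribed order, cannot be placed. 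The symmetric argument (with $c_3$ forced into column~3 and $c_4$ squeezed out) rules out column~3. Your ``monotone-angle plus case analysis on where the rank-2 subtrees live'' gestures at this kind of conflict but never identifies the forced-corner mechanism or the role of the extreme leaf children, and without those ingredients the case analysis cannot be closed.
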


The proofs of these theorems are in Appendix~\ref{ap:straight}.
The trees in these theorems are quaternary (i.e., all nodes have degree 4 
or less) and this is tight: any
ternary tree $T$ has a straight-line order-preserving drawings 
with the root in a corner and width $rpw(T)=R(T)$ \cite{OPTI}.

\begin{figure}[ht]
\begin{tabular}{ccccccc}
\includegraphics[width=12mm]{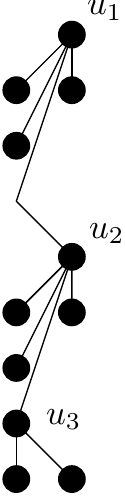} & \hspace*{5mm} &
\includegraphics[width=15mm,page=1]{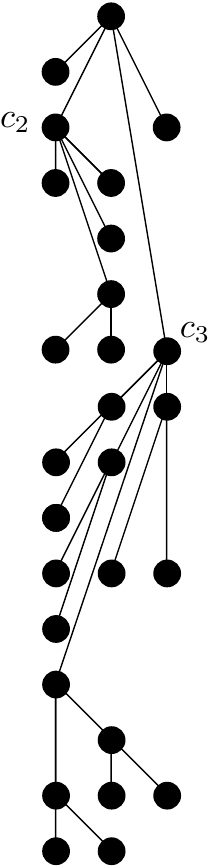} & \hspace*{5mm} &
\includegraphics[width=50mm,page=1,trim=0 60 0 0,clip]{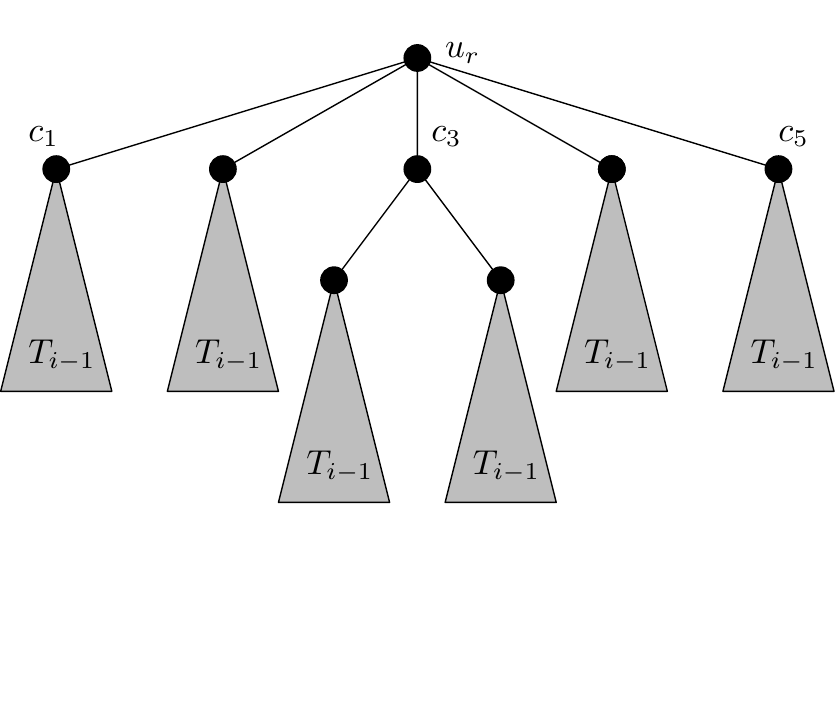} & \hspace*{5mm} &
\includegraphics[width=30mm,page=2,trim=20 0 90 0,clip]{quintinary.pdf} \\
(a) && (b) && (c) && (d) \\
\end{tabular}
\caption{(a) A tree that cannot be drawn straight-line with the same width.
(b) A tree that cannot be drawn straight-line with the root at the corner
and the same width.  (c) and (d): A tree where order-preserving drawings require
nearly twice as much width as unordered drawings.}
\label{fig:quaternary}
\label{fig:quintinary}
\label{fig:orderedNotStraight}
\end{figure}

%%%%%%%%%%%%%%%%%%%%%%%%%%%%%%%%%%%%%%%%%%%%%%%%%%%%%%%%%%%%%%%%%%%%%%%%
\section{Comparing rooted pathwidth and rank}

It is not hard to see (details are in the appendix) that any tree
has rooted pathwidth at most $\log(n+1)$ and rank at most $\log n+1$.
Since these two numbers are very close, one might wonder whether 
rooted pathwidth and rank are always within a constant of each other?
This is not the case:  The tree in Figure~\ref{fig:quintinary}(c) and (d)
has rooted pathwidth $i$, but rank $2i-1$ (see the appendix for a proof),
and so it requires almost twice as much width in an order-preserving drawing
compared to an unordered one.  This tree has degree 5; one
can show (see \cite{OPTI}) that for trees with degree at most 4 the two 
parameters coincide.

%%%%%%%%%%%%%%%%%%%%%%%%%%%%%%%%%%%%%%%%%%%%%%%%%%%%%%%%%%%%%%%%%%%%%%%%
\section{Conclusion}
\label{sec:conclusion}

In this paper, we gave two linear-time algorithms for tree drawings.  The first
finds a planar strictly-upward straight-line drawing, and the second finds
a planar strictly-upward poly-line drawing that respects the given order
of the children at all nodes.  Both algorithm achieve the optimal width among all such
drawings.    
Many open problems remain: 
\begin{itemize}
\item Can we compute {\em ideal drawings} of
	optimum width?  The examples of Section~\ref{sec:straight}
	suggest that this requires a different approach.
\item Can we find tree drawings 
	that have optimal {\em area}, or is this
	NP-hard?    
	(The question could be asked for many different types of drawings,
	such as order-preserving or not, or straight-line or not, upward
	or not.)
\item
	Can we at least prove the conjecture in \cite{DF14}
	that every tree has a strictly-upward
	straight-line order-preserving drawing of area $O(n\log n)$?
	The best currently known bound is $O(n4^{\sqrt{2\log n}})$ 
	\cite{Chan02} or $O(\Delta n\log n)$ for a tree
	with maximum degree $\Delta$ \cite{OPTI}.
\end{itemize}

%%%%%%%%%%%%%%%%%%%%%%%%%%%%%%%%%%%%%%%%%%%%%%%%%%%%%%%%%%%%%%%%%%%%%%%%
\bibliographystyle{plain}
\bibliography{../../bib/journal,../../bib/full,../../bib/gd,../../bib/papers}

\begin{appendix}
%%%%%%%%%%%%%%%%%%%%%%%%%%%%%%%%%%%%%%%%%%%%%%%%%%%%%%%%%%%%%%%%%%%%%%%%
\section{Rooted pathwidth and other parameters}
\label{sec:related}
%%%%%%%%%%%%%%%%%%%%%%%%%%%%%%%%%%%%%%%%%%%%%%%%%%%%%%%%%%%%%%%%%%%%%%%%

In this section we study more properties of the rooted pathwidth,
and in particular, relate it to some other graph parameters that
have been used for tree drawings.

\subsection{Logarithmic bound:}

\begin{lemma}
\label{lem:hpd}
Any tree $T$ with $rpw(T)=r$ has at least $2^r-1$ nodes and at
least $2^{r-1}$ leaves.  In particular, $rpw(T)\leq \log(n+1)$.
\end{lemma}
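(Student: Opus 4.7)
The plan is to prove both counts by structural induction on $T$ (equivalently, on $n$), with the ``in particular'' statement following immediately by rearranging $n\geq 2^r-1$. The base case $n=1$ is immediate: then $T$ is a single node, $r=rpw(T)=1$, and we have $1\geq 2^1-1$ nodes and $1\geq 2^0$ leaves. So assume $n\geq 2$ and the claim holds for all smaller trees.

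The core of the argument is the following dichotomy, which I would state and prove as the first substantive step: if $r=rpw(T)\geq 2$, then either
\emph{(i)} the root has two distinct children $c_1,c_2$ with $rpw(T_{c_i})\geq r-1$, or
\emph{(ii)} the root has a child $c_1$ with $rpw(T_{c_1})\geq r$.
To prove this, I would argue by contradiction: if neither case held, then at most one child $c^\star$ satisfies $rpw(T_{c^\star})=r-1$, and all other children have $rpw\leq r-2$, and no child reaches $r$. Choosing $c_h:=c^\star$ (or any child if no $c^\star$ exists) in Definition~\ref{def:rpw} yields
$$\max_{c}\bigl\{rpw(T_c)+\chi_{(c\neq c_h)}\bigr\}\;\leq\;\max\{r-1,(r-2)+1\}=r-1,$$
contradicting $rpw(T)=r$. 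This is the only real technical step; once it is in place, everything else is bookkeeping.

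Next, I would apply the inductive hypothesis to the relevant subtrees. In case \emph{(i)}, induction gives $|T_{c_i}|\geq 2^{r-1}-1$ nodes and $2^{r-2}$ leaves in each of $T_{c_1}$ and $T_{c_2}$, so
$$n\;\geq\;1+2(2^{r-1}-1)\;=\;2^r-1,$$
and the leaf count is $\geq 2\cdot 2^{r-2}=2^{r-1}$. In case \emph{(ii)}, say $rpw(T_{c_1})=r'\geq r$; induction on the strictly smaller tree $T_{c_1}$ gives $|T_{c_1}|\geq 2^{r'}-1\geq 2^r-1$ and at least $2^{r'-1}\geq 2^{r-1}$ leaves, so $T$ inherits both bounds (the root is not a leaf, since $r\geq 2$ forces it to have children). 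Finally, $n\geq 2^r-1$ rearranges to $r\leq \log(n+1)$.

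The main obstacle is really just the dichotomy claim: one must read the $\min$-$\max$ definition of $rpw$ carefully enough to see that if only one child had large rooted pathwidth and it was ``merely'' $r-1$, the formula would evaluate to at most $r-1$. Everything else is a routine doubling argument by induction on $n$.
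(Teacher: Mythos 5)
Your proof is correct and follows essentially the same route as the paper's: handle the single-node base case, then observe that either some child already has rooted pathwidth $r$ (and induction on that subtree suffices) or at least two children have rooted pathwidth at least $r-1$, and combine the two inductive bounds plus the root. The only difference is that you spell out, via the $\min$-$\max$ formula, why this dichotomy holds, which the paper asserts directly ``by definition of $rpw(T)$''.
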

\begin{proof}
Clearly this holds if $T$ is a single node and $r=1$, so assume the root
has children. If one child $c$ has $rpw(T_c)=r$, 
then the claim holds by induction for $T_c$ and hence also for $T$.  
Otherwise, by definition of $rpw(T)$ there must be at least two 
children $c_1,c_2$
with $rpw(T_{c_j})=r-1$ for $i=1,2$.  Applying induction to both and combining
the bounds (and adding the root) gives the result.
\end{proof}

This bound is tight for the complete binary tree with height $h$
(where a single-node tree is considered to have height $1$).
Such a tree has $n=2^h-1$ nodes and rooted pathwidth $h=\log(n+1)$.

\subsection{Root-to-leaf paths:}
Let $P$ be a root-to-leaf path in $T$, i.e., a path from the root to
some arbitrary leaf.  Removing $P$ splits $T$ into subtrees.  We now
claim that if we choose $P$ suitably, then all these subtrees have
smaller rooted pathwidth, and show:

\addtocounter{observation}{-1}
\begin{observation}
We have
$$
rpw(T) = \left\{
\begin{minipage}{70mm}
$
\begin{array}{ll}
1 & \mbox{if $T$ is a rooted path} \\
\min_{P} \max_{T' \subset T-P} \left\{ 1+ rpw(T')\right\} & \mbox{otherwise}
\end{array}
$
\end{minipage}
\right.
$$
\end{observation}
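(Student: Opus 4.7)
The plan is to prove $rpw(T) = f(T)$ by induction on $|T|$, where $f(T)$ denotes the claimed right-hand side. If $T$ is a rooted path, its root has a unique child $c_1$, and Definition~\ref{def:rpw} reduces to $rpw(T) = rpw(T_{c_1})$. Since $T_{c_1}$ is itself a smaller rooted path, induction (with base case the single node) gives $rpw(T) = 1 = f(T)$. So assume henceforth that $T$ is not a rooted path, whence $T - P$ is non-empty for every root-to-leaf path $P$.

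For $rpw(T) \leq f(T)$, I would fix a root-to-leaf path $P$ achieving the minimum in $f(T)$ and let $c_h$ be the unique child of the root lying on $P$. I then verify that $c_h$ qualifies as an rpw-heaviest child in Definition~\ref{def:rpw}. Every sibling $c \neq c_h$ has $T_c$ intact as a subtree of $T - P$, giving $rpw(T_c) + 1 \leq f(T)$. For $c_h$ itself, either $T_{c_h}$ is a rooted path (so $rpw(T_{c_h}) = 1 \leq f(T)$), or $P \cap T_{c_h}$ is a root-to-leaf path in $T_{c_h}$ whose induced subtrees form a subset of those of $T - P$, so $f(T_{c_h}) \leq f(T)$, and the induction hypothesis gives $rpw(T_{c_h}) \leq f(T)$. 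Thus Definition~\ref{def:rpw} delivers $rpw(T) \leq f(T)$.

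For $f(T) \leq rpw(T)$, I would take an rpw-heaviest child $c_h$ from Definition~\ref{def:rpw} and build $P$ by prepending the root to a root-to-leaf path of $T_{c_h}$: the unique such path if $T_{c_h}$ is a rooted path, or one witnessing $f(T_{c_h})$ otherwise. The subtrees of $T - P$ then split into the $T_c$ for $c \neq c_h$ (each satisfying $1 + rpw(T_c) \leq rpw(T)$ directly from Definition~\ref{def:rpw}) and the subtrees of $T_{c_h} - (P \cap T_{c_h})$ (each satisfying $1 + rpw(T') \leq f(T_{c_h}) = rpw(T_{c_h}) \leq rpw(T)$ by induction), giving $f(T) \leq rpw(T)$. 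The only mildly delicate point is identifying the subtrees of $T - P$ lying inside $T_{c_h}$ with those of $T_{c_h} - (P \cap T_{c_h})$, which is immediate once one notes that $P$ enters $T_{c_h}$ at $c_h$ and never leaves, so I foresee no substantive obstacle.
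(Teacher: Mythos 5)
Your proposal is correct and follows essentially the same route as the paper: the direction $rpw(T)\leq f(T)$ uses the minimizing root-to-leaf path and bounds the child $c_h$ on it by induction (exactly as the paper does), while your $f(T)\leq rpw(T)$ direction builds the path through the rpw-heaviest child one level at a time with induction, which is just a repackaging of the paper's explicit ``always follow the rpw-heaviest child'' path together with its monotonicity observation. The only cosmetic remark is that you need not ``verify that $c_h$ is rpw-heaviest''---using it as one candidate in the minimum of Definition~\ref{def:rpw} already gives the required upper bound.
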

\begin{proof}
We show `$\geq$' by induction on the height of the tree.  Clearly the
claim holds for a single-node tree, so assume the root has children.
Let $P$ be the path
obtained by going from the root to the rpw-heaviest child,
and from there to its rpw-heaviest child, etc., until we reach a leaf.
Any subtree $T'$ of $T-P$ then corresponds to tree $T_c$ for a node $c$ 
which is not on $P$, but its parent $v$ is on $P$.  Since $c$ was not the
rpw-heaviest child of $v$, we have $rpw(T_c)< rpw(T_v)\leq rpw(T)$, hence
$\allowbreak \max_{T' \subset T-P} \left\{ 1+ rpw(T')\right\}\leq rpw(T)$.
The minimum over all choices of path can only be smaller. 

For the other direction, let $P$ be the path that minimizes
$r:=\allowbreak \max_{T' \subset T-P} \{ 1\allowbreak +$
$rpw(T')\}$,
and let $c_h$ be the child of the  root that belongs to $P$.  
Then any child $c\neq c_h$
of the root gives rise to a subtree $T'=T_c$ of $T-P$, hence 
$1+rpw(T_c)\leq r$.  Also, $rpw(T_{c_h})\leq r$ by induction,
since $P$ (minus the root) can be used as a path for $T_{c_h}$.
Therefore $\max_{c} \left\{ rpw(T_c)+\chi(c\neq c_h) \right\}\allowbreak\leq r$
and the minimum over all choices of $c_h$ can only be smaller.
\end{proof}

\subsection{Pathwidth:}
The {\em pathwidth} $pw(G)$ of a graph $G$ is a well-known graph parameter;
it is the smallest integer $k$ such that $G$ is a subgraph of a 
$(k+1)$-colorable interval graphs.  For trees, the pathwidth can also be 
described via a decomposition into paths; see \cite{EST94,Sud04}.  Namely
$$
pw(T) = \left\{
\begin{minipage}{70mm}
$
\begin{array}{ll}
0 & \mbox{if $T$ is a single node} \\
\min_{P} \max_{T' \subset T-P} \left\{ 1+pw(T')\right\} & \mbox{otherwise}
\end{array}
$
\end{minipage}
\right.
$$
where the minimum is taken over all paths $P$.  
As in \cite{Sud04} we
call the path $P$ where the minimum is achieved 
the {\em main path}.
Note that the recursive
formula is the same as in Observation~\ref{obs:rpw}, except
that the path $P$ is not restricted to end at the root.  
A simple proof by induction hence shows that $pw(T)\leq rpw(T)$.
At the other end, we can show:

\begin{lemma}
%[Lemma~\ref{lem:pw_rpw}]
For any rooted tree $T$, we have $rpw(T) \leq 2pw(T)+1$.
\end{lemma}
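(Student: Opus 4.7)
The plan is to induct on the number of nodes $n$ of $T$. The base case ($T$ a single node) is immediate, since $rpw(T)=1$ and $pw(T)=0$. For the inductive step, assume $T$ is not a rooted path (otherwise $rpw(T)=1 \leq 2\,pw(T)+1$ trivially) and pick a main path $P$ realising $pw(T)$, extended if necessary so that both endpoints of $P$ are leaves (this does not increase $pw(T)$, since any subtree hanging off an endpoint was previously one component of $T-P$ of pathwidth at most $pw(T)-1$, and replacing it by its children's subtrees only shrinks them). Let $v$ be the unique topmost node on $P$, split $P$ at $v$ into two subpaths $P_1 \ni v$ and $P_2$ (the latter possibly empty, if $v$ is an endpoint of $P$), and let $Q$ be the path from the root of $T$ to $v$. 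The candidate root-to-leaf path for applying Observation~\ref{obs:rpw} to $T$ is $R := Q \cup P_1$.

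Each connected component of $T - R$ is a subtree $T_c$ rooted at some child $c$ of a node of $R$ with $c \notin R$. For every such $c$ other than the child $w$ of $v$ lying on $P_2$, the subtree $T_c$ is entirely disjoint from $P$; hence it is also a subtree of $T - P$, so $pw(T_c) \leq pw(T) - 1$, and the inductive hypothesis yields $rpw(T_c) \leq 2\,pw(T_c) + 1 \leq 2\,pw(T) - 1$.

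The main obstacle is the exceptional component $T_w$, for which only $pw(T_w) \leq pw(T)$ is guaranteed; invoking the inductive hypothesis on $T_w$ naively would give $rpw(T_w) \leq 2\,pw(T)+1$ and cost one extra column overall. To save that column, my plan is to apply Observation~\ref{obs:rpw} a second time \emph{inside} $T_w$, using $P_2$ itself as the root-to-leaf path (note that $P_2$ starts at the root $w$ of $T_w$ and ends at a leaf of $T$, which is also a leaf of $T_w$). Every subtree of $T_w - P_2$ is again a subtree of $T - P$, so by the inductive hypothesis its rooted pathwidth is at most $2\,pw(T) - 1$, and hence $rpw(T_w) \leq 1 + (2\,pw(T) - 1) = 2\,pw(T)$. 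Combining both cases, every component of $T - R$ has rooted pathwidth at most $2\,pw(T)$, so Observation~\ref{obs:rpw} applied to $T$ with the root-to-leaf path $R$ closes the induction with $rpw(T) \leq 1 + 2\,pw(T)$.
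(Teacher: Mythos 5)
Your argument is essentially the paper's own proof: the same normalization of the main path $P$, the same split at its topmost node $v$ into $P_1$ and $P_2$ with the root-to-leaf path $Q\cup P_1$, and the same second application of Observation~\ref{obs:rpw} inside the exceptional component using $P_2$ to avoid losing a column there. One small inaccuracy: you cannot always extend $P$ so that \emph{both} endpoints are leaves -- for example, in a caterpillar whose spine starts at the root, every main path must have the root as an endpoint, and the root has a child but no parent, so no extension to a leaf exists at that end. The correct normalization (as stated in the paper) is that each end of $P$ lies at the root or at a leaf; this costs nothing in your argument, because the only end that can be stuck at the root is the topmost node $v$ itself, in which case $P_2$ is empty and nothing in your proof requires that end to be a leaf.
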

\begin{proof}
%If $pw(T)=1$, then $pw(T)=1\leq rpw(T)$.
%If $pw(T)\geq 2$, then the recursive formula for the pathwidth
%allows to use any path $P$ while the one for rooted pathwidth forces
%$P$ to begin at the root; hence $pw(T)\leq rpw(T)$ follows easily
%by induction.
%
%The proof of $rpw(T)\leq 2pw(T)+1$ 
This was essentially shown by Suderman
\cite{Sud04} (he also gives credit to Dujmovi\'{c} and Wood)
without using the term ``rooted pathwidth''.  In the
second half of the proof of his Lemma 7, he creates tree-drawings
of height at most $2pw(T)$.  An inspection of the construction
shows that it gives upward drawing after $90^\circ$ rotation,
except at subtrees with pathwidth 1 (which could be drawn upright if
we allowed one extra unit.)  By Lemma~\ref{lem:lower} hence
$rpw(T)\leq 2pw(T)+1$.

For completeness' sake, we give here an independent proof of this result,
using the same idea as implicit in Suderman's algorithm \cite{Sud04}.
If $pw(T)=0$, then $T$ is a single node and $rpw(T)=1=2pw(T)+1$,
so the claim holds.  If $pw(T)\geq 1$, then let $P$ be a main path of $T$.
See also Fig.~\ref{fig:tree_paths}.  We may, after expanding $P$ if
needed, assume that the ends of $P$ are at the root or at a leaf.
Let $v$ be the node of $P$ that is closest to the root, and write
$P=P_1-v-P_2$ for two paths $P_1$ and $P_2$.  By definition any
subtree $T'$ of $T-P$ has $pw(T')\leq pw(T)-1$ and therefore
$rpw(T')\leq 2pw(T)-1$.

Let $P_0$ be the path from the root to $v$.
Let $P':=P_0-v-P_1$ consists of
the path from the root to $v$, followed by one part of the main path of $T$.
We use $P'$ as the path in Observation~\ref{obs:rpw}, and hence must
study the rooted pathwidth of 
any subtree $T'$ of $T-P'$.  If $T'$ is also a subtree of $T-P$,
then as argued above $rpw(T')\leq 2pw(T)-1$.  If $T'$ is not a 
subtree of $T-P$, then $T'$ necessarily must contain $P_2$; call
this subtree $T_2$.

One can show that $rpw(T_2)\leq 2pw(T)$ as follows.  Use path $P_2$
as the path in Observation~\ref{obs:rpw}; we hence must study the
rooted pathwidth of any subtree $T''$ of $T_2-P_2$.  But any such subtree
contains no nodes of $P$ and hence is a subtree of $T-P$.  By 
the above discussion therefore $rpw(T'')\leq 2pw(T)-1$.  Therefore
$rpw(T_2)\leq \max_{T''} \left\{ 1+rpw(T'') \right\} \leq 2pw(T)$.

Putting it all together, we know that $rpw(T')\leq 2pw(T)$ for all
subtrees $T'$ of $T-P$, and by Observation~\ref{obs:rpw} therefore
$rpw(T)\leq 2pw(T)+1$.
\end{proof}

\begin{figure}[ht]
\hspace{\fill}
\includegraphics[height=50mm]{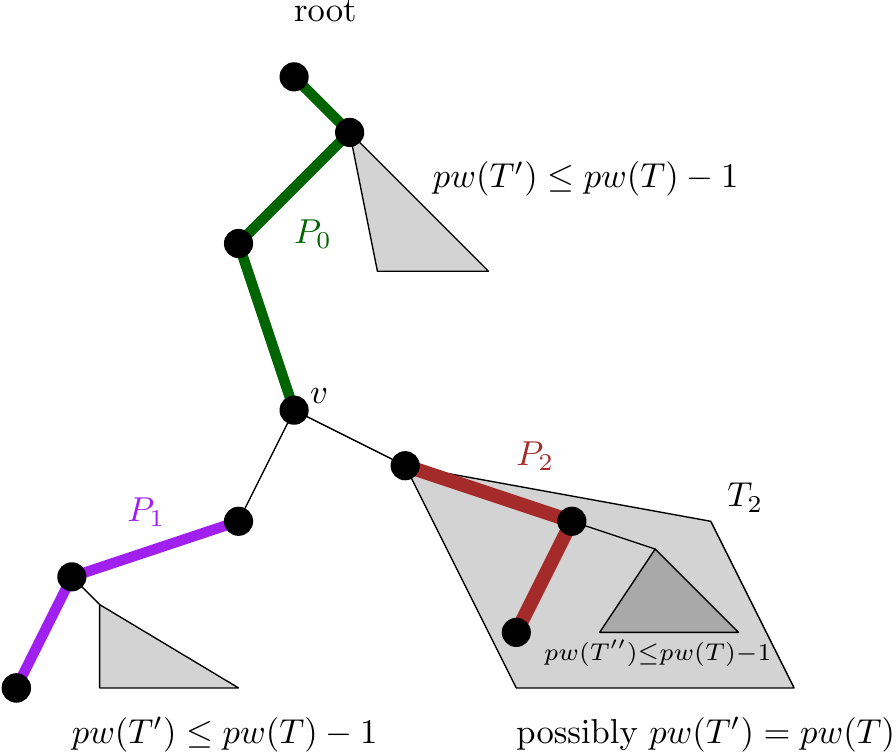}
\hspace{\fill}
\caption{The main path $P_1-v-P_2$ can be used to show $rpw(T_2)\leq 2pw(T)$
and therefore $rpw(T)\leq 2pw(T)-1$.}
\label{fig:tree_paths}
\end{figure}

\subsection{Heavy-path decompositions:}

The {\em heavy-path decomposition}, first introduced by Sleator and
Tarjan \cite{ST83}, is a method of splitting a tree into
paths such that any root-to-leaf path encounters $O(\log n)$
of these paths.  Let the {\em size-heaviest} child of the root be the child
whose subtree contains the most nodes (breaking ties arbitrarily).
The {\em heaviest path} is obtained by going from the root to a leaf
by always going to the size-heaviest child.  If we remove the heaviest path
and recurse in the children, then after some number of recursions the
remaining tree is empty; this number of recursions is called the 
{\em heaviest-path depth} (and denoted $hpd(T)$).   Formally,
$$
hpd(T) = \left\{
\begin{minipage}{50mm}
$
\begin{array}{ll}
1 & \mbox{if $T$ is a single node} \\
\max_{c} \left\{ hpd(T_c) + \chi_{c\neq c_h}\right\} & \mbox{otherwise}
\end{array}
$
\end{minipage}
\right.
$$
where the maximum  is taken over all children $c$ of the root, and
$c_h$ is the size-heaviest child.  
%It is well-known
%that $hpd(T)\leq \log(n+1)$, because all children that are not the
%size-heaviest child have at most $(n-1)/2$ nodes.
Note that the recursive formula is very similar to, but more restrictive,
than the one in Definition~\ref{def:rpw}; by induction one easily shows
that $rpw(T)\leq hpd(T)$ for all rooted trees $T$.
%
%For some trees, the rooted pathwidth is significantly smaller
%than the heaviest-path depth.
This is far from tight for some trees.

\begin{lemma}
%[Lemma~\ref{lem:rpw_heavy}]
There exists an infinite number of binary trees $T$ with $rpw(T)=2$
and $hpd(T)\in \Omega(\log n)$.
\end{lemma}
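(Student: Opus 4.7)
The plan is to exhibit, for each integer $k\ge 2$, a binary tree $T_k$ with $rpw(T_k)=2$, $hpd(T_k)=k$, and $|T_k|=\Theta(2^k)$; this yields the desired infinite family, since then $hpd(T_k)=k=\Omega(\log |T_k|)$. The intuition is that the rpw-optimal spine and the heavy-path spine should disagree: one branch of the root will be a copy of $T_{k-1}$ while the other is a long pendant path, chosen to be slightly heavier. The rpw spine can descend through the $T_{k-1}$-branch and peel off every pendant path at once, while the size-based heavy-path algorithm is forced into the pendant branch at every level, making only one unit of progress before recursing into $T_{k-1}$.

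Concretely, I would define $T_1$ to be the single-node tree, and for $k\ge 2$ let $T_k$ have a root $r_k$ with two children $a_k$ and $b_k$, where the subtree at $a_k$ is a copy of $T_{k-1}$ and the subtree at $b_k$ is a path on $|T_{k-1}|+1$ nodes. A straightforward induction shows that $T_k$ is binary and $|T_k|=2|T_{k-1}|+2=3\cdot 2^{k-1}-2$.

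For $rpw(T_k)=2$, I would apply Observation~\ref{obs:rpw} using the root-to-leaf path $P_k$ that always descends to the $a$-child; removing $P_k$ leaves each $T_{b_j}$ (for $2\le j\le k$) as a component, and each is a rooted path and hence has rooted pathwidth $1$. This gives $rpw(T_k)\le 2$, and since $T_k$ is not itself a rooted path for $k\ge 2$, we conclude $rpw(T_k)=2$.

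For $hpd(T_k)=k$, the key observation is that $|T_{b_k}|=|T_{k-1}|+1 > |T_{k-1}|=|T_{a_k}|$, so the size-heaviest child of $r_k$ is unambiguously $b_k$; this is the one subtle point, and I sidestep the tie-breaking by making $T_{b_k}$ strictly heavier than $T_{a_k}$ rather than tied. Since $T_{b_k}$ is a path we have $hpd(T_{b_k})=1$, and by induction $hpd(T_{a_k})=hpd(T_{k-1})=k-1$, so the recursive definition yields $hpd(T_k)=\max\{hpd(T_{b_k}),\,hpd(T_{a_k})+1\}=\max\{1,k\}=k$. Combined with $|T_k|=\Theta(2^k)$, this gives $hpd(T_k)=\Theta(\log |T_k|)$, completing the proof.
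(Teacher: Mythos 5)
Your proposal is correct and follows essentially the same construction as the paper: a root whose one subtree is a copy of the previous tree and whose other subtree is a pendant path on $|T_{k-1}|+1$ nodes, with the rpw-path descending through the recursive branch while the heavy path is forced into the pendant path at every level. Your added remarks (the explicit lower bound $rpw(T_k)\geq 2$ and the strict-heaviness observation avoiding tie-breaking) are fine refinements of the same argument, so nothing further is needed.
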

\begin{proof}
Let $T_1$ be a single node.  For $i>1$, let $T_i$ consist of a root
with left subtree $T_{i-1}$ and right subtree a rooted
path of length $|T_{i-1}|+1$.  Clearly $rpw(T_i)=2$, using as path
for Observation~\ref{obs:rpw} the one obtained by always going left,
since the right subtrees are rooted
paths and hence have rooted pathwidth 1.
But the right child is the size-heaviest
child, and therefore $hpd(T_i)= 1+hpd(T_{i-1}) = i$.
Since $|T_i|=2|T_{i-1}|+2 = \frac{3}{2}2^i - 2$, the
result follows.
\end{proof}

The algorithm of Crescenzi et al.~\cite{CDP92}, which inspired
our Lemma~\ref{lem:upper}, works by using the size-heaviest
child as $c_1$, i.e., as the child to be drawn using the full width.
For the above tree, their algorithm hence would use width $\Theta(\log n)$,
whereas our variation that uses the rpw-heaviest child as $c_1$ achieves width 2.

\section{Finding right-corner-$W$-witnesses}

Algorithm~\ref{algo:testRight} gives the algorithm to find right-corner-$W$-witnesses.
We also state the lemmas that show its correctness; their proofs mirror
the ones of Lemma~\ref{lem:testSuccess} and \ref{lem:testFailure} and are left
to the reader.

\begin{algorithm}[ht]
\begin{tabbing}
\hspace*{2cm}\=\hspace*{2cm}\=\kill
// $T$ is a tree with children $c_1,\dots,c_d$, $d\geq 1$, $W\geq 1$ \\
Let $i$ be the minimal index such that $R(T_{c_i})\geq W$ \\
\IF{no such $i$ exists}  \RETURN ``success'' \\%// See Observation~\ref{obs:W+1} \\
\IF{$R(T_{c_i})>W$} \RETURN ``failure'' \\%// See Observation~\ref{obs:W} \\
Now $c_i$ is the leftmost child with $R(T_{c_i})=W$. \\
Initialize $\sigma(W)$ to be $i$, $w$ to be $W$ and increase $i$ \\
\LOOP \\
\>\WHILE{$i<=d$ \AND $R(T_{c_i})\leq w-2$} increase $i$ \\
\>\IF{$i==d+1$} set $W':=w$ and \RETURN ``success'' \\
\>\IF{$R(T_{c_i})\geq w$} set $W':=w$ and \RETURN ``failure'' \\
\>Now $c_i$ is a child with $R(T_{c_i})=w-1$ and $i>\sigma(w)>\dots>\sigma(W)$\\
\>Set $\sigma(w-1)$ to be $i$, decrease $w$ and increase $i$. \\
\ENDLOOP 
\end{tabbing}
\vspace*{-3mm}
\caption{{\sc TestRight}$(T,W)$}
\label{algo:testRight}
\end{algorithm}

\begin{lemma}
\label{lem:testSuccessR}
Assume algorithm {\sc TestRight} returns with ``success''.  Then $T$ has
a right-corner-$W$-witness.
\end{lemma}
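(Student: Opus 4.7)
The plan is to mirror the proof of Lemma~\ref{lem:testSuccess} line by line, swapping ``left'' and ``right'' throughout and reversing the scan direction accordingly. First I would isolate the two ways in which {\sc TestRight} can return ``success'': (i) no child of the root has rank $\geq W$, so the very first step already succeeds, or (ii) the main loop terminates after reaching $i=d+1$, having produced a sequence of indices $\sigma(W),\sigma(W-1),\dots,\sigma(W')$.

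For situation (i) I would invoke the right-corner analogue of Observation~\ref{obs:W+1}, which is immediate by symmetry: since every child has rank at most $W-1$, one can take $W'=W+1$ together with the empty sequence of $\sigma$'s, so (C1) holds vacuously and (C2) holds because every child has rank at most $W-1$.

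For situation (ii) I would verify (C1) and (C2) for the sequence the algorithm produces. Condition (C1), $R(T_{c_{\sigma(w)}})=w$ for every $w\in\{W',\dots,W\}$, is immediate from the assignment rules: $\sigma(W)$ is set only when the current $c_i$ has rank exactly $W$ (the earlier failure branch having excluded strictly larger rank), and each later $\sigma(w-1)$ is set only when the current $c_i$ has rank exactly $w-1$ (the failure branch excludes rank $\geq w$ and the inner {\bf while} loop already excluded rank $\leq w-2$). For (C2) I would look at any child $c_i$ that was not chosen as some $\sigma(w)$: either $c_i$ lies on the outside of $c_{\sigma(W)}$ in the scan order, in which case $R(T_{c_i})\leq W-1$ because $\sigma(W)$ was picked as the \emph{minimal} index with rank $\geq W$; or $c_i$ lies strictly between two consecutive $\sigma$-values, in which case the inner {\bf while} loop must have skipped it, which happens precisely when $R(T_{c_i})\leq w-2$ for the then-active $w$ — exactly the bound (C2) demands for children lying between $c_{\sigma(w)}$ and $c_{\sigma(w-1)}$.

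The main thing requiring care is the bookkeeping: confirming that the left-to-right scan of {\sc TestRight} produces the $\sigma$-sequence in the order expected by the definition of a right-corner-$W$-witness (the mirror of the order produced by the right-to-left scan in {\sc TestLeft}). No genuinely new idea is required beyond what already appears in the proof of Lemma~\ref{lem:testSuccess}.
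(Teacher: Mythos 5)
Your proposal is correct and is exactly the argument the paper intends: the paper explicitly leaves Lemma~\ref{lem:testSuccessR} to the reader as the mirror image of Lemma~\ref{lem:testSuccess}, and your case split (no child of rank $\geq W$ versus loop termination at $i=d+1$), your verification of (C1) from the failure branches and the inner \textbf{while} loop, and your verification of (C2) for skipped children match that mirrored proof. The bookkeeping point you flag is the only subtlety, and it resolves as you expect: {\sc TestRight} assigns $\sigma(W)$ at the leftmost rank-$W$ child and subsequent $\sigma$'s at larger indices, which (together with the mirrored sentinel conventions, children left of $c_{\sigma(W)}$ having rank at most $W-1$ and children right of $c_{\sigma(W')}$ having rank at most $W'-2$) is precisely the index order a right-corner-$W$-witness requires.
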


\begin{lemma}
\label{lem:testFailureR}
Assume algorithm {\sc TestRight} returns with ``failure''.  Then $T$ has
no rank-$W$-witness with $v=d$.
\end{lemma}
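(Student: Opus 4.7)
The plan is to mirror the proof of Lemma~\ref{lem:testFailure} almost verbatim, swapping ``left'' and ``right'' and using (R2$\ell$) in place of (R2r). The algorithm {\sc TestRight} fails in one of two situations, and I would handle each separately. In the first, the leftmost child with rank at least $W$ actually has rank strictly greater than $W$; then Observation~\ref{obs:W} already rules out any rank-$W$-witness, and in particular any with $v = d$.

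In the second situation, the loop terminates at some iteration with $i \le d$ and $R(T_{c_i}) \ge w$, by which point the algorithm has defined indices $\sigma(W) < \sigma(W-1) < \cdots < \sigma(W')$ (setting $W' := w$) satisfying $R(T_{c_{\sigma(j)}}) = j$ for $W' \le j \le W$, and $\sigma(W') < i$. Here I would assume for contradiction that a rank-$W$-witness with $v = d$ exists. Since every index $\sigma(j)$ is at most $d - 1$, all the $\sigma$-children lie strictly left of $c_v = c_d$, so (R2$\ell$) is the condition governing which of them can be small. The key step is then a left-to-right induction: if $c_{\sigma(W)}, \ldots, c_{\sigma(W-k+1)}$ have already been shown big, then $c_{\sigma(W-k)}$ has at least $k$ big children to its left, so (R2$\ell$) would force a small $c_{\sigma(W-k)}$ to have rank at most $X - 1 - k \le W - 1 - k$, contradicting its rank $W - k$. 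Hence all of $c_{\sigma(W)}, \ldots, c_{\sigma(W')}$ turn out to be big. For $c_i$, when $i < d$ the analogous (R2$\ell$) argument using the $W - W' + 1$ big $\sigma$-children to its left rules out smallness; when $i = d$ we have $c_i = c_v$, which is big by the definition of the vertical child. Either way $c_i$ ends up big.

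This produces $W - W' + 2$ distinct big children whose ranks are all at least $W'$. The final step is to note that (R3) would require each of them to receive a distinct rank-bound in $\{1, \ldots, W\}$ that is at least its rank, and so lies in the $(W - W' + 1)$-element set $\{W', \ldots, W\}$; pigeonhole makes this impossible, contradicting the assumed rank-$W$-witness. The only genuinely new ingredient beyond the left-right mirror is the edge case $i = d$, where the index examined by the algorithm happens to coincide with $v$; this is the step I expect to require the most care, but it is handled cleanly because $c_v$ is big by definition rather than by any appeal to (R2$\ell$).
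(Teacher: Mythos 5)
Your proof is correct and is exactly the mirror of the paper's proof of Lemma~\ref{lem:testFailure} (left/right swapped, (R2$\ell$) in place of (R2r)), which is precisely the argument the paper leaves to the reader for this lemma. Your explicit treatment of the edge case $i=d$ (where $c_i=c_v$ is big by definition) is a welcome bit of care that the paper's left-hand proof glosses over for the analogous case $i=1$.
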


\section{Straight-line drawings}
\label{ap:straight}

Now we give the proof of Theorem~\ref{thm:quaternary},
which states that the tree $T$ in Figure~\ref{fig:quintinary}(a)
needs strictly more width in a straight-line order-preserving 
drawing than in a poly-line drawing.

\begin{proof}
The figure shows a poly-line drawing of $T$ with width~2.
Observe that $u_3$ has rank 2 since $u_3$ has two children
of rank 1.  Therefore the rank-sequence of the children of $u_2$
contains $1,1,2$ as a subsequence. Applying algorithm {\sc TestLeft}(2)
shows that therefore $u_2$ has no left-corner-$2$-witness. 
Likewise $u_1$ has no left-corner-$2$-witness since $u_2$ has
rank 2 and so the ranks of children of $u_1$ include $1,1,2$ as a subsequence.
By Lemma~\ref{lem:equivalent} therefore $u_i$ (for $i=1,2$)
does not have a rank-2-witness with $X=1$.  By
Lemma~\ref{lem:draw2rank} therefore no drawing of $T_{u_i}$ of
width 2 has $u_i$ in column 1.

Fix an arbitrary upward order-preserving drawing $\Gamma$ 
of $T$ of width 2.  For $i=1,2$, the induced drawing of $T_{u_i}$ has also 
width 2, and by the above $u_i$ must be drawn in column 2.  
This drawing cannot be straight-line, else
$\overline{u_1 u_2}$ would be vertical, making it impossible to draw
the rightmost child of $u_1$ while preserving the order.  So
any such drawing of width 2 contains bends.
\end{proof}

If we replace any leaf in $T$ with a subtree that
requires width $W-1$ (e.g.~a binary tree of height $W-1$), then
much the same proof shows that this tree has a poly-line drawing
of width $W$, but no straight-line drawing of width $W$.

\medskip
%In summary:  The optimal width for straight-line order-preserving drawings
%is sometimes larger than for poly-line order-preserving drawings.
%One can show that the two optimum widths are within a factor of
%2, and coincide for ternary trees; see \cite{OPTI}.

Now we give the proof of Theorem~\ref{thm:not_corner},
which states that in an optimum-width straight-line order-preserving drawing 
 of the tree $T$ in Figure~\ref{fig:quintinary}(b),
the root cannot be in the middle.

\begin{proof}
The root of tree $T$ has four children $c_1,c_2,c_3,c_4$.
$T_{c_1}$ and $T_{c_4}$ are single nodes.  $T_{c_2}$ is a symmetric
version of the tree in Figure~\ref{fig:quintinary}(a), hence it
requires width 2, and in any width-2 drawing the root must be
in the top-left corner.  $T_{c_3}$ is the tree in Figure~\ref{fig:quintinary}(a)
with leaves replaced by binary trees of height 2; hence it requires width 3,
and in any width-3 drawing the root must be in the top-right corner.

Fig.~\ref{fig:orderedNotStraight}(right) shows a
straight-line drawing with width 3.  Presume we had a straight-line
drawing of $T$ of width 3 where the root $u_r$ is in the top left corner.  
Since $T_{c_3}$ requires width 3, it contains a point $p_3$ in column 1. 
The poly-line from
$u_r$ to $p_3$ blocks $T_{c_2}$ from using column 3, so $T_{c_2}$ must
be drawn with width 2 and hence $c_2$ is in column 1.  Now the 
straight-line segment $\overline{u_r c_2}$ is vertical and $c_1$ cannot
be drawn.  Likewise, if $u_r$ is in the top right corner, then (since
$c_3$ must be in column 3) the straight-line segment $\overline{u_r c_3}$
prevents $c_4$ from being drawn.  Thus the root cannot be in a corner.
\end{proof}

\section{Bounds on the rank}

The algorithm implicit in Lemma~\ref{lem:corner2draw}
draws trees upward and order-preserving with optimal
width, but how big is this width?  We know $R(T)\in O(\log n)$
from Chan's work \cite{Chan02}.  The complete binary
tree has $R(T)\geq \log(n+1)$, so asymptotically this is tight.  We now 
show that the lower bound is in fact tight up to a small additive constant.  

\iffalse
We first need another sufficient condition for a tree to have small rank.

\begin{lemma}
\label{lem:gap_W}
Assume that $T$ is a tree with $n\geq 2$ nodes.  Assume further that there 
exists a $W\geq 1$ and a $W_0$ with $0\leq W_0\leq W$
such that the root has no child with rank $W_0$, and exactly one child each
with rank $W_0+1,W_0+2,\dots,W$.
Then $R(T)\leq W$.
\end{lemma}
\begin{proof}
Construct a rank-$W$-witness as follows.  Set $X=1$ and $v=1$.  
Let the big children be those whose rank exceeds $W_0$,
plus $c_1$ if not included already.  
(R3) holds,
since every big child has a different rank.  
(R1$\ell$) holds because there are
no children strictly left of $c_1$.  (R1r) holds because there are
at most $W-W_0$  big children that are different from $c_1$ and can be
strictly to its right.  If $W_0\geq 1$, then 
$W-W_0\leq W-1=W-X$ as desired.
If $W_0=0$, then $c_1$ is included in the initial set of
$W$ big children, and so at most $W-1=W-X$ big children can be strictly
right of $c_1$.

(R2$\ell$) holds trivially
because no children are strictly left of $c_v=c_1$.  For (R2r) 
%trivially if $W_0=0$ since then there are no small children.  
%If $W_0\geq 1$ then 
observe as above that there are at most $W-W_0$ big
children different from $c_1$. Hence for any 
small child $c_i$ we have $r_i\leq W-W_0$,  and by definition of small
children $R(T_{c_i})\leq W_0-1$.
Thus $W-X-r_i \geq W-X-(W-W_0) =W_0-X =W_0-1 \geq R(T_{c_i})$ as desired.
\end{proof}
\fi

\begin{lemma}
\label{lem:rank_bound}
Any $n$-node tree $T$ has $R(T)\leq \log n + 1$.
\end{lemma}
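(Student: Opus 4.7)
The plan is to strengthen the statement to $n\geq 2^{R(T)-1}$ and to proceed by induction on $n$.  The base case $n=1$ is immediate since $R(T)=1$.  For the inductive step, I set $W=R(T)\geq 2$ and analyze the children $c_1,\dots,c_d$ of the root.  If some child $c$ satisfies $R(T_c)=W$ (by Observation~\ref{obs:W} at most one does), then the inductive hypothesis applied to $T_c$ gives $|T_c|\geq 2^{W-1}$ and hence $n>2^{W-1}$.

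In the remaining situation every child has rank at most $W-1$.  If every child had rank at most $W-2$, then Observation~\ref{obs:W+1} would furnish a left-corner-$(W-1)$-witness and hence $R(T)\leq W-1$ by Lemma~\ref{lem:equivalent}, a contradiction; so at least one child has rank exactly $W-1$.  If two or more children do, then induction applied to each yields $n\geq 1+2\cdot 2^{W-2}=1+2^{W-1}$ and we are done.  I may therefore assume that exactly one child $c$ has $R(T_c)=W-1$.  Writing $T\setminus\{c\}$ for the tree obtained by deleting the subtree $T_c$, I will prove that $R(T\setminus\{c\})\geq W-1$; granted this, the inductive hypothesis applied separately to $T_c$ and to $T\setminus\{c\}$ yields $n=|T_c|+|T\setminus\{c\}|\geq 2\cdot 2^{W-2}=2^{W-1}$.

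The main obstacle is to establish $R(T\setminus\{c\})\geq W-1$, which I would handle by contradiction via a witness-lifting argument.  Assuming $T\setminus\{c\}$ admits a rank-$(W-2)$-witness with coordinate $X'$ and vertical child $v'$, I would construct a rank-$(W-1)$-witness of $T$ by declaring $c$ to be an additional big child with rank-bound $\pi(c)=W-1$, keeping $v=v'$, and choosing $X=X'+1$ if $c$ lies strictly left of $v'$ in $T$ and $X=X'$ otherwise.  The shift in $X$ compensates for the single extra big child now sitting to the left of the vertical child (no adjustment is needed when $c$ is added on the right), so a short case analysis on the relative positions of $c$, $v'$, and each small child verifies (R1$\ell$), (R1r), (R2$\ell$), and (R2r); condition (R3) is immediate because the new rank-bound $W-1$ is distinct from the old ones, which lie in $\{1,\dots,W-2\}$.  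The resulting rank-$(W-1)$-witness for $T$ contradicts $R(T)=W$, completing the induction.
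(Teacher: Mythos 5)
Your argument is correct in substance but takes a genuinely different route from the paper's. Both proofs boil down to showing that a tree of rank $W$ has at least $2^{W-1}$ nodes, but the paper argues on a node-minimal tree of rank $W+1$: it takes $W^*$ to be the largest value not realized by exactly one child, rules out the possibility of zero rank-$W^*$ children by running {\sc TestLeft} (such a gap in the rank sequence would yield a corner-$W$-witness and hence rank at most $W$), and then sums the sizes of one child of each rank $W^*{+}1,\dots,W$ plus two children of rank $W^*$. You instead induct on $n$ and, in the only nontrivial case (exactly one child $c$ of rank $W-1$, all others of rank at most $W-2$), split $T$ into $T_c$ and $T\setminus\{c\}$ and show the complement still has rank at least $W-1$ by lifting a hypothetical rank-$(W-2)$-witness of $T\setminus\{c\}$ to a rank-$(W-1)$-witness of $T$; this is a purely witness-level argument that bypasses the corner-witness/{\sc TestLeft} machinery except in the easy step where all children of rank at most $W-2$ would force $R(T)\leq W-1$ via Observation~\ref{obs:W+1} and Lemma~\ref{lem:equivalent}. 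I checked your lifting: the shift $X=X'+1$ when $c$ lies strictly left of $v'$ (and $X=X'$ otherwise) makes (R1$\ell$), (R1r), (R2$\ell$), (R2r) go through exactly as you indicate, and (R3) is immediate. Two small points need patching: (i) you implicitly use that a rank-$W'$-witness is also a rank-$W$-witness for every $W\geq W'$, so that $R(T\setminus\{c\})\leq W-2$ really does provide a rank-$(W-2)$-witness; this is easy but should be said. (ii) If $c$ is the only child, then $T\setminus\{c\}$ is a single node, has no witness to lift, and the claim $R(T\setminus\{c\})\geq W-1$ is false for $W\geq 3$; however this case is vacuous, since declaring $c$ big with $X=v=1$ and $\pi(c)=W-1$ is already a rank-$(W-1)$-witness of $T$, contradicting $R(T)=W$. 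With these one-line fixes your recursion $N(W)\geq 2N(W-1)$ gives the same bound $n\geq 2^{R(T)-1}$ as the paper; the paper's finer summation over all ranks down to $W^*$ is what supports its remark that the constant can be improved (e.g.\ $N(W)\geq \frac{3}{2}2^{W-1}$), which your coarser two-piece split does not immediately yield.
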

\begin{proof}
Let $N(W)$ be
the minimum number of nodes in a tree that has rank $W$.
We aim to show that $N(W)\geq 2^{W-1}$; this proves the claim.

Clearly $N(1)\geq 1 = 2^0$, so the claim holds for $W=1$.
Assume it holds for all values up to $W$, and let $T$
be a node-minimal tree that has rank $W+1$.  No child of
$T$ can have rank $W+1$ by minimality of $T$, so the ranks
of the children belong to $\{1,\dots,W\}$.
Let $W^*\leq W$ be the largest value such that root does {\em not} have
exactly one child with rank $W^*$.  (Hence there might be zero or at
least 2 children with rank $W^*$.) 

Assume first that $T$ has no child of rank $W^*$, and exactly one
child each of rank $W^*+1,\dots,W$.  Applying algorithm
{\sc TestLeft}($W$), one sees that it will return
with success at some $W'\geq W^*+1$, 
so $R(T)\leq W$,
a contradiction.  So there must be at least two children of rank $W^*$.   
The subtree of the child with rank $i$ has at least $N(i)$ nodes,
so 
$ N(W+1)=|T| \geq N(W)+N(W-1)+\dots + N(W^*+1) + 2\cdot N(W^*),$
and by induction therefore
$N(W+1) \geq 2^{W-1} + 2^{W-2} + \dots + 2^{W^*} + 2\cdot 2^{W^*-1} = 2^W$
as desired.
\end{proof}

We note here that the bound is not tight (for example, we can
add a `+1' in the final inequality, since we did not count the root).
By distinguishing a large number of cases we have been able to show that
$N(W)\geq \frac{3}{2}2^{W-1}$.
We suspect that in fact $N(W)\geq 2^W-1$,
but the enormous work to prove this does not seem worth the minor
improvement in the bound on $R(T)$.

So both the rooted pathwidth and the rank are $\log n + O(1)$ in the worst case.
One may wonder whether perhaps they are within a constant of each other for all
trees?  This is not the case.

\begin{theorem}
For any $i\geq 1$, there exists a tree $T_i$ with degree 5
that has a planar upward drawing of width $i$ (hence rooted pathwidth at most $i$), 
but its rank is $2i-1$, and so any planar order-preserving upward drawing requires width at least $2i-1$.
\end{theorem}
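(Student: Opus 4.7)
My plan is to build $T_i$ inductively and verify its three required properties using the tools developed earlier.  Set $T_1$ to be the single-node tree; for $i\ge 2$, let $S_i$ be the tree whose root has two children, each a copy of $T_{i-1}$, and let $T_i$ be the tree whose root has five children, in left-to-right order $T_{i-1},T_{i-1},S_i,T_{i-1},T_{i-1}$.

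To compute ranks inductively, assume $R(T_{i-1})=2i-3$.  The two identical rank-$(2i-3)$ children at the root of $S_i$ force both {\sc TestLeft}$(S_i,2i-3)$ and {\sc TestRight}$(S_i,2i-3)$ to fail while both succeed at $W=2i-2$, giving $R(S_i)=2i-2$.  For $T_i$ the child-rank sequence is $(2i-3,2i-3,2i-2,2i-3,2i-3)$; I would trace {\sc TestLeft} at $W=2i-2$ to see it sets $\sigma(2i-2)=3$, cannot skip the position-$2$ child (whose rank $2i-3$ exceeds $w-2=2i-4$), assigns $\sigma(2i-3)=2$ and decreases $w$ to $2i-3$, and at position~$1$ encounters rank $2i-3\ge w$ and returns failure.  {\sc TestRight} is mirror-symmetric.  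Lemma~\ref{lem:equivalent} then gives $R(T_i)\ge 2i-1$, and the matching upper bound $R(T_i)\le 2i-1$ is witnessed by taking $X=v=1$ in a rank-$(2i-1)$-witness, classifying the leftmost three children as big with rank-bounds $2i-3,2i-2,2i-1$ and the other two as small (the conditions (R1)--(R3) are verified directly).  Lemma~\ref{lem:draw2rank} then yields the lower-bound half of the theorem.

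For the upper-bound half I would bound the rooted pathwidth of $T_i$ via Observation~\ref{obs:rpw} and invoke Lemma~\ref{lem:upper}.  The root-to-leaf path $P$ I use descends from the root of $T_i$ into $S_i$, then into one of its two $T_{i-1}$-children, and then continues with the inductively chosen path inside that $T_{i-1}$.  The subtrees of $T_i-P$ are the four $T_{i-1}$-children of the root of $T_i$ not on $P$, the one $T_{i-1}$-child of $S_i$ not on $P$ (each of rooted pathwidth at most $i-1$ by induction), and the subtrees of the chosen $T_{i-1}$ obtained by removing its own recursive path (each of rooted pathwidth at most $i-2$).  Hence the rooted pathwidth of $T_i$ is at most $i$, and Lemma~\ref{lem:upper} supplies the planar strictly-upward straight-line drawing of width $i$.

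The main obstacle is pinning the maximum degree down to exactly five.  In the recursion above the root of $T_i$ has five children, but a $T_{i-1}$-copy placed as one of those children would naively have root degree $1+5=6$.  I would handle this by substituting, for the four outer positions, a trimmed variant of $T_{i-1}$ whose root has only four children (dropping one of the outer ``$T_{i-2}$'' slots at the next level of recursion, and compensating so that the rank trace still yields $2i-3$); the rank-witness for $T_i$ depends only on the ranks of its immediate children, so the analysis above goes through, but rechecking that the trimmed variant still has rooted pathwidth at most $i-1$ via the same path-choice argument is the most delicate piece of bookkeeping in the proof.
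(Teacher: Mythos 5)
Your core construction and rank analysis coincide with the paper's: you use the same tree (middle child carrying two copies of $T_{i-1}$, four outer children that are copies of $T_{i-1}$), you obtain $R(T_i)\ge 2i-1$ exactly as the paper does by running {\sc TestLeft}$(2i-2)$ and {\sc TestRight}$(2i-2)$ to failure on the child-rank pattern $2i-3,2i-3,2i-2,2i-3,2i-3$ and invoking Lemma~\ref{lem:equivalent}, and your path-choice argument via Observation~\ref{obs:rpw} (descend into $S_i$, then into one of its $T_{i-1}$-children, then recurse) correctly yields $rpw(T_i)\le i$ and, with Lemma~\ref{lem:upper}, the width-$i$ unordered drawing --- this last part is in fact spelled out more explicitly than in the paper, which only points to the figure. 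Up to there the proposal is sound.

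The genuine problem is the final ``trimming'' step, which is part of the construction you actually propose. First, the obstacle it addresses is illusory: in this paper the degree of a node is its number of children (the notation section sets $d=\deg(u_r)$ where $c_1,\dots,c_d$ are the children, and ``ternary''/``quaternary'' are used in the same sense in Section~\ref{sec:straight}), so every node of the untrimmed $T_i$ has at most five children and the degree-$5$ claim already holds; no node needs to lose a child. Second, the repair you sketch fails as stated: deleting one outer $T_{i-2}$-slot from $T_{i-1}$ gives a trimmed tree whose child-rank sequence is $(2i-5,\,2i-5,\,2i-4,\,2i-5)$ (or its mirror), and then {\sc TestRight}$(2i-4)$ (respectively {\sc TestLeft}$(2i-4)$) \emph{succeeds}: it sets $\sigma(2i-4)$ at the middle child, assigns $\sigma(2i-5)$ to the single remaining outer child on that side, and runs off the end. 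Hence the trimmed variant has a corner-$(2i-4)$-witness and rank at most $2i-4$, not $2i-3$ --- and this is before accounting for the fact that the trimming would propagate through the recursion. Since your lower bound for $T_i$ rests precisely on the outer children having rank $2i-3$, substituting trimmed copies destroys the argument, and the ``compensation'' you defer is exactly the missing content, not mere bookkeeping. The fix is simply to drop the trimming and keep the untrimmed construction, which is the paper's proof.
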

\begin{proof}
$T_1$ is a single node, which can be drawn with width 1
and requires width at least $1=2\cdot 1-1$.

For $i\geq 2$, tree
$T_i$ consists of a node with degree 5 for which children $c_1,c_2,c_4,c_5$
are roots of $T_{i-1}$.  Child $c_3$ has two children,  
each of which is the root of $T_{i-1}$.    See Fig.~\ref{fig:quintinary}(c)
and (d),
which also illustrates how to obtain an unordered drawing of $T_i$ with 
width $i$.

We show that $R(T_i)\geq 2i-1$.  Clearly this holds for $T_1$, so 
assume we know that $R(T_{i-1})\geq 2i-3$.  Since $c_3$
has two children with rank $2i-3$, $T_{c_3}$ has rank at least $2i-2$.
Therefore the rank-sequence of children contains $2i-3,2i-3,2i-2$ from
left to right.  Applying {\sc TestLeft}($2i-2$) therefore will result
in failure, so $T_i$ has no
left-corner-$(2i-2)$-witness.  Likewise the rank-sequence
$2i-2,2i-3,2i-3$ means that $T_i$ has no right-corner-$(2i-2)$-witness.
By Lemma~\ref{lem:rank2corner} therefore $T_i$ has no
rank-$(2i-2)$-witness and $R(T_i)\geq 2i-1$ as desired.
\end{proof}

\end{appendix}
\end{document}